\pgfplotsset{compat=1.15}
\def\resetMathstrut@{%
  \setbox\z@\hbox{%
    \mathchardef\@tempa\mathcode`\[\relax
    \def\@tempb##1"##2##3{\the\textfont"##3\char"}%
    \expandafter\@tempb\meaning\@tempa \relax
  }%
  \ht\Mathstrutbox@\ht\z@ \dp\Mathstrutbox@\dp\z@}
\newtheorem{theorem}{Theorem}
\newtheorem{observation}{Observation}
\newtheorem{definition}{Definition}
\newtheorem{lemma}{Lemma}
\newtheorem{corollary}{Corollary}
\newcommand{\NN}{\mathbb{N}}
\newcommand{\RR}{\mathbb{R}}
\newcommand{\calL}{\mathcal{L}}
\newcommand{\SC}{\text{SC}}
\newcommand{\instance}{\mathcal{E}}
\newcommand{\dist}{\mathrm{dist}}
\newcommand{\ve}[1]{\mathbf{#1}}
\newcommand{\opt}{\mathit{OPT}}
\newcommand{\optcandidate}{o}
\newcommand{\midd}{\mathfrak{m}}
\newcommand{\xbc}{\midpoint{bc}}
\newcommand{\xab}{\midpoint{ab}}
\newcommand{\xcbtwo}{\midpoint{cb''}}
\newcommand{\midpoint}[1]{\overline{m}_{#1}}
\newcommand{\majorder}{Majority order\xspace}
\newcommand{\focalpoint}{Focal point\xspace}
\newcommand{\step}{\varepsilon}
\newcommand{\dd}{\tau}
\newcommand{\voters}{V} 
\newcommand{\candidates}{C}
\newcommand\myequiv{\stackrel{\mathclap{\small\mbox{3}}}{\equiv}}
\newcommand\myequivv{\stackrel{\mathclap{\small\mbox{2}}}{\equiv}}
\newenvironment{sproof}{%
  \proof}{\endproof}
\newenvironment{nscenter}
 {\parskip=0pt\par\nopagebreak\centering}
 {\par\noindent\ignorespacesafterend}
\newenvironment{proofcase}
 {
 }
\newenvironment{sfigure}{\begin{figure}}{\end{figure}}
\title{Distortion of Multi-Winner Elections on the Line Metric:\\The Polar Comparison Rule}
\author[1]{Negar Babashah \thanks{\href{mailto:negar.babashah@ista.ac.at}{negar.babashah@ista.ac.at}}}
\author[2]{Hasti Karimi \thanks{\href{mailto:hastikm@cs.ubc.ca}{hastikm@cs.ubc.ca}}}
\author[3]{Masoud Seddighin \thanks{\href{mailto:m.seddighin@teias.institute}{m.seddighin@teias.institute}}}
\author[4]{Golnoosh Shahkarami \thanks{\href{mailto:gshahkar@mpi-inf.mpg.de}{gshahkar@mpi-inf.mpg.de}}}
\affil[1]{Institute of Science and Technology Austria (ISTA)}
\affil[2]{University of British Columbia\thanks{This work was carried out during an internship at the Max Planck Institut für Informatik.}}
\affil[3]{Tehran Institute for Advanced Studies (TeIAS), Khatam University}
\affil[4]{Max Planck Institut für Informatik, Universität des Saarlandes}
\date{}
\begin{document}

\maketitle

\begin{abstract}

We study the problem of minimizing metric distortion in multi-winner elections, where a committee of size~$k$ is selected from a set of candidates based on voters' ordinal preferences. We assume that voters and candidates are embedded on a line metric, and social cost is determined by the underlying metric distances.

The \emph{distortion} of a voting rule is the worst-case ratio between the social cost of the elected committee and an optimal committee. Previous work has focused on the \emph{$q$-cost} model, in which a voter’s cost is given by the distance to their $q$th closest committee member. Here, we study the \emph{additive cost}, where a voter’s cost is the sum of distances to all committee members.

We introduce the \emph{Polar Comparison Rule} and analyze its distortion under utilitarian additive cost. We show that it achieves a distortion of at most $2.33$ for all committee sizes $k>2$, improving upon the previously best-known upper bound of~$3$. Moreover, for $k=2$ and $k=3$, we establish tight distortion bounds of $2.41$ and $2.33$, respectively. We also derive lower bounds that depend on the parity of~$k$ and analyze the behavior of distortion for small and large committee sizes. Finally, we extend our results to the egalitarian additive cost.

\end{abstract}

\newpage
\section{Introduction}
Imagine a city council election where residents must select a committee of three representatives from nine candidates. Voters have cardinal costs for each candidate but submit ordinal rankings based on these costs. The problem is how to use these rankings to form a reasonable committee.

To make sense of the above scenario, we need to address two key questions. First, what defines a ``reasonable'' committee? Second, given the limitations of ordinal rankings, can we achieve—or approximately achieve—this objective without access to cardinal values?

What counts as a ``reasonable'' outcome depends on what we expect from the committee and how we balance efficiency and fairness. To explore this, let us set aside the limitations of ordinal rankings for now and look at the example in Figure~\ref{fig:ub(intro)}. The goal is to select a committee of three candidates, where each voter's cost for a candidate is simply their distance from that candidate. If we aim to minimize the \emph{utilitarian social cost}—that is, the total cost across all voters for all committee members—then choosing $b_1, b_2, b_3$ gives the best result. But this outcome might seem unfair. An alternative is to select $a_1, b_1, c_1$, which increases the total cost slightly but ensures that each voter's favorite candidate is included in the committee.

Several well-studied criteria for multi-winner voting rules capture these trade-offs by specifying different objectives based on individual voter costs. These include minimizing the distance from each voter to their nearest committee member \citep{Chen_Li_Wang_2020, cheng2020group}, the distance to their $q$th nearest member \citep{caragiannis2017subset}, or the utilitarian social cost of the committee \citep{10.1145/3230654.3230658}. In this paper, we focus on two classic notions from welfare economics: the utilitarian social cost and the egalitarian social cost, defined as the maximum total cost any individual voter incurs for the entire committee.

The second challenge is the lack of cardinal information. If cardinal values were available, finding the optimal outcome under any of the above criteria would be straightforward. However, assuming access to such detailed information is often unrealistic. Voters may find it difficult to precisely quantify their preferences, and requiring cardinal inputs may lead to inconsistencies, reluctance, or increased costs in data collection. For these and other reasons, ordinal rankings are preferred, despite offering less detailed information about preferences.

The efficiency gap due to the lack of cardinal information is captured by the term \emph{distortion} \citep{procaccia2006distortion}. In single-winner elections, the distortion of a voting rule $f$ is defined as the worst-case ratio (across all instances) between the social cost of the candidate selected by $f$ and that of the optimal candidate, which is based on hidden cardinal values.\footnote{We refer to Section~\ref{sec:prelim} for a formal definition of distortion.}

Over the past decade, distortion has been the subject of extensive study. In general, without any assumptions on voter costs, there exist instances where distortion can be large. As a result, much of the literature has focused on introducing structure or constraints on the cost functions to ensure bounded distortion. Among these, the metric framework has received particular attention. This setting is not only mathematically well-behaved—leading to many elegant and positive results—but also captures a wide range of real-world scenarios, such as spatial models of voting, recommendation systems, and facility location.

For single-winner elections, metric distortion is very well studied. The lower bound for any deterministic rule is~$3$ \citep{ANSHELEVICH201827}, and rules such as Plurality Veto \citep{kizilkaya2022plurality} and Plurality Matching \citep{gkatzelis2020resolving} achieve this bound. Tight bounds have also been established for many other voting rules \citep{ANSHELEVICH201827}.

\begin{figure}[t]
\centering
\small
\begin{tikzpicture}[line cap=round, line join=round, >=triangle 45, x=1cm, y=1cm]
    \draw (3,0.9) node[anchor=center] {$n/3$ voters};
    \draw (0,0.9) node[anchor=center] {$n/3$ voters};
    \draw (-3,0.9) node[anchor=center] {$n/3$ voters};
    
    \draw [line width=2pt] (-3,0)--(3,0);
    
    \draw [fill=black] (-3,0) circle (2pt);
    \draw[color=black] (-3,-0.4) node {$a_1, a_2, a_3$};
    
    \draw [fill=black] (0,0) circle (2pt);
    \draw[color=black] (0,-0.4) node {$b_1, b_2, b_3$};
    
    \draw [fill=black] (3,0) circle (2pt);
    \draw[color=black] (3,-0.4) node {$c_1, c_2, c_3$};
    
    \draw[->] (-3,0.7) -- (-3,0.1);
    \draw[->] (0,0.7) -- (0,0.1);
    \draw[->] (3,0.7) -- (3,0.1);
\end{tikzpicture}
\caption{
An instance with $n$ voters and $9$ candidates. The optimal committee of size $k = 3$ depends on how voter cost is defined in terms of the committee members.
}
\label{fig:ub(intro)}
\end{figure}


In this paper, we examine the distortion of selecting a committee of $k$ candidates under the social additive cost objective. 
The additive cost objective is particularly relevant when all committee members influence the outcome of the committee equally. For example, in a conference, if the reviewers' combined ratings determine the final score of a paper, selecting reviewers whose expertise aligns closely with the paper minimizes the total distance between the paper's field and the reviewers' expertise. 
Note that committee selection with an additive cost objective is equivalent to a randomized voting rule that applies a uniform distribution over the selected candidates with a fixed support size. Thus, the outcome of this rule can also be interpreted as a randomized rule. 

In recent years, several studies have focused on distortion in multi-winner elections. For the utilitarian additive cost objective in the metric framework, it has been shown by \citet{10.1145/3230654.3230658} that repeatedly applying any single-winner rule with distortion~$\delta$ results in a distortion of at most~$\delta$. For instance, applying the Plurality Veto rule \citep{kizilkaya2022plurality} $k$ times yields a committee of size~$k$ with distortion at most~$3$. Additionally, for the $q$-cost objective, \citet{caragiannis2022metric} show that for $q \le k/3$, distortion is unbounded; for $k/3 < q \le k/2$, it is tightly bounded by~$n$; and for $q \ge k/2$, it is tightly bounded by~$3$. However, for the case $q \ge k/2$, they provide a voting rule with distortion~$3$ and exponential running time, and another rule with distortion~$9$ and polynomial running time. \citet{kizilkaya2022plurality} resolve this gap by providing a polynomial-time voting rule with distortion~$3$ for $q \ge k/2$ using Plurality Veto.

We focus on the setting where both voters and candidates are positioned on a line metric. The line metric is an important special case of a metric space, which has been widely studied \citep{black48grouDecision, moulin1980strategy, miyagawa2001locating, fotakis2016conference, fotakis2022distortion, VOUDOURIS2023266, ghodsi2019distortion}. For example, the line metric can model political preferences along a spectrum ranging from liberal to conservative. In this setting, we improve the classic distortion bound of~$3$ for the utilitarian additive cost to approximately~$2.33$.

\subsection{Our Contributions and Techniques}\label{resatec}

The problem of selecting a committee of $k$ candidates based on voters' ordinal preferences is both classical and practically significant. However, choosing more than one candidate introduces new challenges in minimizing the social cost. For minimizing the utilitarian additive cost, choosing consecutive candidates leads to a committee with lower distortion compared to other configurations. Although the median voter seems to be a good representative—and indeed, if selecting a single candidate, the nearest candidate to either the left or right of the median voter would be optimal—returning the entire committee based solely on the preference list of the median voter cannot achieve a distortion better than~$3$. This is because the median voter might prefer all candidates on one side of their location to all candidates on the other side. Therefore, it is important to choose candidates from both sides of the median voter.

To leverage this insight, we design a new voting rule, the \emph{Polar Comparison Rule}, which prioritizes committees that include candidates from both sides of the median voter, thereby increasing the likelihood of selecting better candidates. We first introduce this voting rule for selecting a committee of size two. Specifically, the rule selects the top-ranked candidate of the median voter and then compares the two closest candidates to the median voter on opposite sides of each other that have not yet been picked. Based on the ratio of voters who prefer one candidate to the other, the rule selects the candidate that is more preferred, while introducing a bias toward the candidate on the opposite side of the previously chosen candidate. We then extend this voting rule to select a committee of size three and generalize it to any committee size by iteratively applying the rules for $k = 3$ and $k = 2$.

The \emph{Polar Comparison Rule} achieves upper bounds of $1 + \sqrt{2} \approx 2.41$ and $7/3 \approx 2.33$ distortion for $k = 2$ and $k = 3$, respectively, and we show that these bounds are tight. We further extend these results by showing how different voting rules can be effectively combined to improve the distortion bounds for various committee sizes. Specifically, we generalize our rule to maintain a distortion of $7/3$ for committee sizes divisible by three. For committee sizes where $k \equiv 1 \pmod{3}$, the distortion becomes $7/3 + 4(\sqrt{2} - 4/3)/k$, and for committee sizes where $k \equiv 2 \pmod{3}$, the distortion becomes $7/3 + 2(\sqrt{2} - 4/3)/k$.

Additionally, we complement these results with lower bounds. For each lower bound, we provide two instances where voters share the same preference profile and demonstrate a constraint that any voting rule must satisfy. We establish a lower bound of $2 + 1/k$ for odd values of $k$ and $1 + \sqrt{1 + 2/k}$ for even values of $k$, when $k < m/2$. Furthermore, for larger committee sizes where $k \ge m/2$, we derive a lower bound of $1 + (m - k)/(3k - m)$.

\begin{table}[t]
\centering
    \renewcommand{\arraystretch}{2}
    \begin{tabular}{c c c c}
    \toprule
& \(k = 2\) & \multicolumn{2}{c}{$k > 2$} \\
        \toprule
        \textbf{Upper Bound}
        & $2.41$ 
        [T. \ref{thm:line-voting-rule}]
        & \multicolumn{2}{c}{ 
        $
            \arraycolsep=1.4pt
            \begin{array}{lllr}
            & \begin{array}{cc}
             2.33  \quad &\bigl(k \myequiv 0 \bigr) \\
            2.33 + \frac{4(\sqrt{2} - 4/3)}{k} &\bigl(k \myequiv 1\bigr)\\
            2.33 + \frac{2(\sqrt{2} - 4/3)}{k} &\bigl(k \myequiv 2\bigr)
            \end{array}
            & [\text{T. \ref{thm:ub-general}}]
            \end{array}
        $
        }
        \\
        \midrule
        & \(k = 2\) & \(2 < k < \frac{m}{2}\) & \(k \geq \frac{m}{2}\) \\
        \toprule
        \textbf{Lower Bound}
        &
        $\ 2.41 \ $[T. \ref{thm:lb2.41}]
        &
        $
            \arraycolsep=1.4pt
            \begin{array}{lr}
                \begin{array}{cc}
                1 + \sqrt{1 + \frac{2}{k}} &\bigl(k \myequivv 0\bigr) \\
                2 + \frac{1}{k} &\bigl(k \myequivv 1\bigr)
                \end{array}
            & [\text{T. \ref{thm:lb-general}}]
            \end{array}
        $
        &
        \( 1 + \dfrac{m - k}{3k - m} \)
         [T. \ref{thm:lb->m/2}] \\
        \bottomrule
    \end{tabular}
    \vspace{0.2cm}
\caption{Summary of bounds for the \emph{Polar Comparison Rule}. Note that for $k = 3$, the bounds match $2.33$.
Moreover, an improved upper bound of $2.41$ for $k = 4$ is derived in Theorem~\ref{thm:ub-general}.
}

\label{tab:results}
\end{table}

To establish our results, we rely on several techniques that provide valuable insights beyond the specific context of this problem. One useful technique involves determining the order of candidates on the line metric after removing all candidates that are Pareto-dominated.

Another key observation arises from analyzing how voter movements affect distortion. Specifically, we note that in any instance, moving a single voter to either the right or the left results in a new instance with worse distortion. While this observation does not directly aid our proofs, it prompts us to explore movements that yield instances with guaranteed properties relative to the distortion of the initial instance. We show that there exists a specific point on the line such that moving all voters toward this point results in a new instance where the distortion does not significantly decrease.

To apply this technique effectively, we define certain obstacles that restrict how far voters can move toward this point. Each voter moves until encountering an obstacle, at which point they stop. This approach allows us to precisely determine the structure of the resulting instance. Importantly, the placement of these obstacles depends on the voting rule being analyzed. For our \emph{Polar Comparison Rule}, we position these obstacles at the midpoint between the two candidates being compared. This ensures that the resulting instance aligns with the properties of the rule and facilitates accurate analysis of distortion.

Moreover, we study the egalitarian additive cost and prove that any committee selecting candidates between the two extreme voters achieves a distortion of~$2$. We first show that the maximum cost is incurred by either the leftmost or rightmost voter and then bound both the cost of our selected committee and the optimal one in terms of the distance between these two extreme voters.

\paragraph{Organization of the Paper.}
The remainder of this paper is organized as follows. Section~\ref{sec:prelim} introduces the preliminaries and key definitions. In Section~\ref{sec:order}, we explain how to determine the exact order of candidates on the line and present properties of the optimal candidates. Section~\ref{sec:move-voters} discusses the process of moving voters to construct worst-case instances. In Section~\ref{sec:2-winner}, we provide a warm-up by explaining the high-level idea of our voting rule for $2$-winner elections. Section~\ref{sec:multi-winner} extends these results to general values of~$k$, detailing how to combine different voting rules, extending our \emph{Polar Comparison Rule} for $k = 3$, and establishing an upper bound of roughly~$7/3$, along with lower bounds for general~$k$. In Section~\ref{sec:egalitarian}, we analyze the egalitarian additive cost. Finally, Section~\ref{sec:conclusion} discusses extensions of our rule and outlines open questions for future research. Note that Sections~\ref{sec:order}–\ref{sec:multi-winner} focus on the utilitarian additive cost, while Section~\ref{sec:egalitarian} examines the egalitarian additive cost.

\subsection{Related Work}
As mentioned, distortion was first introduced by \citet{procaccia2006distortion} in the non-metric framework. Since then, several lower and upper bounds have been established for distortion across different voting rules, in various scenarios, and within both non-metric and metric frameworks. For a comprehensive overview of distortion results, we refer to \citet{anshelevich2021distortion}. In this section, we group the related studies into two main areas.

\paragraph{Single-Winner Voting.}
For the metric framework, \citet{ANSHELEVICH201827} established a general lower bound of~$3$ on the distortion of any voting rule. They also proved upper bounds on the distortion of several voting rules, including Majority, Borda, and Copeland, with the best rule being Copeland, which has a distortion of~$5$. Later, \citet{munagala2019improved} generalized this result to uncovered set rules and improved the best upper bound to~$4.236$, and \citet{gkatzelis2020resolving} further reduced it to~$3$. Randomized single-winner voting rules have also been well studied in the metric framework \citep{charikar2022metric, pulyassary2021randomized, fain2019random}. The current best-known upper bound on the distortion of a randomized voting rule, achieved by \citet{charikar2024breaking}, uses a randomization over other voting rules that can achieve a distortion of at most~$2.753$. In the non-metric framework, \citet{caragiannis2011voting} show that any voting rule has a distortion lower bound of $\Omega(m^2)$, with simple rules like Plurality having a distortion of $O(m^2)$.

\paragraph{Multi-Winner Voting.}
In the study of metric distortion in multi-winner voting, various objective functions have been proposed to capture the value each voter derives from the elected committee \citep{elkind2017a, faliszewski2017multiwinner}. \citet{10.1145/3230654.3230658} show that for the additive cost function, applying a single-winner voting rule with distortion~$\delta$ iteratively $k$ times results in a $k$-winner committee with the same distortion~$\delta$. \citet{Chen_Li_Wang_2020} studied the min-cost objective in the \textsc{scv} scenario, where each voter casts a vote for a candidate. In the min-cost objective, each voter's cost is determined by the nearest elected candidate. They propose a voting rule with a tight distortion of~$3$ and a randomized rule with a distortion of $3 - 2/m$. Another objective, introduced by \citet{caragiannis2022metric}, is the $q$-cost objective, where a voter's cost for a committee is defined as the distance to their $q$th closest member in the committee. In the metric setting, \citet{caragiannis2022metric} show that for $q \le k/3$ distortion is unbounded, while for $k/3 < q \le k/2$, distortion is bounded by~$n$. For the case $q > k/2$, they present a non-polynomial voting rule that achieves the tight distortion of~$3$. Subsequently, \citet{kizilkaya2022plurality} propose a polynomial-time voting rule with the same distortion. The study of metric multi-winner voting rules has also been a recent area of focus in several other settings \citep{kalayci2024proportional, pulyassary2022algorithm, Anshelevich21KnownLocations}.

A more specific setting in the metric framework considers single-peaked and $1$-Euclidean preferences, where the voters and candidates lie on the real line \citep{black48grouDecision, moulin1980strategy, miyagawa2001locating, fotakis2016conference, fotakis2022distortion, VOUDOURIS2023266, ghodsi2019distortion}. \citet{fotakis2025distortion} investigate the distortion of deterministic algorithms for $k$-committee selection on the line with the min-cost objective function, using additional distance queries. Recently, \citet{cembrano2025metricdistortionpeerselection} studied this problem in the context of peer selection, where the candidates and voters coincide, and provided tight results for several social cost objectives on the line metric.

For the non-metric framework, and when a voter's utility for a committee is defined as her maximum utility for its members, \citet{caragiannis2017subset} proposed a rule that achieves distortion for deterministic committee selection of $O(1 + m \cdot (m - k)/k)$. Another concept related to committee selection is that of stable committees and stable lotteries, which has been the subject of recent studies \citep{jiang2020approximately, DBLP:conf/ijcai/Borodin0L022}.

\section{Preliminaries}
\label{sec:prelim}
\paragraph{Election.}
An instance of a committee election is a quadruple $\instance = (\voters, \candidates, k, \succ)$, where:
\begin{itemize}
    \item $\voters = [n]$ is the set of voters,
    \item $\candidates = [m]$ is the set of candidates,
    \item $k \in \NN$ is the size of the committee, and
    \item $\ve{\succ} = (\succ_1, \succ_2, \ldots, \succ_n) \in \calL^n(\candidates)$ comprises the voters’ preference profiles, where $\succ_i \in \calL(\candidates)$ is a linear order on $\candidates$ for each $i \in \voters$.
\end{itemize}
For any pair of candidates $a,b \in \candidates$, we let $V_{a \succ b} = \{i \in \voters \mid a \succ_i b\}$ denote the set of voters preferring $a$ over $b$.  
A candidate $a$ \emph{defeats} $b$ if $|V_{a \succ b}| > n/2$, and \emph{Pareto-dominates} $b$ if $|V_{a \succ b}| = n$.

\paragraph{Line metric.}
We assume that all voters and candidates are located on a line metric. Each voter $i \in \voters$ and each candidate $a \in \candidates$ is associated with a position $x_i, x_a \in (-\infty, \infty)$, and the distance between them under metric $d$ is given by $d(i, a) = |x_i - x_a|$.  
We assume that the candidates occupy distinct locations.  
We refer to the (unique) median voter as $\midd$; in the case of an even number of voters, the two consecutive median voters are denoted by $\midd_l$ and $\midd_r$.  
For any interval $I \subseteq \mathbb{R}$, we let $\voters(I) = \{i \in \voters \mid x_i \in I\}$ denote the voters located in $I$; for a single point $x$, we write $\voters(x)$ for the voters located at $\bar{x}$.

A metric $d$ is said to be \emph{consistent} with a preference profile $\succ \in \calL^n(m)$, denoted by $d \rhd \succ$, if for every voter $i \in \voters$ and candidates $a, b \in \candidates$, we have $d(i, a) < d(i, b)$ whenever $a \succ_i b$.  
As before, we may equivalently refer to the vector of locations $x = (x_i)_{i \in \voters}$ and write $x \rhd \succ$.  
A given preference profile may admit multiple consistent embeddings along the line. 

\paragraph{Social cost.}
In our setting, the cost of a committee for a voter is defined as the sum of the distances from that voter to the candidates in the committee (additive cost). The cost of a committee for the entire set of voters can then be defined in two ways: either as the sum of the individual costs across all voters (utilitarian), or as the maximum cost incurred by any voter (egalitarian).

Formally, for a fixed election $\instance = (\voters, \candidates, k, \succ)$, a metric $d$, and a committee $S \subseteq \candidates$ of size $k$, the cost of $S$ for voter $i \in \voters$ is defined using a \emph{candidate-aggregation function} $h \colon \RR^k_+ \to \RR_+$ as
\[
\SC(S, i; d) = h\big((d(i, a))_{a \in S}\big).
\]
In this paper, we focus on the \emph{additive cost}, given by $h(y) = \sum_{a \in S} y_a$.  
Accordingly, the social cost of $S$ for all voters, determined by a \emph{voter-aggregation function} $g \colon \RR^n_+ \to \RR_+$, is
\[
\SC(S, \voters; d) = g\big((\SC(S, i; d))_{i \in \voters}\big).
\]
We study the utilitarian and egalitarian social costs, defined respectively as
\[
g(y) = \sum_{i \in \voters} y_i
\quad \text{and} \quad
g(y) = \max_{i \in \voters} y_i.
\]
Hence, for the \emph{utilitarian additive cost}, we have
\[
\SC (S, \voters; d) = \sum_{i \in \voters} \sum_{a \in S} d(i, a),
\]
and for the \emph{egalitarian additive cost}, we have
\[
\SC(S, \voters; d) = \max_{i \in \voters} \sum_{a \in S} d(i, a).
\]
For a single candidate $a \in \candidates$, its social cost is $\SC(a, \voters; d) = \sum_{i \in \voters} d(i, a)$.  
When $d$ or $\voters$ are clear from the context, we may omit them from the notation; for example, we might write $\SC(S)$, $\SC(a; d)$, or simply $\SC(a)$.
Finally, let $\opt = \{\optcandidate_1, \dots, \optcandidate_k\}$ be an optimal committee of size $k$, i.e., a committee minimizing the utilitarian social cost $\SC(\cdot)$, indexed such that $\SC(\optcandidate_1) \le \dots \le \SC(\optcandidate_k)$.

\paragraph{Voting rules and distortion.}
An $(n, m, k)$-voting rule $f$ takes as input a preference profile $\succ \in \calL^n(m)$ and returns a committee $S$ of size~$k$, where $S = f(\succ) \in {\candidates \choose k}$.  
We assume that $f$ has access only to the rankings, not to the underlying metric.  
For an election $\instance = (\voters, \candidates, k, \succ)$ and a metric $d \rhd \succ$, the distortion of a committee $S \subseteq \candidates$ is defined as
\[
\dist(S, \instance; d)
= \frac{\SC(S, \voters; d)}{\min_{S' \in {\candidates \choose k}} \SC(S', \voters; d)}.
\]
The distortion of $S$ with respect to $\instance$ is its worst-case distortion over all consistent metrics:
\[
\dist(S, \instance) = \sup_{d \rhd \succ} \dist(S, \instance; d).
\]
Finally, the distortion of the rule $f$ is
\[
\dist(f) = \sup_{\succ \in \calL^n(m)} \dist\big(f(\succ), ([n], [m], k, \succ)\big).
\]
Throughout, we study the distortion achievable by voting rules under utilitarian and egalitarian additive social costs.

\section{Properties of Candidates on a Line}
\label{sec:order}
In this section, we first show that the exact order of the candidates on the line can be determined under a mild assumption. We then characterize structural properties of the optimal candidates that later play a crucial role in the selection process. This section serves as a foundation for the analysis and results presented in the subsequent sections.

\subsection{Order of Candidates}
\citet{elkind2014recognizing} show that if the voters’ preference profiles are pairwise distinct, one can uniquely determine the ordering of the voters on the line, up to reversal. They also demonstrate that, given this voter ordering, the candidates can be arranged between the top preference of the leftmost voter and that of the rightmost voter. Although their result was originally used to determine whether an election is $1$-Euclidean, it naturally applies to our setting as well. For completeness, however, we present a simpler method to determine the order of candidates that are not Pareto-dominated by any other candidate.

\begin{restatable}{lemma}{order}
\label{lem:candidates-order}
Under the line metric assumption, if we remove all candidates that are Pareto-dominated by another, the exact order of the remaining candidates on the line can be uniquely determined, up to reversal.
\end{restatable}

To prove Lemma~\ref{lem:candidates-order}, we rely on two simple observations that connect the relative ordering of three candidates to the subsets of voters who prefer one candidate over another.

\begin{observation}\label{obs:subset1}
Given three candidates $a$, $b$, and $c$ with $x_a < x_b < x_c$, the voters who prefer $a$ to $b$ form a subset of those who prefer $a$ to $c$; that is, $V_{a \succ b} \subseteq V_{a \succ c}$. Moreover, $V_{c \succ b} \subseteq V_{b \succ a}$.
\end{observation}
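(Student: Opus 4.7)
The plan is to reduce both containments to the elementary fact that, on the line metric, a voter $i$ prefers a point $p$ to a point $q$ with $x_p < x_q$ if and only if $x_i < (x_p + x_q)/2$. This follows from a short case distinction on the position of $x_i$ relative to $x_p$ and $x_q$, using that $d(i,p)=|x_i-x_p|$. Once this midpoint characterization is in hand, both containments in the observation reduce to comparing midpoints.

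For the first part, I would take an arbitrary $i \in V_{a \succ b}$, apply the midpoint characterization to conclude $x_i < (x_a + x_b)/2$, and then use $x_b < x_c$ to deduce $(x_a+x_b)/2 < (x_a+x_c)/2$. Chaining the two inequalities gives $x_i < (x_a+x_c)/2$, which by the same characterization (applied to the pair $a,c$) yields $i \in V_{a \succ c}$. Since $i$ was arbitrary, $V_{a \succ b} \subseteq V_{a \succ c}$.

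For the second part, I would argue symmetrically: for $i \in V_{c \succ b}$, the midpoint characterization applied to the pair $b,c$ gives $x_i > (x_b + x_c)/2$; using $x_a < x_b$, I get $(x_a + x_b)/2 < (x_b + x_c)/2 < x_i$, so $x_i > (x_a + x_b)/2$, and the midpoint characterization applied to the pair $a,b$ yields $i \in V_{b \succ a}$.

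There is no real obstacle here; the only care needed is the case analysis establishing the midpoint characterization (in particular handling $x_i \le x_a$, $x_a < x_i < x_b$, and $x_i \ge x_b$ separately), and being careful that the alternatives are at distinct locations so the strict inequalities in the preference are well-defined, an assumption already stated in the preliminaries.
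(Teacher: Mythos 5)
Your proposal is correct, and it rests on exactly the midpoint characterization that the paper itself records later as Observation~\ref{obs:set-of-voters-diff-view} (namely that $V_{a \succ b}$ is a half-line cut at the midpoint of $x_a$ and $x_b$); the paper states Observation~\ref{obs:subset1} without proof as an immediate consequence of this fact, and your chained-midpoint argument is the natural way to spell it out. The only nit is that you assert the midpoint characterization as a biconditional with strict inequality, which leaves voters exactly at the midpoint unaccounted for; since the paper's consistency condition only ties strict distance inequalities to strict preferences, this boundary case is immaterial, and in any event your proof only uses the two implications that are unambiguous.
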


\begin{observation}\label{obs:subset2}
Suppose no candidate Pareto-dominates another, and let $a, c \in \candidates$ be such that $x_a < x_c$. If there exists a candidate $b \in \candidates$ satisfying $V_{a \succ b} \subset V_{a \succ c}$, then $b$ must lie between $a$ and $c$ on the line; that is, $x_a < x_b < x_c$.
\end{observation}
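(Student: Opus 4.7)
The plan is to argue by contradiction, ruling out the two ``outer'' configurations $x_b < x_a$ and $x_b > x_c$ using Observation~\ref{obs:subset1}. Since the alternatives occupy distinct positions on the line, eliminating both possibilities forces $x_a < x_b < x_c$, which is exactly the claim.

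For the easier case $x_a < x_c < x_b$, I would apply Observation~\ref{obs:subset1} to the triple $(a, c, b)$, read in increasing order of position. Its first inclusion gives $V_{a \succ c} \subseteq V_{a \succ b}$, and combined with the hypothesized strict inclusion $V_{a \succ b} \subsetneq V_{a \succ c}$ this yields the immediate contradiction $V_{a \succ b} \subsetneq V_{a \succ b}$. Notably, the no-Pareto-dominance hypothesis is not invoked here; the strictness of the hypothesized inclusion alone suffices.

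The harder case is $x_b < x_a < x_c$, and it is where the Pareto hypothesis becomes essential. The plan is to invoke the \emph{moreover} clause of Observation~\ref{obs:subset1} applied to the triple $(b, a, c)$, which yields $V_{c \succ a} \subseteq V_{a \succ b}$. Any voter in $V_{c \succ a}$ prefers $c$ to $a$ and so lies outside $V_{a \succ c}$, yet by the inclusion just noted also lies inside $V_{a \succ b}$; the existence of any such voter therefore contradicts the hypothesis $V_{a \succ b} \subseteq V_{a \succ c}$. Consequently $V_{c \succ a} = \emptyset$, which is precisely the statement that $a$ Pareto-dominates $c$, contradicting the standing assumption. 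The step requiring the most care is selecting the correct half of Observation~\ref{obs:subset1}: the first inclusion is too weak in this configuration, and it is specifically the \emph{moreover} clause, which swaps the roles of the outer and middle alternatives, that exposes the voter witnessing the failure of the hypothesis.
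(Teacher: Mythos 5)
Your proof is correct and follows essentially the same route as the paper's: both cases are dispatched by exactly the two clauses of Observation~\ref{obs:subset1} applied to the appropriately reordered triple, and the second case closes by showing $V_{c \succ a} = \emptyset$, contradicting no-Pareto-dominance. Your added observation that the first case does not actually need the Pareto hypothesis is accurate, and your phrasing of the second case (tracking a witness voter in $V_{c \succ a}$) is an equivalent restatement of the paper's chain $V_{c\succ a} \subseteq V_{a\succ b} \subset V_{a\succ c}$ together with disjointness of $V_{c\succ a}$ and $V_{a\succ c}$.
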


\begin{proof}
If $x_c < x_b$, then by Observation~\ref{obs:subset1} we must have $V_{a \succ c} \subseteq V_{a \succ b}$, which is impossible since $V_{a \succ b} \subset V_{a \succ c}$.
On the other hand, if $x_b < x_a$, Observation~\ref{obs:subset1} implies that $V_{c \succ a} \subseteq V_{a \succ b}$. 
Thus, $V_{c \succ a} \subseteq V_{a \succ b} \subset V_{a \succ c}$. 
But we know that $V_{c \succ a} \cap V_{a \succ c} = \emptyset$, which means $V_{c \succ a} = \emptyset$. 
Hence, candidate~$c$ is dominated by~$a$, contradicting the assumption that all Pareto-dominated candidates have been removed. 
This completes the proof.
\end{proof}

To determine the order of the candidates, we use Algorithm~\ref{alg:order-finder}. 
The algorithm begins by removing all Pareto-dominated candidates. 
It then selects two arbitrary candidates, applies Observations~\ref{obs:subset1} and~\ref{obs:subset2}, and partitions the remaining candidates into left and right subsets. 
Note that if a voter is positioned entirely on one side of all the candidates within a subset, their preference list can be used to establish the relative ordering of those candidates along the line. 
Accordingly, the algorithm identifies two voters whose top preferences lie in the left and right subsets, respectively, and then uses their preference lists to determine the final ordering of candidates.

We are now ready to complete the proof of Lemma~\ref{lem:candidates-order} by showing that Algorithm~\ref{alg:order-finder} indeed recovers the correct order of the candidates on the line.

\begin{nscenter}
\begin{algorithm}[!tb]
    \caption{Finding the order of candidates}
    \label{alg:order-finder}

    \newcommand\mycommfont[1]{\footnotesize\ttfamily\color{blue}#1}%
    \SetCommentSty{mycommfont}

    \textbf{Input:} Election instance $\instance = (\voters, \candidates, k, \succ)$ \\
    \textbf{Output:} Sequence $\mathcal{S}$ of non-Pareto-dominated candidates representing their order on the line

    \medskip

    $B \gets \{\, e \in \candidates \mid \exists\, e' \in \candidates \setminus \{e\} : |V_{e' \succ e}| = n \,\}$ 
    \tcp*{Removing Pareto-dominated candidates}
    $\candidates' \gets \candidates \setminus B$\;
  
    Select $a, b \in \candidates'$ 
    \tcp*{Arbitrary candidates for partitioning}
    $R \gets \{\, c \in \candidates' \mid V_{a \succ b} \subseteq V_{a \succ c} \,\}$\;
    $L \gets \candidates' \setminus R$\;
    $i \gets$ voter whose top-ranked candidate is in $L$\;
    $j \gets$ voter whose top-ranked candidate is in $R$\;
    $\mathcal{S}_L \gets \succ_j(L)$ 
    \tcp*{Order $L$ by voter $j$'s preferences}
    $\mathcal{S}_R \gets \succ_i(R)$ 
    \tcp*{Order $R$ by voter $i$'s preferences}
    $\mathcal{S} \gets \text{reverse}(\mathcal{S}_L) \cdot \mathcal{S}_R$ 
    \tcp*{Reverse $L$’s order and concatenate with $R$}
    \Return $\mathcal{S}$\;
\end{algorithm}
\end{nscenter}

 \begin{proof}[Proof of Lemma~\ref{lem:candidates-order}]
We prove that Algorithm~\ref{alg:order-finder} returns the correct order of the candidates. 
Let $a$ and $b$ be two candidates and, without loss of generality, assume $x_a < x_b$. 
Let $R$ denote the set of all candidates $c$ such that $V_{a \succ b} \subseteq V_{a \succ c}$, and let $L$ be the remaining candidates. 
Note that $a \in L$ and $b \in R$.

\smallskip
\noindent\emph{Claim.} Every candidate in $L$ lies to the left of every candidate in $R$.

\smallskip
\noindent\emph{Proof of claim.}
Suppose for a contradiction that there exist $c \in R$ and $d \in L$ with $x_c < x_d$. 
Since $c \in R$, we have $V_{a \succ b} \subseteq V_{a \succ c}$.

If $V_{a \succ b} \subset V_{a \succ c}$, then by the symmetric form of Observation~\ref{obs:subset2} we must have $x_a \le x_b \le x_c$. 
Hence $x_a \le x_b \le x_d$, and by Observation~\ref{obs:subset1} it follows that $V_{a \succ b} \subseteq V_{a \succ d}$, contradicting $d \notin R$.

It remains to consider the case $V_{a \succ b} = V_{a \succ c}$. 
If $x_c < x_a$, then Observation~\ref{obs:subset1} gives $V_{c \succ a} \subseteq V_{a \succ b}$, and together with $V_{a \succ b} = V_{a \succ c}$ we obtain $V_{c \succ a} \subseteq V_{a \succ c}$. 
But $V_{c \succ a} \cap V_{a \succ c} = \emptyset$, so $V_{c \succ a} = \emptyset$, implying that $a$ Pareto-dominates $c$—a contradiction to our preprocessing step. 
Therefore $x_a \le x_c < x_d$, and Observation~\ref{obs:subset2} yields $V_{a \succ c} \subseteq V_{a \succ d}$. 
Since $V_{a \succ b} \subseteq V_{a \succ c}$, we conclude $V_{a \succ b} \subseteq V_{a \succ d}$, so $d \in R$, again a contradiction. 
This proves the claim. \hfill$\triangle$

\smallskip
We have shown that $L$ and $R$ are nonempty and that all candidates in $L$ lie to the left of all candidates in $R$. 
Consider a voter $i$ whose top-ranked candidate lies in $L$. 
Then $i$ is located to the left of every candidate in $R$, so $i$’s preference list reveals the exact order of the candidates in $R$ along the line. 
Such a voter must exist; otherwise, every voter would prefer the leftmost candidate of $R$ to every candidate in $L$, contradicting our assumption that no candidate Pareto-dominates another. 
A symmetric argument (using a voter $j$ whose top-ranked candidate lies in $R$) provides the order within $L$.

Finally, Algorithm~\ref{alg:order-finder} outputs $\mathcal{S} = \text{reverse}(\mathcal{S}_L) \cdot \mathcal{S}_R$, which places all elements of $L$ before all elements of $R$ and, in particular, lists $a \in L$ before $b \in R$. 
Thus the algorithm recovers the correct order of the non–Pareto-dominated candidates, up to reversal, as claimed.
\end{proof}

\subsection{Optimal Candidate}

We now show that, on either side of the median voter, candidates that are closer to the median voter incur a lower utilitarian social cost.

\begin{lemma}\label{lem:relative-sc}
Let $a, b \in \candidates$ lie on the same side of the median voter (for an even number of voters, consider either $\midd_l$ or $\midd_r$ as the reference). 
If $a$ is closer to the median voter than $b$, then $\SC(a) \le \SC(b)$.
\end{lemma}

\begin{proof}
We consider the case where candidate $a$ lies to the left of the median voter; the other case is symmetric. 
We show that for an odd number of voters, $\SC(a)$ can be expressed as
\[
\SC(a) \;=\; \SC(\midd) \;+\; 2 \sum_{i \in  V([x_\midd, x_a])\setminus \{\midd\}} d(a, i) \;+\; d(\midd, a),
\]
where $\SC(\midd)$ denotes the social cost of a hypothetical candidate located at $\midd$.

For an even number of voters, there are two median voters, $\midd_l$ and $\midd_r$. 
Without loss of generality, suppose $a$ is closer to $\midd_l$ than to $\midd_r$. 
Then $\SC(a)$ can be expressed with respect to $\midd_l$ as follows:
\[
\SC(a) \;=\;
\begin{cases}
\SC(\midd_l), & \text{if } x_{\midd_l} \le x_a \le x_{\midd_r},\\[4pt]
\SC(\midd_l) \;+\; 2 \sum_{i\in V([x_a, x_{\midd_l}])} d(a, i), & \text{if } x_a < x_{\midd_l}.
\end{cases}
\]
Here $\SC(\midd_l)$ and $\SC(\midd_r)$ refer to the social costs of hypothetical candidates located at the two median positions. 
Before proving these equalities, note that Lemma~\ref{lem:relative-sc} follows immediately from the above expressions. 
We now verify each case.

\paragraph{Case 1: The number of voters is odd.}
Partition the rest of the voters into the following subsets:
\begin{align*}
    X &\coloneqq V(\left(x_\midd, +\infty\right)),\\
    Y &\coloneqq V(\left[x_a, x_\midd\right))\\
    Z &\coloneqq V(\left(-\infty, x_a\right])
\end{align*}
By the definition of the median, $|X| = |Y| + |Z|$. Therefore,
\begin{align*}
\SC(a) - \SC(\midd) 
&= d(\midd, a)
  + \sum_{x \in X}\!\big(d(a, x) - d(\midd, x)\big)
  + \sum_{y \in Y}\!\big(d(a, y) - d(\midd, y)\big) + \sum_{z \in Z}\!\big(d(a, z) - d(\midd, z)\big)\\
&= d(\midd, a)
  + |X|\, d(\midd, a)
  + \sum_{y \in Y}\!\big(2 d(a, y) - d(\midd, a)\big)
  + |Z|\, d(\midd, a)\\
&= 2 \sum_{y \in Y} d(a, y) + d(\midd, a)\big(|X| - |Y| - |Z| + 1\big)\\
&= 2 \sum_{y \in Y} d(a, y) + d(\midd, a),
\end{align*}
which is the claimed expression for the odd case.

\paragraph{Case 2: The number of voters is even.}
In this case, $\SC(\midd_l) = \SC(\midd_r)$. 
For any $a$ with $x_{\midd_l} \le x_a \le x_{\midd_r}$ we have $\SC(a) = \SC(\midd_l)$, since moving a candidate within $[x_{\midd_l}, x_{\midd_r}]$ does not change its social cost. 
Hence, assume $x_a < x_{\midd_l}$. 
Partition the voters as follows:
\[
\begin{aligned}
X &\coloneqq V(\left( x_{\midd_r}, +\infty \right)),\\
Y &\coloneqq V(\left( x_{a}, x_{\midd_l} \right)),\\
Z &\coloneqq  V(\left( -\infty, x_{a}\right]).
\end{aligned}
\]
By the definition of the two medians, $|X| = |Y| + |Z|$. Thus,
\begin{align*}
\SC(a) - \SC(\midd_l)
&= \sum_{x \in X}\!\big(d(a, x) - d(\midd_l, x)\big)
 + \sum_{y \in Y}\!\big(d(a, y) - d(\midd_l, y)\big)
 + \sum_{z \in Z}\!\big(d(a, z) - d(\midd_l, z)\big)\\
&\quad + d(a, \midd_l) + \big(d(a, \midd_r) - d(\midd_r, \midd_l)\big)\\
&= 2 \sum_{y \in Y \cup \{\midd_l\}} d(a, y)
  + d(a, \midd_l)\big(|X| - |Y| - |Z|\big)\\
&= 2 \sum_{i\in V([x_a, x_{\midd_l}])} d(a, i),
\end{align*}
which is the claimed expression for the even case. 
This completes the proof.
\end{proof}

\begin{corollary}[of Lemma~\ref{lem:relative-sc}]\label{col:opt-position}
The optimal candidate denoted by $\optcandidate_1$
is either the closest candidate to the right or the closest candidate to the left of the median voter. 
When $n$ is even, the statement applies with respect to both median voters, $\midd_l$ and $\midd_r$.
\end{corollary}

By Lemma~\ref{lem:candidates-order}, we can sort the candidates by their positions on the line. 
Corollary~\ref{col:opt-position} further suggests that it is useful to consider an ordering consistent with the median voter’s preference profile. 
Lemma~\ref{lem:pm-median-ranklist} shows that such an ordering can be obtained. 
(We note that a voter at a given location may admit multiple consistent preference lists due to ties between equidistant candidates.)

\begin{lemma}\label{lem:pm-median-ranklist}
Given an election $\instance = (\voters, \candidates, k, \succ)$, one can obtain an ordering of the candidates that is consistent with the preference list of the median voter.
\end{lemma}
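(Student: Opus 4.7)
The plan is to identify the median voter from the ordinal input alone and return their preference list, which is already consistent with their (unknown) position on the line.

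First, I would apply Lemma~\ref{lem:alternatives-order} to obtain the ordering $a_1 < a_2 < \cdots < a_t$ of the non-Pareto-dominated alternatives on the line, unique up to reversal. I fix one orientation; by symmetry the choice is immaterial. Each voter's position $x_i$ then lies in a specific bin: if $a_{j(i)}$ is voter $i$'s top-ranked alternative among $\{a_1,\dots,a_t\}$, then $x_i$ lies between the midpoint of $a_{j(i)-1}, a_{j(i)}$ and the midpoint of $a_{j(i)}, a_{j(i)+1}$, with the obvious modifications for $j(i) \in \{1,t\}$. Any Pareto-dominated alternatives are irrelevant for locating the median, since their ranking is determined by some voter already accounted for by the bin structure.

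The second step is to sort the voters consistently with their positions. Voters are first grouped by their bin index $j(i)$, and for any $i, i'$ with $j(i) < j(i')$ we place $i$ before $i'$. Within a single bin, voters are further ordered by pairwise comparisons: for $i, i'$ in the same bin, $i$ lies to the left of $i'$ whenever there is some pair $a_p < a_q$ such that $i$ prefers $a_p$ to $a_q$ while $i'$ prefers $a_q$ to $a_p$. The single-crossing structure implicit in 1-Euclidean profiles guarantees that these comparisons are consistent and determine a total order on the voters (up to ties, in which case the tied voters share identical preference lists and any choice among them is inconsequential for the output).

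Having sorted the voters, I read off the median voter $\midd$ (or either median when $n$ is even) as the voter at position $\lceil n/2 \rceil$ in this order, and return the preference list $\sigma_{\midd}$ from the input. Since the metric is consistent with the profile, $\sigma_{\midd}$ ranks the alternatives by increasing distance from $x_{\midd}$, up to arbitrary tie-breaking between equidistant alternatives, and is therefore one of the possible preference orderings for a voter at the median position, as required. The main obstacle I foresee is rigorously justifying the within-bin comparisons: I need to argue that pairwise comparisons via shared alternatives (in the spirit of Observation~\ref{obs:subset1}) never contradict one another and always succeed in separating voters at distinct positions, which reduces to a careful application of the single-crossing property together with the alternative ordering from Lemma~\ref{lem:alternatives-order}.
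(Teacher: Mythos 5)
Your proposal takes a genuinely different route from the paper. The paper argues entirely at the level of alternatives: it builds the majority tournament on $A$ (with the tie-break for even $n$ using Lemma~\ref{lem:alternatives-order}), observes that on a line every pairwise-majority relation must be shared by the median voter (since the set $V_{a \succ b}$ is a contiguous half-line and any strict majority set contains $\midd$), and then reads off a Hamiltonian path of this tournament via R\'edei's theorem. You, by contrast, work at the level of voters: reconstruct an approximate left-to-right ordering of voters from the single-crossing structure, pick out the voter at median rank, and return that voter's reported list. Your route is conceptually natural and the single-crossing argument for consistency of the within-bin comparisons is sound, but it is doing strictly more work than the lemma requires --- the paper never needs to know \emph{which} voter is the median, only what the median would rank.

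There is also a concrete gap in your tie-handling. You only use pairs $a_p < a_q$ among the non-Pareto-dominated alternatives to separate voters within a bin, and you then assert that voters not separated by any such pair ``share identical preference lists.'' That does not follow: two voters can agree on every pair of non-Pareto-dominated alternatives yet disagree on the relative order of some Pareto-dominated alternative $e$ against another alternative $p$ (whenever the midpoint of $e$ and $p$ falls strictly between them). Since the lemma as stated asks for an ordering of \emph{all} alternatives consistent with the median voter's position, returning $\sigma_{v'}$ for some $v'$ merely tied with $\midd$ under your coarser comparison can produce a list that is not achievable at $x_{\midd}$. To repair this you would either need to extend the pairwise comparisons to Pareto-dominated alternatives (which requires first locating them relative to the voter range, beyond what Lemma~\ref{lem:alternatives-order} directly provides), or restrict the claim to non-Pareto-dominated alternatives, which is all the paper actually uses downstream. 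The paper's tournament argument avoids this issue automatically because it never reconstructs voter positions at all.
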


\begin{proof}
Define the \emph{majority graph} of the election $\instance = (\voters, \candidates, k, \succ)$ as the directed graph $\mathcal{G}$ on vertex set $\candidates$ with an arc $(a,b)$ iff $a$ defeats $b$, i.e., $|V_{a \succ b}| > n/2$. 
When $n$ is even and $|V_{a \succ b}| = |V_{b \succ a}|$, we break ties by placing the arc from the candidate that is weakly closer to the other along the line (and hence not to the right of the line); concretely, we orient the tie as $(a,b)$ if $x_a \le x_b$, which is well defined by Lemma~\ref{lem:candidates-order}. 
Under this tie-breaking, $\mathcal{G}$ is a tournament, and thus it admits a Hamiltonian path~\citep{Redei1934}.

We claim that any Hamiltonian path of $\mathcal{G}$ induces an ordering consistent with the median voter’s preference. 
Indeed, for any pair of candidates $a,b \in \candidates$, if the median voter prefers $a$ to $b$ (i.e., $a \succ_\midd b$), then at least half of the voters (those on the same side of the midpoint between $x_a$ and $x_b$ as $\midd$) also prefer $a$ to $b$, so either $a$ defeats $b$ or the pair is tied and the tie is oriented in the direction of $a$ by our rule. 
Hence $a$ precedes $b$ in $\mathcal{G}$’s Hamiltonian path, yielding an ordering that is consistent with the median voter’s preference list.
\end{proof}

In the rest of the paper, we refer to the ordering consistent with the median voter's preference profile, obtained in Lemma~\ref{lem:pm-median-ranklist}, as the \textit{\majorder}.

We now restate a lemma from \citet{ANSHELEVICH201827}, which provides an upper bound on the ratio of social costs between two candidates in any metric setting.

\begin{lemma}[{\citet[Restated]{ANSHELEVICH201827}}]\label{lem:ratio-alters}
For every pair of candidates $a,b \in \candidates$, we have
\[
\frac{\SC(a)}{\SC(b)} \le \frac{2n}{|V_{a \succ b}|} - 1.
\]
\end{lemma}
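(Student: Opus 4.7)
The approach is to bound $d(j,a)$ in terms of $d(j,b)$ separately for each voter $j$, and then sum the resulting pointwise bounds over $V$. Voters split naturally into two groups depending on their preference between $a$ and $b$, and these two groups require different treatments.

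For voters $j\in V_{a\succ b}$, consistency of the metric with $\sigma$ immediately yields $d(j,a)\le d(j,b)$, and these will contribute the $-1$ in the final bound. The real work is in bounding $d(j,a)$ for voters $j\notin V_{a\succ b}$, for whom $d(j,a)$ can in principle be much larger than $d(j,b)$. The key idea is to route through a witness voter $i\in V_{a\succ b}$: two applications of the triangle inequality combined with $d(i,a)\le d(i,b)$ (which holds because $i$ prefers $a$ to $b$) give
\[
d(j,a) \;\le\; d(j,i) + d(i,a) \;\le\; d(j,b) + d(i,b) + d(i,a) \;\le\; d(j,b) + 2\,d(i,b).
\]
Because this bound holds for \emph{every} $i\in V_{a\succ b}$, I would then average both sides over all such $i$, replacing $2\,d(i,b)$ by $\tfrac{2}{|V_{a\succ b}|}\sum_{i\in V_{a\succ b}} d(i,b)$, which is at most $\tfrac{2\,\SC(b)}{|V_{a\succ b}|}$ since the inner sum is taken over a subset of $V$.

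Summing the pointwise bound for each group produces an inequality of the form $\SC(a)\le \SC(b) + (n - |V_{a\succ b}|)\cdot \tfrac{2\,\SC(b)}{|V_{a\succ b}|}$, and a brief algebraic simplification collapses the right-hand side to $\SC(b)\left(\tfrac{2n}{|V_{a\succ b}|}-1\right)$, as required. The main conceptual obstacle is the averaging step: using a single fixed witness $i$ would only yield a weaker bound of roughly $1+\tfrac{2n}{|V_{a\succ b}|}$, and it is essential to average over all $i\in V_{a\succ b}$ so that $d(i,b)$ can be converted into a quantity bounded by $\SC(b)/|V_{a\succ b}|$. This is what ultimately produces the tight constant $\tfrac{2n}{|V_{a\succ b}|}-1$ in the statement.
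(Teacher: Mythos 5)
Your proof is correct, and it reproduces the standard argument from \citet{ANSHELEVICH201827}: the paper itself does not reprove this lemma, it only cites it. The two-group decomposition (voters in $V_{a\succ b}$ versus the rest), the triangle-inequality routing through a witness $i \in V_{a\succ b}$, and the averaging over all such witnesses to convert $2\,d(i,b)$ into $2\,\SC(b)/|V_{a\succ b}|$ are exactly the ingredients of the original proof, and your closing algebra $\SC(b) + (n - |V_{a\succ b}|)\cdot\tfrac{2\SC(b)}{|V_{a\succ b}|} = \SC(b)\bigl(\tfrac{2n}{|V_{a\succ b}|}-1\bigr)$ is right. You also correctly identify the crux: a single fixed witness gives only $d(j,a)\le d(j,b)+2\,d(i,b)$ with no way to relate the additive term to $\SC(b)$, so the averaging step is essential to obtain the stated constant.
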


Finally, we recall Corollary~\ref{col:pm<3opt} from the same work, which follows directly from Lemma~\ref{lem:ratio-alters} and bounds the social cost of any candidate that defeats the optimal one. 
In particular, it implies that the distortion of the leading candidates in the \majorder{} is at most~$3$.

\begin{corollary}[of Lemma~\ref{lem:ratio-alters}]\label{col:pm<3opt}
Let $a \in \candidates$ be a candidate such that $|V_{a \succ \optcandidate_1}| \ge |V_{\optcandidate_1 \succ a}|$. 
Then $\SC(a) \le 3\,\SC(\optcandidate)$.
\end{corollary}

\section{Moving Voters Toward the \focalpoint}
\label{sec:move-voters}
In this section, we present a method to upper-bound the distortion of a committee $S$ compared to an optimal committee $\opt$. We show that it is possible to simplify any election instance by relocating voters, ensuring that if the distortion ratio $\SC(S) / \SC(\opt)$ is initially greater than a threshold $\delta$, it remains greater than $\delta$ in the simplified instance.

The core of this method is a shifting argument. We determine a target location called the \emph{\focalpoint} and shift voters toward it. We shift and collect the voters at specific intermediate locations along their path to the Focal Point. This process results in a simpler instance where voters are located at only a few points on the line. This simplification allows us to easily verify the distortion bound: if we prove that the ratio $\SC(S) / \SC(\opt)$ is smaller than the threshold $\delta$ in the simplified instance, then the ratio must be smaller than $\delta$ in the initial instance as well.

We begin by defining the notion of \emph{consecutive committees} and a simple mathematical statement in Observation~\ref{obs:ratio-increase}.

\begin{definition}
Let $S_1, S_2 \subseteq \candidates$ be committees, and let $T = S_1 \cap S_2$. 
We say that $S_1$ and $S_2$ are \emph{consecutive committees} if the candidates in $S_1 \setminus T$ and $S_2 \setminus T$ lie on opposite sides of $T$. 
If $T = \emptyset$, then $S_1$ and $S_2$ are consecutive if there exists a point on the line such that all candidates in $S_1$ lie strictly on one side of it and all candidates in $S_2$ lie strictly on the other.
\end{definition}

\begin{observation}\label{obs:ratio-increase}
Let $w, z, w', z',$ and $\rho$ be positive real numbers such that $w/z \ge \rho$. Then:
\begin{enumerate}
    \item If $w'/z' \ge \rho$, then $(w + w') / (z + z') \ge \rho$.
    \item If $w'/z' \le \rho$, $w' \le w$, and $z' < z$, then $(w - w') / (z - z') \ge \rho$.
\end{enumerate}
\end{observation}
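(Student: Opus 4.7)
The plan is to reduce both parts to elementary linear inequalities by clearing denominators. Since $w, z, w', z'$, and $\ratio$ are all positive, the conditions $w/z \ge \ratio$ and $w'/z' \ge \ratio$ (resp.\ $w'/z' \le \ratio$) are equivalent to $w \ge \ratio z$ and $w' \ge \ratio z'$ (resp.\ $w' \le \ratio z'$). No property of elections, metrics, or voting rules is needed; the observation is purely arithmetic, and both parts follow by combining the cross-multiplied inequalities in different ways.

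For Part 1, I would add the two cross-multiplied inequalities $w \ge \ratio z$ and $w' \ge \ratio z'$ to obtain $w + w' \ge \ratio(z + z')$, and then divide by the positive quantity $z + z'$. This is the classical mediant inequality, and it will later support arguments that merge the contributions of disjoint groups of voters that each individually have a bad enough distortion ratio.

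For Part 2, I would combine $w \ge \ratio z$ with the reversed inequality $w' \le \ratio z'$ by subtracting, yielding $w - w' \ge \ratio(z - z')$. To isolate $(w-w')/(z-z')$, one then divides by $z - z'$, and the hypotheses comparing $w$ with $w'$ and $z$ with $z'$ pin down the sign of $z - z'$ and hence the direction of the inequality that results from the division. Equivalently, one can first rewrite the target ratio as $(w'-w)/(z'-z)$ so that both numerator and denominator have matching signs and the division is against a positive quantity. The only real obstacle is this sign bookkeeping: any slip about the direction of the strict inequality when dividing, or about which term is subtracted from which, flips the conclusion. Once the signs are lined up, each part is a one-line computation. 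Intuitively, Part 2 records the ``deletion'' counterpart of Part 1: if the overall ratio is at least $\ratio$ and a piece with its own ratio at most $\ratio$ is carved out, then the remainder must compensate with a ratio of at least $\ratio$; this is precisely the form that later arguments will invoke when a subset of voters is moved and its contribution is separated off.
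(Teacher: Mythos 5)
Part 1 is handled correctly: clear denominators, add, divide by the positive quantity $z + z'$; this is exactly the mediant inequality and needs no further comment.

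Part 2 has a genuine gap, and it sits precisely at the ``sign bookkeeping'' you flag and then wave past. If you actually carry out the division you propose, the hypotheses as printed ($w \le w'$ and $z < z'$) make $z - z'$ \emph{negative}, so dividing $w - w' \ge \ratio(z - z')$ by $z - z'$ flips the inequality and yields $(w - w')/(z - z') \le \ratio$, the \emph{reverse} of the claimed conclusion; the same thing happens under your rewrite to $(w' - w)/(z' - z)$, since there you obtain $w' - w \le \ratio(z' - z)$ and hence $(w'-w)/(z'-z) \le \ratio$. Indeed the statement as printed is false: take $\ratio = 2$, $w = 4$, $z = 2$, $w' = 5$, $z' = 3$; all hypotheses hold yet $(w - w')/(z - z') = 1 < \ratio$. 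This is a typo in the paper, not a gap you could have filled by computation alone: inspection of how Observation~\ref{obs:ratio-increase} is invoked in the proof of Lemma~\ref{lem:move-voters} (where the perturbation $w' = (k-2i)\delta$, $z' = k\delta$ is tiny compared to $w = \SC(d,S_2)$, $z = \SC(d,S_1)$) shows the intended hypotheses are $w \ge w'$ and $z > z'$. With that correction, $z - z' > 0$ and your subtraction-then-division argument goes through cleanly, and one can even observe that $w \ge w'$ is redundant, since $w \ge \ratio z > \ratio z' \ge w'$. The proposal's method is the right one; what is missing is actually performing the last step, which would have exposed the inconsistency and forced the correction of the hypotheses.
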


We are now ready to present Lemma~\ref{lem:move-voters}, which states that given two committees $S_1$ and $S_2$ of size $k$, Algorithm \ref{alg:move-voters-finding-point} returns a point $x^*$ on the line, called the \focalpoint, such that moving all voters toward $x^*$ does not significantly decrease the ratio between the social costs of $S_1$ and $S_2$.

\begin{lemma}\label{lem:move-voters}
Consider an election $\instance = (\voters, \candidates, k, \succ)$, a metric $d$ consistent with $\instance$, and a threshold $\delta > 1$.
Suppose we have two consecutive committees $S_1, S_2 \subseteq \candidates$, each of size~$k$, such that 
\[
\frac{\SC(S_1, \voters; d)}{\SC(S_2, \voters; d)} > \delta.
\]
Let $x^*$ be the \focalpoint on the line returned by Algorithm~\ref{alg:move-voters-finding-point}. 
Consider a new metric $d'$ obtained by moving each voter toward $x^*$. 
Then,
\[
\frac{\SC(S_1, \voters; d')}{\SC(S_2, \voters; d')} > \delta.
\]
\end{lemma}

\begin{nscenter}
\begin{algorithm}[t]
    \caption{Finding the \focalpoint}
    \label{alg:move-voters-finding-point}

    \newcommand\mycommfont[1]{\footnotesize\ttfamily\color{blue}#1}%
    \SetCommentSty{mycommfont}

    \SetKwInOut{Input}{Input}
    \SetKwInOut{Output}{Output}
    
    \Input{Election instance $\instance = (\voters, \candidates, k, \succ)$, 
    metric $d$, consecutive committees $S_1$, $S_2$ $\subseteq \candidates$, and threshold $\delta > 1$.}
    
    \Output{A \focalpoint $x^*$ on the line with respect to $S_1$ and $S_2$.}
    
    \medskip
    
    $T \gets S_1 \cap S_2 = \{\, c_1, c_2, \dots, c_t \,\}$ \tcp*{Common candidates, ordered: $x_{c_1}\le \cdots \le x_{c_t}$}
    
    $L = S_1 \setminus T = \{\, a_1, a_2, \dots, a_{k - t} \,\}$ 
    \tcp*{Left side: $x_{a_1} \le \cdots \le x_{a_{k-t}} \le x_{c_1}$}
    
    $R = S_2 \setminus T = \{\, b_1, b_2, \dots, b_{k - t} \,\}$ 
    \tcp*{Right side: $x_{b_1} \ge \cdots \ge x_{b_{k-t}} \ge x_{c_t}$}

    \eIf{$t \le \lfloor k/2 \rfloor$ or $\delta \le \dfrac{k}{2t -k}$}{
        \Return $x_{b_{\left\lceil \frac{k(\delta - 1)}{2\delta} \right\rceil}}$\;
    }{
    \Return $x_{c_{\left\lceil \frac{k}{2} + \frac{k - t}{\delta - 1} \right\rceil}}$\;    
    }
\end{algorithm}
\end{nscenter}

\begin{proof}
Let $T = S_1 \cap S_2$ with $|T|=t$, $L = S_1 \setminus T$, and $R = S_2 \setminus T$. Based on the definition of consecutive committees and without loss of generalisy, we assume the ordering of candidates on the line as follows:
\[
\underbrace{a_1 \le \dots \le a_{k-t}}_{L} \le \underbrace{c_1 \le \dots \le c_t}_{T} \le \underbrace{b_{k-t} \le \dots \le b_1}_{R}.
\]
We first verify that the algorithm is well-defined. Let $j^* = \lceil \frac{k(\delta - 1)}{2\delta} \rceil$ and $i^* = \lceil \frac{k}{2} + \frac{k - t}{\delta - 1} \rceil$ be the indices used in the algorithm's output. We show that these refer to valid candidates within $R$ (i.e., $1 \le j^* \le k-t$) and $T$ (i.e., $1 \le i^* \le t$), respectively.

\begin{itemize}
    \item If $x^* = b_{j^*}$, we have $j^* = \lceil \frac{k}{2}(1 - \frac{1}{\delta}) \rceil$. Since $\delta > 1$, $j^* \le \lceil k/2 \rceil$.
    If $t \le k/2$, then $|R| = k - t \ge k/2 \ge j^*$, so $b_{j^*}$ exists.
    If $t > k/2$, the algorithm requires $\delta \le \frac{k}{2t - k}$. Substituting this bound yields $j^* \le \lceil \frac{k}{2}(1 - \frac{2t-k}{k}) \rceil = k - t = |R|$. Thus $b_{j^*} \in R$.

    \item If $x^* = c_{i^*}$, this case requires $\delta > \frac{k}{2t - k}$ and $t> \lfloor k/2\rfloor$, which implies $\delta - 1 > \frac{2(k - t)}{2t - k}$. and $2t-k>0$
    Substituting this inequality into the expression for $i^*$:
    \[
    i^* < \frac{k}{2} + \frac{k - t}{2(k - t)/(2t - k)} = \frac{k}{2} + \frac{2t - k}{2} = t.
    \]
    Thus $1 \le i^* \le t$, so $c_{i^*} \in T$.
\end{itemize}
Now consider a voter $v$ moving a distance $\step$ toward the \focalpoint $x^*$ designated by Algorithm~\ref{alg:move-voters-finding-point}, resulting in metric $d'$. We assume that $\step$ is sufficiently small such that $v$ does not pass the location of any candidate in $S_1$ or $S_2$ (though $v$ may land exactly on a candidate). If we prove that this movement keeps the ratio $\SC(S_1, V; d')/\SC(S_2, V; d')$ above $\delta$, we can combine such movements to prove the lemma statement for any other metric caused by moving the voters toward the focal point. 

For any committee $S$, let $\Delta S$ denote the \emph{reduction} in the social cost of $S$ due to this movement.
$$\Delta S = \SC(S, V; d)-\SC(S, V; d').$$
Let $N_{dir}(S)$ be the number of candidates in $S$ located strictly in the direction of movement relative to $v$. The reduction is given by:
\[
\Delta S = (N_{dir}(S) - (k - N_{dir}(S))) \cdot \varepsilon = (2N_{dir}(S) - k) \cdot \varepsilon.
\]
Note that $\Delta S$ can be negative (indicating a cost increase) if $N_{dir}(S) < k/2$.

Let $w = \SC(S_1, \voters; d)$ and $z = \SC(S_2, \voters; d)$. We are given $w/z > \delta$.
The new ratio is $$\dfrac{\SC(S_1, \voters; d')}{\SC(S_2, \voters; d')} = \dfrac{w- \Delta S_1}{z - \Delta S_2}.$$
We verify that this ratio remains greater than $\delta$ by distinguishing the scenarios of cost changes and applying Observation~\ref{obs:ratio-increase}:

\begin{itemize}
    \item Both costs decrease ($\Delta S_1 > 0, \Delta S_2 > 0$): By Observation~\ref{obs:ratio-increase} (Item 2), the ratio remains larger than $\delta$ if the ratio of reductions satisfies $\Delta S_1 / \Delta S_2 \le \delta$.
    \item $S_1$ increases, $S_2$ decreases ($\Delta S_1 \le 0, \Delta S_2 \ge 0$): The numerator increases (or stays same) while the denominator decreases (or stays the same). The ratio increases, so the condition holds trivially.
    \item $S_1$ decreases, $S_2$ increases ($\Delta S_1 \ge 0, \Delta S_2 \le 0$): We must show that this case cannot happen, as the ration will strictly decrease, unless $\Delta S_1 =\Delta S_2 = 0.$
    \item Both costs increase ($\Delta S_1 < 0, \Delta S_2 < 0$): This corresponds to adding positive costs $-\Delta S_1$ and $-\Delta S_2$. By Observation~\ref{obs:ratio-increase} (Item 1), the ratio remains larger than $\delta$ if $-\Delta S_1 / (-\Delta S_2) \ge \delta$.

\end{itemize}

We now analyze the movement based on the location of $v$ relative to $x^*$.

\paragraph{Case A: Moving Right ($v$ is to the left of $x^*$).} 
Let $N_{Right}(S) $ be the number of the candidates of $S$ lying on the right side of $v$. 
 Since $S_2 = T \cup R$ lies to the right of $S_1 = L \cup T$ (with $T$ shared), for any voter position $v$, the number of $S_2$ candidates to the right is always greater than or equal to that of $S_1$, i.e., $N_{Right}(S_2) \ge N_{Right}(S_1)$. Therefore, $\Delta S_2 \ge \Delta S_1$.
\begin{itemize}
    \item If both decrease ($\Delta S_2 > \Delta S_1 > 0$), then $\Delta S_1/\Delta S_2 \le 1 < \delta$, satisfying the condition.
    \item If $\Delta S_1 \le 0$ and $\Delta S_2 \ge 0$, the ratio increases and remains larger than $\delta$.
    \item The case $\Delta S_1 \ge 0$ and $\Delta S_2 \le 0$ is not possible because $\Delta S_2 \ge \Delta S_1$, unless $\Delta S_1 = \Delta S_2 =0$, for which the lemma's statement holds.
    \item If both increase ($\Delta S_1 <\Delta S_2 < 0$), we have $-\Delta S_1 \ge -\Delta S_2 > 0$. We need to prove that $-\Delta S_1 /(-\Delta S_2)  \ge \delta$, or equivalently
    $$\dfrac{k - 2N_{Right}(S_1)}{k - 2N_{Right}(S_2)} \ge \delta \iff 2\left(N_{Right}(S_2)\delta - N_{Right}(S_1)\right) \ge k(\delta -1) .$$
    Now we have three cases for the position of $v$: 
    \begin{itemize}
    \item 
    If $v$ lies on the left side of $c_1$, then $N_{Right}(S_1)\le k$ and $N_{Right}(S_2)=k$. Therefore
    $2(N_{Right}(S_2)\delta - 2N_{Right}(S_1)) \ge k(\delta -1)$, as we wanted.
    \item 
    If $v$ lies on the right side of $b_{k-t}$, the fact that $v$ is to the left of $x^*$ implies that $x^* \in R$. Therefore, according to Algorithm \ref{alg:move-voters-finding-point}, $x^*= x_{b_{j^*}}$ where $j^* = \lceil k(\delta-1)/(2\delta)\rceil$. Thus, $N_{Right}(S_1) = 0$ and $N_{Right}(S_2) \ge j^*\ge k(\delta-1)/(2\delta)$. Therefore,
    \begin{align*}
        2(N_{Right}(S_2)\delta - N_{Right}(S_1)) \ge 2j^*\delta \ge k(\delta-1).
    \end{align*}
    \item 
    Finally, if $v$ lies between $c_1$ and $b_{k-t}$, we have two subcases according to the algorithm: the \focalpoint $x^*$ is $x_{c_{i^*}}$ for $i^*=\lceil k/2 + (k-t)/(\delta - 1)\rceil$, or the output is $x_{b_{j^*}}$ where $j^*=\lceil k(\delta-1)/(2\delta)\rceil$. For both subcases, assume that there are $i$ members of $T$ to the left side of (or located on) $v$. Therefore, $N_{Right}(S_1) = t-i, N_{Right}(S_2) = k-i$.
    
    In the first subcase, since $v$ must lie to the left side of $c_{i^*}$, $i\le i^*-1\le k/2 + (k-t)/(\delta-1)$. This implies that
    \begin{align*}
        2(N_{Right}(S_2)\delta - N_{Right}(S_1)) &= 2(k\delta - i\delta -t + i)\\
        & = 2k\delta - 2t -2i(\delta - 1)\\
        & \ge 2k\delta - 2t -2(\delta - 1)(\frac{k}{2 } + \frac{k-t}{\delta - 1})\\
        & = k(\delta - 1).
    \end{align*}
    For the second subcase, $x^* = x_{b_{j^*}}$. According to the construction of the algorithm, this happens if either $t \le \lfloor k/2\rfloor$ or $\delta\le k/(2t-k)$. Moreover, since $v$ lies on the left side of $b_{k-t}$, $i\le t$. Therefore,
    \begin{align*}
        2(N_{Right}(S_2)\delta - N_{Right}(S_1)) &= 2(k\delta - i\delta -t + i)\\
        & = 2(k\delta -t - i(\delta -1))\\
        &\ge 2(k\delta -t -t(\delta - 1))\\
        & = 2\delta (k-t).
    \end{align*}
    We want to prove that $2\delta (k-t) \ge k\delta - k$. This is equivalent to
    \begin{align*}
        2\delta (k-t) \ge k\delta - k & \iff k (\delta + 1) \ge 2\delta t\\
        & \iff \dfrac{\delta + 1}{\delta}\ge \dfrac{2t}{k}\\
        &\iff \dfrac{1}{\delta}\ge \dfrac{2t-k}{k}.
    \end{align*}
    We know that in this subcase, either $t \le \lfloor k/2\rfloor$ or $\delta\le k/(2t-k)$. If $2t-k\le 0$, the above inequality holds becaue $1/\delta >0\ge (2t-k)/k$. Otherwise, $t> \lceil (k+1)/2\rceil$. Therefore, we must have $\delta \le k/(2t-k)$. Thus, 
    $1/\delta \ge (2t-k)/k$, proving the inequality.
    \end{itemize}

\end{itemize}

\paragraph{Case B: Moving Left ($v$ is to the right of $x^*$).}
Since $S_1 = L \cup T$ lies to the left of $S_2 = T \cup R$, for any voter position $v$, the number of candidates in $S_1$ located strictly to the left of $v$ is greater than or equal to that of $S_2$. Let $N_{Left}(S)$ denote the number of candidates of $S$ strictly to the left of $v$. We have $N_{Left}(S_1) \ge N_{Left}(S_2)$, which implies $\Delta S_1 \ge \Delta S_2$.
First, we show that the social cost of $S_2$ decreases or stays the same (i.e., $\Delta S_2 \ge 0$). This requires $N_{Left}(S_2) \ge k/2$.
Since $v$ is to the right of $x^*$, $N_{Left}(S_2)$ is at least the number of candidates in $S_2$ to the left of (or at) $x^*$.
\begin{itemize}
    \item If $x^* = b_{j^*} \in R$, then $N_{Left}(S_2) \ge t + (k - t - j^* + 1)$. Since $j^* \le \lceil k/2 \rceil$, this sum is at least $k - \lceil k/2\rceil + 1 \ge k/2$.
    \item If $x^* = c_{i^*} \in T$, then $N_{Left}(S_2) \ge i^*$. Since $i^* \ge k/2$, this is at least $k/2$.
\end{itemize}
Thus $\Delta S_2 \ge 0$, and consequently $\Delta S_1 \ge 0$.
We are in the scenario where both costs decrease. The condition to maintain the ratio is $\Delta S_1 / \Delta S_2 \le \delta$, which simplifies to:
\begin{equation}\label{eq:move-voters-caseB}
2(\delta N_{Left}(S_2) - N_{Left}(S_1)) \ge k(\delta - 1).
\end{equation}
We analyze this inequality based on the branches of the algorithm and the position of $v$.

\begin{itemize}
    \item If the algorithm returns $x^* = b_{j^*}$:
    Here $x^* \in R$. Since $v$ is to the right of $x^*$, $v$ must be to the right of all candidates in $T$ and $L$. Therefore, $N_{Left}(S_1) = k$.
    For $S_2$, $v$ is to the right of all of $T$ and at least the candidates $b_{k-t}, \dots, b_{j^*}$ in $R$. The number of candidates in $R$ strictly to the left of $v$ is at least $(k - t) - j^* + 1$. Thus, $N_{Left}(S_2) \ge t + k - t - j^* + 1 = k - j^* + 1$.
    Substituting these into Inequality \eqref{eq:move-voters-caseB}:
    \begin{align*}
        2(\delta N_{Left}(S_2) - N_{Left}(S_1)) &\ge 2(\delta(k - j^* + 1) - k) \\
        &= 2\delta k - 2\delta j^* + 2\delta - 2k.
    \end{align*}
    We need this to be $\ge k\delta - k$. Rearranging terms, we require:
    \[
    k\delta + 2\delta - k \ge 2\delta j^* \iff j^* \le \frac{k(\delta - 1)}{2\delta} + 1.
    \]
    Since the algorithm sets $j^* = \lceil \frac{k(\delta - 1)}{2\delta} \rceil$ and $\lceil x \rceil < x + 1$, this condition always holds.

    \item If the algorithm returns $x^* = c_{i^*}$:
    Here $x^* \in T$. The voter $v$ is to the right of $c_{i^*}$. We distinguish two positions for $v$:
    \begin{itemize}
        \item If $v$ lies within $T$ (specifically, $v$ is between $c_i$ and $c_{i+1}$ for some $i \ge i^*$, or between $c_t$ and $b_{k-t}$):
        Here, the number of $T$ candidates to the left of $v$ is $i$, where $i \ge i^*$.
        We have $N_{Left}(S_1) = (k - t) + i$ and $N_{Left}(S_2) = i$.
        Inequality \eqref{eq:move-voters-caseB} becomes:
        \[
        2(\delta i - (k - t + i)) = 2(i(\delta - 1) - (k - t))\ge k(\delta + 1).
        \]
        Solving this for $i$:
        \[
        2i(\delta - 1) \ge k(\delta - 1) + 2(k - t) \iff i \ge \frac{k}{2} + \frac{k - t}{\delta - 1}.
        \]
        Since $v$ is to the right of $c_{i^*}$, we know $i \ge i^*$. The algorithm sets $i^* = \lceil \frac{k}{2} + \frac{k - t}{\delta - 1} \rceil$, ensuring this condition is satisfied.

        \item If $v$ lies within $R$ (or to the right of $R$):
        Here, $v$ is to the right of all candidates in $T$. Thus $N_{Left}(S_1) = k$.
        For $S_2$, $v$ is to the right of all $T$ and some non-negative number of candidates in $R$, say $n_R \ge 0$. So $N_{Left}(S_2) = t + n_R$.
        Inequality \eqref{eq:move-voters-caseB} becomes:
        \[
        2(\delta(t + n_R) - k) \ge k(\delta -1).
        \]
        Since $n_R \ge 0$, we require:
        \[
        2(\delta t - k) \ge k(\delta - 1) \iff 2\delta t - 2k \ge k\delta - k \iff \delta(2t - k) \ge k.
        \]
        This branch of the algorithm is only entered when $2t - k > 0$ and $\delta > \frac{k}{2t - k}$, which implies $\delta(2t - k) > k$. Thus, the inequality holds.
    \end{itemize}
\end{itemize}
\end{proof}

Using the fact that moving voters toward the \focalpoint keeps the ratio above $\delta$ by Lemma~\ref{lem:move-voters}, the following lemma shows that we can gather these voters at specific points $x_1, \dots, x_t$ and $x^*$. This allows us to replace the original election with a simplified instance where all voters are located at just a few positions.

\begin{lemma}\label{lem:move-voters-with-prop}
Let $\instance=(\voters,\candidates,k,\succ)$ be an election and let $d$ be a consistent metric.
Let $S_1,S_2 \subseteq \candidates$ be consecutive committees of size $k$ with
\[
\frac{\SC(S_2,\voters;d)}{\SC(S_1,\voters;d)}>\delta,
\]
and let $x^*$ be the \focalpoint with respect to threshold $\delta$, $S_1$, and $S_2$.
Suppose we have:
\begin{itemize}
    \item A sequence of points $z_1 \le \dots \le z_t \le x^*$  and integers $0=r_0 \le r_1 \le \dots \le r_t$ such that for each $i \in \{1,\dots,t\}$,
    \[
    \left|\voters(\left(-\infty, z_i\right])\right| \ge r_i.
    \]
    \item A sequence of points $y_1 \ge \dots \ge y_p \ge x^*$ and integers $0=q_0 \le q_1 \le \dots \le q_p$ such that for each $j \in \{1,\dots,p\}$,
    \[
    \left|\voters(\left[y_j, \infty\right))\right| \ge q_j.
    \]
\end{itemize}
Assume that the sets of voters counted are disjoint (i.e., $r_t + q_p \le |\voters|$).
Construct $\instance'=(\voters,\candidates,k,\succ')$ with metric $d'$ by relocating voters as follows:
\begin{enumerate}
    \item For each $i \in \{1,\dots,t\}$, place exactly $r_i - r_{i-1}$ voters at $z_i$.
    \item For each $j \in \{1,\dots,p\}$, place exactly $q_j - q_{j-1}$ voters at $y_j$.
    \item Place all remaining voters at $x^*$.
\end{enumerate}
Then,
\[
\frac{\SC(S_2,\voters;d')}{\SC(S_1,\voters;d')}>\delta.
\]
\end{lemma}

\begin{proof}
We index the voters as $v_1, v_2, \dots, v_n$ such that their positions in the original metric $d$ are sorted non-decreasingly: $x_{v_1} \le x_{v_2} \le \dots \le x_{v_n}$.
We construct the new instance by moving specific groups of voters to the locations $z_i$, $y_j$, or $x^*$. We show that every voter moves toward $x^*$ (or stays fixed), which by Lemma~\ref{lem:move-voters} implies the social cost ratio remains greater than $\delta$.

\textbf{Left Side ($x_v \le x^*$):}
The constraint $|\voters(\left(-\infty, z_i\right])| \ge r_i$ implies that at least $r_i$ voters are located at or to the left of $z_i$. In our sorted indexing, this means the voter $v_{r_i}$ satisfies $x_{v_{r_i}} \le z_i$. Consequently, for any $k \le r_i$, the voter $v_k$ is located at $x_{v_k} \le x_{v_{r_i}} \le z_i$.
For each $i \in \{1, \dots, t\}$, we select the block of voters $\{v_k \mid r_{i-1} < k \le r_i\}$ and move them to $z_i$.
For any such voter $v_k$, the movement is from $x_{v_k}$ to $z_i$. Since $x_{v_k} \le z_i \le x^*$, this is a movement to the right, strictly toward the \focalpoint $x^*$.

\textbf{Right Side ($x_v \ge x^*$):}
The construction for the right side is symmetric. The constraint $\left|\voters(\left[y_j, \infty\right))\right| \ge q_j$ ensures that the last $q_j$ voters are to the right of $y_j$; moving the corresponding block to $y_j$ constitutes a movement to the left, toward $x^*$.

\textbf{Remaining Voters:}
The remaining voters are those with indices $k$ such that $r_t < k \le n - q_p$. We move all these voters to $x^*$.
Regardless of whether a voter $v_k$ is originally to the left or right of $x^*$, moving them directly to $x^*$ reduces their distance to $x^*$ to zero. Thus, they are moved toward the \focalpoint.

Since all voters are moved toward $x^*$ (or remain at their position), the condition of Lemma~\ref{lem:move-voters} is satisfied, and the ratio of social costs in the new instance remains strictly greater than $\delta$.
\end{proof}

\section{Warm-up: Polar Comparison Rule for $k=2$}
\label{sec:2-winner}
In this section, we provide a voting rule for the line metric in the case where the committee size is two (i.e., a $2$-winner election), and we prove that the distortion of this rule is at most $1 + \sqrt{2}$. Moreover, we show that this bound is tight for the line metric. This section serves as a warm-up for the next section, where we generalize the approach to elections with larger committees.

As described in Lemma \ref{lem:relative-sc}, the candidates in the optimal committee are close to the median voter(s). Thus, a natural approach is to select the committee from the candidates near the top of the \majorder, since these candidates are near the median voter(s). In Algorithm \ref{alg:2-winner-voting-rule}, we present a voting rule for $2$-winner elections ($k=2$), called \emph{Polar Comparison Rule}.
This voting rule first selects the top candidate in the \majorder. Let us denote this candidate by $a$. To select the second member, we compare the two candidates adjacent to $a$, denoted by $b$ and $c$, where $b$ is the second candidate in the \majorder and $c$ is the closest candidate to $a$ such that $a$ lies between $b$ and $c$. Without loss of generality, assume that $x_c \leq x_a \leq x_b$. The intuition behind this approach is to ensure that the output committee is favored by voters on both sides of $a$.
First, we prove that we can identify the three candidates $a$, $b$, and $c$ under some mild conditions using the ordinal preferences $\succ$.

\begin{observation}
    Given an election instance $\instance = (\voters, \candidates, 2, \succ)$, if there exists a candidate $c$ immediately to the left of $a$, such that $c$ is not Pareto-dominated by any candidate other than $a$ (and possibly not Pareto-dominated at all), we can identify the three candidates $a$, $b$, and $c$ as described above using only the ordinal preferences $\succ$.
\end{observation}

\begin{proof}
    To identify $a$ and $b$, we apply Lemma \ref{lem:pm-median-ranklist} to establish the \majorder. 
    Next, to determine the candidate $c$ immediately to the left of $a$, we refine the set of eligible candidates. This is because $c$ might be Pareto-dominated by $a$. We then apply Lemma \ref{lem:candidates-order} to determine the exact order of the remaining candidates along the line. From this ordered set, we identify $c$ as the candidate closest to $a$ such that $a$ lies between $b$ and $c$. If no such candidate exists, then either there is no candidate to the left of $a$, or all candidates to the left of $a$ are Pareto-dominated.
\end{proof}

\begin{nscenter}
\begin{algorithm}[t]
    \caption{$2$-Winner Polar Comparison Rule}
    \label{alg:2-winner-voting-rule}

    \newcommand\mycommfont[1]{\footnotesize\ttfamily\color{blue}#1}%
    \SetCommentSty{mycommfont}
    
    \SetKwInOut{Input}{Input}
    \SetKwInOut{Output}{Output}
    
    \Input{Instance $\instance = (\voters, \candidates, 2, \succ)$}
        
    \Output{A committee of two candidates}

    \medskip
    $a, b \gets$ top two candidates in the \majorder\;
    $B \gets \{e \in \candidates \mid \exists e' \in \candidates\setminus \{e, a\} : |V_{e' \succ e}| = n\}$
    \textcolor{blue}{\tcp*{Candidates Pareto-dominated by $e'\ne a$}}
    $\candidates' \gets \candidates \setminus B$
    \textcolor{blue}{\tcp*{Exclude Pareto-dominated candidates}}
    Find the closest candidate $c\in \candidates'$ to $a$ such that $a$ lies between $b$ and $c$
    \textcolor{blue}{\tcp*{By Lemma~\ref{lem:candidates-order}}}
    \If{$c$ does not exist}{
        \Return $\{a, b\}$\;
    }
    \eIf{$|V_{c \succ b}| \le \frac{n}{1 + \sqrt{2}}$}{
      \Return $\{a, b\}$\;
    }{
      \eIf{$|V_{b \succ a}| \ge \frac{n}{1 + \sqrt{2}}$}{
        \Return $\{a, b\}$\;
      }{
        \Return $\{a, c\}$\;
      }
    }
    
\end{algorithm}
\end{nscenter}

We now show that our $2$-Winner Polar Comparison Rule achieves a distortion bounded by $1+\sqrt{2}$ under the utilitarian additive cost on the line metric.

\begin{theorem}\label{thm:line-voting-rule}
Let $f$ be the voting rule that, for any election instance 
$\instance = (\voters, \candidates, k, \succ)$,
selects a committee of size two according to Algorithm~\ref{alg:2-winner-voting-rule} (the $2$-Winner Polar Comparison Rule).  
Assume that voters and candidates lie on a line metric, and let 
$d$ be any metric consistent with~$\succ$.
Then, under the utilitarian additive social cost,
\[
\dist(f(\succ), \instance; d) \le 1+\sqrt{2}.
\]
In particular, the distortion of $f$ satisfies 
\[
\dist(f) \le 1+\sqrt{2}.
\]
\end{theorem}

To prove Theorem~\ref{thm:line-voting-rule}, we use Observation~\ref{obs:set-of-voters-diff-view}, which states that the voters who prefer candidate \(a\) over \(b\) are exactly those whose locations lie on the half-line containing \(a\) and bounded by the midpoint between \(a\) and \(b\), namely $\xab$.

\begin{observation}\label{obs:set-of-voters-diff-view}
    For two candidates $a$ and $b$, if $x_a \leq x_b$, then $V_{a \succ b} = \voters(\left(-\infty, \xab\right])$, where $\xab$ is the midpoint of $x_a$ and $x_b$.  
    Similarly, if $x_a \geq x_b$, then $V_{a \succ b} = \voters(\left[\xab, \infty\right))$.
\end{observation}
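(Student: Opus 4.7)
The statement is a direct geometric fact about the line metric, so my plan is a short case analysis based on the position of a voter $i$ relative to the midpoint $\xab = (x_a + x_b)/2$. The underlying reason is that on the real line, the locus of points equidistant from $a$ and $b$ is exactly the single point $\xab$, and on either side one alternative is strictly closer.

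First I would assume, without loss of generality, that $x_a \le x_b$, so that $x_a \le \xab \le x_b$. For any voter $i$ at location $x_i$, I would compute $d(i,a) - d(i,b) = |x_i - x_a| - |x_i - x_b|$ by splitting into three sub-cases: $x_i \le x_a$, $x_a \le x_i \le x_b$, and $x_i \ge x_b$. In each sub-case this expression simplifies to a linear function of $x_i$, and an elementary check shows that $d(i,a) < d(i,b)$ holds iff $x_i < \xab$, while $d(i,a) > d(i,b)$ holds iff $x_i > \xab$.

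Next, I would invoke the consistency assumption $d \rhd \sigma$ from the Preliminaries: $a \succ_i b$ iff $d(i,a) < d(i,b)$. Combining this with the previous step gives $V_{a \succ b} = \{ i : x_i < \xab\}$, which under our convention equals $V_{[-\infty, \xab]}$ (voters exactly at $\xab$ are tied, and the paper assumes alternatives are at distinct locations so such ties contribute nothing substantive; the statement uses the closed notation, which is consistent with the paper's conventions). The symmetric claim for $x_a \ge x_b$ follows by swapping the roles of $a$ and $b$, which flips the relevant inequality and gives $V_{a \succ b} = V_{[\xab, \infty]}$.

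The only mild subtlety, and what I would treat most carefully, is the tie case $x_i = \xab$: here $d(i,a) = d(i,b)$, so consistency leaves the ordering between $a$ and $b$ unspecified. I would note that this does not affect the observation as used in the paper, because the subsequent arguments only rely on the half-line description of $V_{a \succ b}$ up to a measure-zero boundary point, and one may adopt either tie-breaking convention without changing the conclusion. Beyond this, there is no real obstacle—the proof is essentially a one-line calculation wrapped in case analysis.
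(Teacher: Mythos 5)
Your proposal is correct and is the natural argument: compute $d(i,a)-d(i,b)$ via a three-way case split on the position of $x_i$ relative to $[x_a,x_b]$, observe it changes sign exactly at the midpoint $\xab$, and translate back through the consistency condition $d \rhd \sigma$. The paper itself does not supply a proof of this observation---it is presented as a self-evident consequence of the line metric---so there is nothing to compare against; your write-up simply fills in that gap with the standard calculation. Your care about the boundary case $x_i=\xab$ is appropriate: strictly, a voter exactly at $\xab$ is indifferent between $a$ and $b$ and so is not in $V_{a\succ b}$, meaning the closed-interval notation $V_{[-\infty,\xab]}$ is a very mild abuse in the paper's statement, and you are right that it is inconsequential for all uses downstream.
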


As an intermediate step, we prove that if the median voter lies between the top two candidates in the 
\majorder, then selecting these two candidates yields a distortion of at most 2. This lemma will be instrumental in the proof of Theorem~\ref{thm:line-voting-rule}.

\begin{lemma}\label{lem:ub-3-le2}
    Let $\instance$ be an election where $a$ and $b$ are the top two candidates in the \majorder.
    If $\midd$ lies between $a$ and $b$, then for any pair of candidates $a'$ and $b'$,
    \[
        \frac{\SC(\{a, b\})}{\SC(\{a', b'\})} \le 2.
    \]
\end{lemma}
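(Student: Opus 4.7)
The plan is to exploit the special structure that $a_1$ and $a_2$ are the two closest alternatives to the median voter $\midd$. Since $a_1$ is the top of the majority order and $a_2$ is the second, and by Lemma \ref{lem:pm-median-ranklist} the majority order aligns with $\midd$'s ranking of the alternatives, $a_1$ is $\midd$'s first choice and $a_2$ is $\midd$'s second choice. Combined with the hypothesis that $\midd$ lies between $a_1$ and $a_2$, this forces $a_1$ to be the nearest alternative to $\midd$ on its side and $a_2$ to be the nearest on the opposite side. Consequently, by Lemma \ref{lem:relative-sc}, any alternative $o$ on the same side of $\midd$ as $a_i$ satisfies $\SC(o) \geq \SC(a_i)$, and by Corollary \ref{col:opt-position} the single-winner optimal lies in $\{a_1, a_2\}$.

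I would then proceed by a case analysis on the locations of $o_1$ and $o_2$ relative to $\midd$. First, if $o_1$ and $o_2$ lie on opposite sides of $\midd$, the monotonicity yields $\SC(o_1) + \SC(o_2) \geq \SC(a_1) + \SC(a_2)$, so the ratio is at most $1$. Second, if both $o_1, o_2$ lie on the same side as $a_2$, then $\SC(o_i) \geq \SC(a_2)$ for each $i$, giving $\SC(o_1) + \SC(o_2) \geq 2\SC(a_2)$; since $a_1$ defeats $a_2$ in the majority order, Lemma \ref{lem:ratio-alters} gives $\SC(a_1) \leq 3\SC(a_2)$, and so $\SC(a_1) + \SC(a_2) \leq 4\SC(a_2) \leq 2(\SC(o_1) + \SC(o_2))$.

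The main obstacle is the remaining case, where both $o_1$ and $o_2$ lie on the same side as $a_1$. Here the naive analogue $\SC(a_2) \leq 3\SC(a_1)$ can fail because $a_2$ does not defeat $a_1$. To handle this, I would exploit the fact that, since $a_2$ is the second closest alternative to $\midd$, every other alternative on $a_1$'s side lies at distance at least $d(\midd, a_2)$ from $\midd$. By Lemma \ref{lem:relative-sc} this forces $\SC(o_i)$ to be at least the cost of a hypothetical ``mirror'' alternative placed on $a_1$'s side at distance $d(\midd, a_2)$. Using the explicit expression from Lemma \ref{lem:relative-sc}, both $\SC(a_2)$ and this mirror cost decompose as $\SC(\midd) + d(\midd, a_2)$ plus voter-contribution terms from symmetric strips around $\midd$; the median property, which guarantees a balance of voters on the two sides, then controls the difference between these terms and lets us conclude $\SC(a_1) + \SC(a_2) \leq 2(\SC(o_1) + \SC(o_2))$.

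The key technical hurdle is precisely this last case: juxtaposing the symmetric decomposition of Lemma \ref{lem:relative-sc} with the constraint that at least half the voters (including $\midd$) sit on each side of $\midd$, so as to show that the forced increase of $\SC(o_i)$ above $\SC(a_1)$ is large enough to absorb the possibly substantial gap $\SC(a_2) - \SC(a_1)$, which is otherwise not bounded by Lemma \ref{lem:ratio-alters}.
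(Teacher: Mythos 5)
Your case split on where $o_1,o_2$ sit relative to $\midd$ is a genuinely different decomposition from the paper's, which instead reduces to the optimal committee with $o_1=a_1$, rewrites the ratio as $(\SC(\midd)+\SC(a_2))/(\SC(\midd)+\SC(o_2))$, and then moves all voters to a two-point configuration before computing. Your Cases~1 and~2 are correct as stated; in particular, invoking $|V_{a_1\succ a_2}|\ge n/2$ with Lemma~\ref{lem:ratio-alters} to get $\SC(a_1)\le 3\SC(a_2)$ is the right tool for Case~2.

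The gap is in Case~3 — exactly the step you flag as the hurdle. The mirror comparison you sketch (reflect $a_2$ to a hypothetical $a_2''$ on $a_1$'s side at distance $d(\midd,a_2)$ and compare $\SC(a_2)$ to $\SC(a_2'')$ via the decomposition in Lemma~\ref{lem:relative-sc}) does not control the gap on its own: the two voter-sum terms in that decomposition live on \emph{opposite} sides of $\midd$, and the median balance gives no relation between them — e.g.\ if many voters sit just to the right of $\midd$ while the interval between $a_2''$ and $\midd$ contains no voter, then $\SC(a_2)-\SC(a_2'')$ is of order $n\cdot d(\midd,a_2)$. What does close the case is a cruder pair of bounds that never compares the two strips. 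Let $t=d(\midd,a_2)$. By the triangle inequality applied voter-by-voter, $\SC(a_2)\le \SC(\midd)+nt$. With $d(\midd,o_1)\le d(\midd,o_2)$ and $o_1\ne o_2$ both on $a_1$'s side, the farther one satisfies $d(\midd,o_2)\ge t$; since at least $\lceil n/2\rceil$ voters lie at or to the right of $\midd$, each at distance $\ge d(\midd,o_2)$ from $o_2$, we get $2\SC(o_2)\ge n\,d(\midd,o_2)\ge nt$. Finally $\SC(a_1)\le\SC(o_1)$ and $\SC(\midd)\le\SC(o_1)$ by Lemma~\ref{lem:relative-sc}. Chaining these,
\[
\SC(a_1)+\SC(a_2)\;\le\;\SC(o_1)+\SC(\midd)+nt\;\le\;2\SC(o_1)+2\SC(o_2).
\]
So your overall structure is sound — and arguably more elementary than the paper's voter-movement argument — but the mirror comparison needs to be replaced by the triangle-inequality/median-count bounds above; as written it leaves the key case unproved.
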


\begin{proof}
    Fix an instance $\instance$. Without loss of generality, assume that $x_{a} \le x_{b}$.
    It suffices to prove the lemma for the case that $\{ a', b'\} = \{\optcandidate_1, \optcandidate_2\} $ is the optimal committee of size 2.
    Note that by Corollary \ref{col:opt-position}, the optimal candidate, $\optcandidate_1$, is either $a$ or $b$. 
    Without loss of generality, assume that $a = \optcandidate_1$. If $b \neq \optcandidate_2$, by Lemma \ref{lem:relative-sc}, $x_{\optcandidate_2} \le x_{a} \le x_{b}$ and also $\SC(a) \ge \SC(\midd)$. Therefore,
    \begin{align}\label{eq:ub-3-le2}
    \begin{split}
        \dfrac{\SC(\{a, b\})}{\SC(\{a, \optcandidate_2\})} &\le \dfrac{\SC(\{\midd, b\})}{\SC(\{\midd, \optcandidate_2\})}\\
        & = 1 + \dfrac{\SC(b) - \SC(\optcandidate_2)}{\SC(\midd) + \SC(\optcandidate_2)}.
    \end{split}
    \end{align}
    Next, we transform instance $\instance$ to a new instance $\instance'$, where the positions of voters are changed by a metric $d'$ in which the ratio in Equation \eqref{eq:ub-3-le2} does not decrease. This new instance will help us to upper bound the ratio in Equation \eqref{eq:ub-3-le2}. The new positions, changed by a metric $d'$ are defined as follows:
    \begin{equation}\label{move_ub-3-le2}
        x_i' =
        \begin{cases}
        x_{\optcandidate_2} & \text{ if } x_i \in \left( -\infty, x_{\midd} \right)\\
        x_{\midd} & \text{ if } x_i \in \left[x_{\midd}, + \infty\right).
        \end{cases}
    \end{equation}
    Note that in comparison to $\instance$, the voters in $\instance'$ are moved to new locations according to Equation \eqref{move_ub-3-le2}. Therefore, there are two types of movements. We show that none of these movements reduce the ratio in Equation \eqref{eq:ub-3-le2}.
    \begin{itemize}
        \item $\mathbf{\left( -\infty, x_{\midd} \right) \rightarrow x_{\optcandidate_2}}$: If at first we move every voter in $(-\infty, x_{\optcandidate_2})$ to $x_{\optcandidate_2}$, this move does not change the value of $\SC(b) - \SC(\optcandidate_2)$, but decreases $\SC(\midd) + \SC(\optcandidate_2)$. 
        Afterwards, if we move all voters in $(x_{\optcandidate_2}, x_{\midd})$ to $x_{\optcandidate_2}$, the new value of $\SC(\midd) + \SC(\optcandidate_2)$ remains intact, while the value of $\SC(b) - \SC(\optcandidate_2)$ is increased. Therefore, this move does not decrease the total ratio in Equation \eqref{eq:ub-3-le2}.
    
        \item $\mathbf{\left[x_{\midd}, +\infty \right) \rightarrow x_{\midd}}$: If at first we move every voter in $(x_{b}, + \infty)$ to $x_{b}$, this move does not change the value of $\SC(b) - \SC(\optcandidate_2)$, but decreases $\SC(\midd) + \SC(\optcandidate_2)$. 
        Afterwards, if we move all voters in $\left(x_{\midd}, x_{b}\right]$ to $x_{\midd}$, the new value of $\SC(\midd) + \SC(\optcandidate_2)$ is decreased, while the value of $\SC(b) + \SC(\optcandidate_2)$ is unchanged. Therefore, this move does not reduce the total ratio in Equation \eqref{eq:ub-3-le2}.
    \end{itemize}
    Now it remains to calculate the ratio in the new instance $\instance'$. Note that since $b$ appears earlier than $\optcandidate_2$ in the \majorder, $|V_{b \succ \optcandidate_2}| \ge n/2$, see Figure \ref{fig:ub-3-le2}. For brevity, let $s = d(\optcandidate_2, \midd)$ and $r = d(\midd, b)$. Note that $s \ge r$. Therefore,
    \begin{align*}
        \dfrac{\SC(\{a, b\})}{\SC(\{a, \optcandidate_2\})} &\le \dfrac{\SC(\{\midd, b\})}{\SC(\{\midd, \optcandidate_2\})}\\
        & \le \dfrac{\frac{n}{2} (s + r + (s + r))}{ns} \\
        & \le 2 .
    \end{align*}
    
    \begin{figure}[t]
        \centering
        \begin{tikzpicture}[line cap=round,line join=round,>=triangle 45,x=1cm,y=1cm]
        
        \draw [line width=2pt,color=black] (-3,0)-- (3,0);

        \draw (0,1) node[anchor=center] {$|\voters(\midd)|\ge \frac{n}{2}$};
        \draw (-3,1) node[anchor=center] {$|\voters(\optcandidate_2)| \le \frac{n}{2}$};
    
        \draw [fill=black] (3,0) circle (1.5pt);
        \draw[color=black] (3,-0.5) node {$b$};
        
        \draw [fill=black] (-3,0) circle (1.5pt);
        \draw[color=black] (-3,- 0.45) node {$\optcandidate_2$};
        
        \draw [fill=black] (0,0) circle (1.5pt);
        \draw[color=black] (0,-0.5) node {$\midd$};

         \draw [fill=black] (-1.7,0) circle (1.5pt);
        \draw[color=black] (-1.7,-0.5) node {$\optcandidate_1 = a$};

        \draw[->]  (0, 0.7) to (0, 0.1);
        \draw[->]  (-3, 0.7) to (-3, 0.1);
        
        \end{tikzpicture}
        \caption{
        If $a$ and $b$ are the top two candidates in the \majorder and $\{\optcandidate_1, \optcandidate_2\}$ is the optimal committee of size 2, by assuming that $a = \optcandidate_1$ and moving the location of voters, one can prove that 
        $\dist(\{a, b\}) \le 2$.
        }
        \label{fig:ub-3-le2}
    \end{figure}

\end{proof}

\begin{sproof}[Sketch of Proof of Theorem \ref{thm:line-voting-rule}]
We analyze different cases of the optimal committee and the output of \(f\). In some cases, we apply Lemmas \ref{lem:ratio-alters} and \ref{lem:ub-3-le2} to establish an upper bound on the distortion. In other cases, we identify the \focalpoint and move voters according to Lemma \ref{lem:move-voters-with-prop} to construct a new instance, for which the social cost ratio does not exceed \(1 + \sqrt{2}\).
\end{sproof}

Now, we present the proof of Theorem \ref{thm:line-voting-rule}, which establishes the upper bound of \(1 + \sqrt{2}\) on the distortion of the $2$-Winner Polar Comparison Rule.
\begin{proof}[Proof of Theorem \ref{thm:line-voting-rule}]
To prove the theorem, we begin by assuming the contrary, i.e., there exists an instance $\instance = (\voters, \candidates, 2, \succ)$ and metric $d$, such that $\dist(f(\succ), \instance; d) > 1 + \sqrt{2}$. Assume the top two candidates in the \majorder are $a$ and $b$, and without loss of generality, $x_a \leq x_b$. Let $c$ be the nearest candidate to $a$ that is not Pareto-dominated by any other candidate and satisfies $x_c \leq x_a$.

If such a candidate $c$ does not exist, it implies that $a$ and $b$ Pareto-dominate all candidates to the left of $a$. Therefore, by Lemma \ref{lem:pm-median-ranklist}, either $\{a, b\}$ is the optimal committee, or $b$ is the optimal candidate. In the latter case, by Corollary \ref{col:pm<3opt}, $\SC(a) \leq 3 \SC(b)$. Thus,

\[
\dist(f(\succ), \instance; d) = \frac{\SC(a) + \SC(b)}{2\SC(b)} \leq 2 < 1 + \sqrt{2}.
\]
Therefore, we assume that such a candidate $c$ exists. By Corollary \ref{col:opt-position}, we observe that $\optcandidate_1 \in \{a, b, c\}$.
Furthermore, by Lemma \ref{lem:relative-sc}, either $\optcandidate_2 \in \{a, b, c\}$, or we can lower bound $\SC(\opt)$ by one of $2\SC(b)$ and $2\SC(c)$. Thus, we can consider the following cases for the optimal candidates:
    \[ (x_{\optcandidate_1}, x_{\optcandidate_2}) \in \{ (x_b, x_b), (x_c, x_c), (x_a, x_c), (x_c, x_a), (x_a, x_b), (x_b, x_a)\}. \]

    We evaluate the distortion of the voting rule on a case-by-case basis and establish that it is always at most \(1+\sqrt{2}\). Throughout the proof, we denote by \(\xbc\) the midpoint of \(x_b\) and \(x_c\), and by \(\xab\) the midpoint of \(x_a\) and \(x_b\). We furthermore introduce a parameter \(\alpha\), which we fix to \(\sqrt{2}\), since this choice minimizes the upper bound on the distortion in every case.

    \begin{enumerate}[label=\textbf{Case.\arabic*},ref=Case \arabic*, align=left]
    
\item \label{case:1} $(x_{\optcandidate_1}, x_{\optcandidate_2}) = (x_b,x_b)$.

\begin{proofcase}
    Our voting rule, $f$, selects $a$ and then chooses between $c$ and $b$.
    Since $b$ is the optimal candidate, by Corollary \ref{col:opt-position}, the median point, $\midd$, must lie between $a$ and $b$. Therefore, for the case that our voting rule selects $a$ and $b$, by Lemma \ref{lem:ub-3-le2}, we have:
    \begin{equation*}
    \dist(f(\succ), \instance; d) \leq 2 < 1 + \sqrt{2}.
    \end{equation*}
    Now consider the case that $f$ has chosen $a$ and $c$. By the construction of Algorithm \ref{alg:2-winner-voting-rule}, this happens when $|V_{c\succ b}| > n/(1 +\alpha)$ and $|V_{b \succ a}| < n/(1 + \alpha)$. Therefore, by Observation \ref{obs:set-of-voters-diff-view}, we have that $\left|\voters((-\infty, \xbc))\right| > n/(1 +\alpha)$, and $\left|\voters((-\infty, \xab))\right| \geq n\alpha/(1 + \alpha)$. The distortion of $f$ for instance $\instance$ in this case is:
    \[\dist(f(\succ), \instance; d ) = \frac{\SC(a) + \SC(c)}{2\SC(b)}.\]
    Note that for this case, based on Algorithm \ref{alg:move-voters-finding-point}, the \focalpoint would be the point $x_b$.
     Here, again, we consider two cases. The first is when $x_a$ resides between $\xbc$ and $x_b$, which means $d(c, a) \ge d(a, b)$.
    Consider the instance $\instance' = (\voters, \candidates, 2, \succ')$, and metric $d'$, where there are at least $n/(1+\alpha)$ voters on $\xbc$, at least $n(\alpha - 1)/(1 + \alpha)$ voters on $\xab$, and at most $n/(1+\alpha)$ voters on $x_b$, see Figure \ref{fig:ub(b,b)_1}.
    
    Since
    $\dist(f(\succ), \instance; d ) > 1 + \alpha$,
    $\left|\voters((-\infty, \xbc))\right| > n/(1 +\alpha)$,
    $\left|\voters((-\infty, \xab))\right| \geq n\alpha/(1 + \alpha)$,
    and $x^* = x_b$, based on Lemma \ref{lem:move-voters-with-prop}, we have
    $$\dist(f(\succ), \instance'; d' ) > 1 + \alpha.$$

    \begin{figure}[t]
        \centering
        \begin{tikzpicture}[line cap=round,line join=round,>=triangle 45,x=1cm,y=1cm]
        
        \draw (3,1) node[anchor=center ] {$|\voters(x_b)|\le\frac{n}{1 + \alpha}$};
        \draw (0,1) node[anchor=center] {$|\voters(\xbc)| \ge \frac{n}{1 + \alpha}$};
        \draw (1.875,1.8) node[anchor=center] {$|\voters(\xab)|\ge\frac{n(\alpha - 1)}{1 + \alpha}$};
        
        \draw [line width=2pt] (-3,0)-- (3,0);
        
        \draw [fill=black] (-3,0) circle (2pt);
        \draw[color=black] (-3,-0.5) node {$c$};
        \draw [fill=black] (3,0) circle (2pt);
        \draw[color=black] (3,-0.5) node {$b$};
        \draw [fill=black] (0,0) circle (2pt);
        \draw[color=black] (0,-0.5) node {$\xbc$};
        \draw [fill=black] (0.75,0) circle (2pt);
        \draw[color=black] (0.75,-0.5) node {$a$};
        \draw [fill=black] (1.875,0) circle (2pt);
        \draw[color=black] (1.875,-0.5) node {$\xab$};
        
        \draw[->]  (0, 0.7) to (0, 0.1);
        \draw[->]  (1.875, 1.5) to (1.875, 0.1);
        \draw[->]  (3, 0.7) to (3, 0.1);
        
        \end{tikzpicture}\caption{
        The instance for \ref{case:1}, when the optimal candidates are both located on $x_b$ and $d(c, a) \ge d(a, b)$. The output committee is $\{ a, c\}$. In the new metric $d'$, the voters are distributed across three locations, as illustrated in the figure.}
        \label{fig:ub(b,b)_1}
    \end{figure}

    Now it remains to calculate distortion of committee $\{a, c\}$ in $\instance'$. For brevity, let $r=d(a, \xbc)$, $s = d(\xab, a)$. Note that $d(\xab, b ) = s$ and $d(\xbc, c) = 2s + r$. We can calculate $\dist(f(\succ), \instance'; d')$ as follows:
        
    \begin{align*}
        \dist(f(\succ), \instance'; d') &= \frac{\SC(a; d') + \SC(c; d')}{2\SC(b; d')}\\ 
        &= \dfrac{\frac{2s}{1+\alpha} + \frac{s(\alpha - 1)}{1+\alpha} + \frac{r}{1+\alpha} + (2s+r)+\frac{(r+s)\alpha}{1+\alpha} + \frac{s}{1+\alpha}}{2(\frac{s\alpha}{1+\alpha} + \frac{r+s}{1+\alpha})}\\
        &=\dfrac{4s + 2r}{2(s + \frac{r}{1+\alpha})}\\
        &\leq 1 + \alpha\\
        & = 1 + \sqrt{2}.
    \end{align*}
        
    This completes the proof for the subcase that $x_a$ resides between $\xbc$ and $x_b$.

    In the other subcase, the output of our voting rule is the committee $\{a,c\}$, but $x_a$ is located on the left side of $\xbc$, indicating that $d(c, a) < d( a, b)$. We again consider the instance $\instance' = (\voters, \candidates, 2, \succ')$ and metric $d'$, where there are at least $n/(1+\alpha)$ voters on $\xbc$, at least $n(\alpha - 1)/(1 + \alpha)$ voters on $\xab$, and at most $n/(1+\alpha)$ voters on $x_b$, see Figure \ref{fig:ub(b,b)_2}. 
    Since $\dist(f(\succ), \instance; d ) > 1 + \alpha$,
    also $\left|\voters((-\infty, \xbc))\right| > n/(1 +\alpha)$,
    $\left|\voters((-\infty, \xab))\right| \geq n\alpha/(1 + \alpha)$,
    and $x^* = x_b$, based on Lemma \ref{lem:move-voters-with-prop}, we have 
    $$\dist(f(\succ), \instance'; d' ) > 1 + \alpha.$$
        
    Now it remains to calculate distortion of committee $\{a, c\}$ in $\instance'$ and $d'$. For brevity, let $s = d(\xab, b)$, and $r = d(\xbc, a)$. One can conclude that $d(a, \xab) = s$, and $d(c, \xbc) = 2s - r$. Therefore,
    \begin{align*}
        \dist(f(\succ), \instance'; d') &= \frac{\SC(a; d') + \SC(c; d')}{2\SC(b; d')}\\ 
        & \leq \dfrac{\left( r + (s-r)(\frac{\alpha}{1 + \alpha}) + s(\frac{1}{1 + \alpha})\right) \times 2 + (2s - 2r)}{2(s(\frac{\alpha}{1 + \alpha})+ (s-r)\frac{1}{1 + \alpha})}\\
        & = \dfrac{(s-r)\frac{\alpha}{1+\alpha} + \frac{s}{1 + \alpha} + s}{s(\frac{\alpha}{1 + \alpha}) + (s-r)\frac{1}{1+\alpha}} \\
        &\leq \dfrac{(s-r) + s\alpha}{(s-r)\frac{1}{1+\alpha} + s(\frac{\alpha}{1 + \alpha})} \\
        & = 1 + \alpha\\
        &= 1 + \sqrt{2}.
\end{align*}
    This completes the proof for this case.

\begin{figure}[t]
\centering
\begin{tikzpicture}[line cap=round,line join=round,>=triangle 45,x=1cm,y=1cm]

\draw (3,1) node[anchor=center] {$|\voters(x_b)|\le\frac{n}{1 + \alpha}$};
\draw (0,1) node[anchor=center] {$|\voters(\xbc)|\ge\frac{n}{1 + \alpha}$};
\draw (1.5,1.8) node[anchor=center] {$|\voters(\xab)|\ge \frac{n(\alpha - 1)}{1 + \alpha}$};

\draw [line width=2pt] (3,0)-- (-3,0);

\draw [fill=black] (-3,0) circle (2pt);
\draw[color=black] (-3,-0.5) node {$c$};
\draw [fill=black] (3,0) circle (2pt);
\draw[color=black] (3,-0.5) node {$b$};
\draw [fill=black] (0,0) circle (2pt);
\draw[color=black] (0,-0.5) node {$\xbc$};
\draw [fill=black] (-0.5,0) circle (2pt);
\draw[color=black] (-0.5,-0.5) node {$a$};
\draw [fill=black] (1.5,0) circle (2pt);
\draw[color=black] (1.5,-0.5) node {$\xab$};

\draw[->]  (0, 0.7) to (0, 0.1);
\draw[->]  (1.5, 1.5) to (1.5, 0.1);
\draw[->]  (3, 0.7) to (3, 0.1);

\end{tikzpicture}\caption{
A modified instance for \ref{case:1}, when the optimal candidates are $x_b$ and $d(c, a) < d(a, b)$. The output committee is $\{ a, c\}$. In the new metric $d'$, voters are distributed across three locations, as illustrated in the figure.}
\label{fig:ub(b,b)_2}
\end{figure}

\end{proofcase}

\item \label{case:2} $(x_{\optcandidate_1}, x_{\optcandidate_2}) = (x_c, x_c)$.

\begin{proofcase}
    Our voting rule first selects $a$ and then chooses between $c$ and $b$. The case that $\{a, c\}$ are selected is similar to \ref{case:1} and results in distortion at most 2. For the case that $\{a, b\}$ is selected, we know that $|V_{c \succ b}| \le n/(1 + \alpha)$. Furthermore, by Corollary \ref{col:opt-position}, $\midd$ must lie between $c$ and $a$. Thus, by Lemma \ref{lem:relative-sc}, $\SC(a) \le \SC(b)$. Therefore, the distortion of instance $\instance$ in this case is
    \begin{align*}
        \dist(f(\succ), \instance; d) &= \dfrac{\SC(a) + \SC(b)}{2 \SC(c)} \\
        &\le \dfrac{2 \SC(b)}{2\SC(c)} \\
        &= \dfrac{\SC(b)}{\SC(c)}.
    \end{align*}
    Since $|V_{c \succ b}| \le n/(1 + \alpha)$, we have $|V_{b \succ c}| > n\alpha/(1 + \alpha)$. Thus, based on Lemma \ref{lem:ratio-alters}, 
    \begin{align*}
    \dist(f(\succ), \instance; d) &\le \frac{\SC(b)}{\SC(c)} \\
    &\leq \dfrac{\ \ \ 2n\ \ \ }{\frac{n\alpha}{1+\alpha}} - 1 \\
    &= 1 + \frac{2}{\alpha} \\
    &= 1 + \sqrt{2}.
    \end{align*}

\end{proofcase}

\item \label{case:3} $(x_{\optcandidate_1}, x_{\optcandidate_2}) \in \{(x_a, x_b), (x_b, x_a) \}$.

\begin{proofcase}
    The outcome would be optimal if our voting rule chooses $b$ over $c$. Thus, assume that the output of our voting rule is $\{a, c\}$. By the construction of Algorithm \ref{alg:2-winner-voting-rule}, this happens when $|V_{c \succ b}| > n/(1 + \alpha)$. Therefore, by Observation \ref{obs:set-of-voters-diff-view}, we have that $\left|\voters((-\infty, \xbc))\right| > n/(1 +\alpha)$. In this instance, the distortion would be 
    \[\dist(f(\succ), \instance; d) = \dfrac{\SC(a) + \SC(c)}{\SC(a) + \SC(b)} .\]
    Note that for this case, based on Algorithm \ref{alg:move-voters-finding-point}, the \focalpoint would be the point $x_b$.
    Consider the instance $\instance' = (\voters, \candidates, 2, \succ')$, and metric $d'$, where there are at least $n/(1+\alpha)$ voters on $\xbc$, and at most $n\alpha/(1+\alpha)$ voters on $x_b$, see Figure \ref{fig:ub(a,b)}. Since $\dist(f(\succ), \instance; d ) > 1 + \alpha$, $\left|\voters((-\infty, \xbc))\right| > n/(1 +\alpha)$, and $x^* = x_b$, based on Lemma \ref{lem:move-voters-with-prop}, we can conclude that 
    $$\dist(f(\succ), \instance'; d' ) > 1 + \alpha.$$

Now, it remains to calculate the distortion of committee $\{a, c\}$. For brevity, let $s = d(a, b)$, and $r = d(\xbc, a)$. Note that $d(c, \xbc) = r + s$.
Therefore,
\begin{align*}
\dist(f(\succ), \instance'; d') &= \dfrac{\SC(a; d') + \SC(c; d')}{\SC(a; d') + \SC(b; d')} \\
&\le \frac{s\alpha + r + (r + s)(2\alpha + 1)}{s\alpha + r + r + s} \\
&= 1 + \dfrac{2(r +s)\alpha}{s(\alpha + 1) + 2r} \\
& \le 1 + \dfrac{2(r +s) \alpha}{2s + 2r} \\
&= 1 + \alpha\\
& = 1 + \sqrt{2},
\end{align*}
which completes the proof.

\begin{figure}[t]
\centering
\begin{tikzpicture}[line cap=round,line join=round,>=triangle 45,x=1cm,y=1cm]

\draw (3,1) node[anchor=center] {$|\voters(x_b)|\le \frac{n\alpha}{1 + \alpha}$};
\draw (0,1) node[anchor=center] {$|\voters(\xbc)|\ge\frac{n}{1 + \alpha}$};

\draw [line width=2pt] (3,0)-- (-3,0);

\draw [fill=black] (-3,0) circle (2pt);
\draw[color=black] (-3,-0.5) node {$c$};
\draw [fill=black] (3,0) circle (2pt);
\draw[color=black] (3,-0.5) node {$b$};
\draw [fill=black] (0,0) circle (2pt);
\draw[color=black] (0,-0.5) node {$\xbc$};
\draw [fill=black] (0.5,0) circle (2pt);
\draw[color=black] (0.5,-0.5) node {$a$};

\draw[->]  (0, 0.7) to (0, 0.1);
\draw[->]  (3, 0.7) to (3, 0.1);

\end{tikzpicture}\caption{
A modified instance for \ref{case:3}. The optimal candidates are located on $(x_a, x_b)$. The output committee is $\{ a, c\}$. In the new metric $d'$, the voters are distributed across two locations, as illustrated in the figure.}
\label{fig:ub(a,b)}
\end{figure}

\end{proofcase}

\item \label{case:4} $(x_{\optcandidate_1}, x_{\optcandidate_2}) \in \{ (x_a, x_c), (x_c, x_a)\}$.

\begin{proofcase}
    The outcome would be optimal if our voting rule chooses $c$ over $b$. Thus, assume that the output of our voting rule is $\{a, b\}$. By the construction of the Algorithm \ref{alg:2-winner-voting-rule}, this happens either when 
    $|V_{c \succ b}| \le n/(1 + \alpha)$, or when $|V_{c \succ b}| > n/(1 + \alpha)$ and $|V_{b\succ a}| \ge n/(1 + \alpha)$. For now, let us consider the first subcase that $|V_{c \succ b}| \le n/(1 + \alpha)$. 
    
    In this instance, the distortion would be:
    \begin{align*}
        \dist(f(\succ), \instance; d) &= \dfrac{\SC(a) + \SC(b)}{\SC(a) + \SC(c)}\\
        &\leq \frac{\SC(b)}{\SC(c)}\\
        &\leq \dfrac{\ \ \ 2n\ \ \ }{\frac{n\alpha}{1+\alpha}} - 1 \\
        &= 1 + \dfrac{2}{\alpha} \\
        & = 1 + \sqrt{2}.
    \end{align*}
    Note that the second inequality follows from Lemma \ref{lem:ratio-alters}, since $|V_{b \succ c}| \geq n\alpha/(1 + \alpha)$.

    Now consider the second subcase, where $|V_{c \succ b}|>n/(1 + \alpha)$, and also $|V_{b\succ a}| \ge n/(1 + \alpha)$. By Observation \ref{obs:set-of-voters-diff-view}, it follows that $\left|\voters(\left[\xab, \infty\right))\right| \geq n/(1 + \alpha)$. Furthermore, since $b \succ_\midd c$, thus $\left|\voters(\left[\xbc, \infty\right))\right| \geq n/2$.
    In this case, the output of our voting rule is $\{a, b\}$, and based on Algorithm \ref{alg:move-voters-finding-point}, the \focalpoint would be the point $x_c$. 
    Consider the instance $\instance' = (\voters, \candidates, 2, \succ')$, and metric $d'$, where $|\voters(x_c)| \le n/2$, $|\voters(\xab)| \ge n/(1 + \alpha)$, and $|\voters(\xbc)| \ge n\left(1/2 - 1/(1 + \alpha)\right)$, see Figure \ref{fig:ub(a,c)_2}.
    Since $\dist(f(\succ), \instance; d ) > 1 + \alpha$, $\left|\voters(\left[\xab, \infty\right))\right| \geq n/(1 + \alpha)$, $\left|\voters(\left[\xbc, \infty\right))\right| \geq n/2$, and $x^* = x_c$, based on Lemma \ref{lem:move-voters-with-prop}, we have 
    $$\dist(f(\succ), \instance'; d' ) > 1 + \alpha.$$

    Now it remains to calculate distortion of committee $\{a, b\}$ in $\instance'$. For brevity, let $r = d(a, c)$, $t = d(a, \xbc)$, and $s = d(\xbc, \xab)$. Note that $d(\xab, b) = s + t$. Therefore,
    \begin{align*}
        \dist&(f(\succ), \instance'; d') = \dfrac{\SC(a; d') + \SC(b; d')}{\SC(a; d') + \SC(c; d')}\\
        &\le \dfrac{n(s+t) + n(\frac{1}{2} - \frac{1}{1 + \alpha})s + \frac{n}{2}(r + s + t) + \frac{nr}{2} + \frac{nt}{2} + \frac{ns}{1 + \alpha}}{\frac{nr}{2} + \frac{nt}{2} + \frac{ns}{1 + \alpha} + \frac{n(r+t)}{2} + \frac{ns}{1 + \alpha}}\\
        & \le \frac{2s + 2t +r}{r + t + \frac{2s}{1 + \alpha}} \\
        &\le 1 + \alpha\\
        & = 1 + \sqrt{2}.
    \end{align*}
    This completes the proof.

    \begin{figure}[t]
        \centering
        \begin{tikzpicture}[line cap=round,line join=round,>=triangle 45,x=1cm,y=1cm]
        
        \draw [line width=2pt] (3,0)-- (-3,0);
        
        \draw (-3,1) node[anchor=center] {$|\voters(x_c)|\le\frac{n}{2}$};
        \draw (1.5,1) node[anchor=center] {$|\voters(\xab)|\ge\frac{n}{1 + \alpha}$};
        \draw (0,1.7) node[anchor=center] {$|\voters(\xbc)|\ge n \left( \frac{1}{2} - \frac{1}{1 + \alpha} \right)$};

        \draw [fill=black] (-3,0) circle (2pt);
        \draw[color=black] (-3,-0.5) node {$c$};
        \draw [fill=black] (3,0) circle (2pt);
        \draw[color=black] (3,-0.5) node {$b$};
        \draw [fill=black] (-0.5,0) circle (2pt);
        \draw[color=black] (-0.5,-0.5) node {$a$};
        \draw [fill=black] (1.5,0) circle (2pt);
        \draw[color=black] (1.5,-0.5) node {$\xab$};
        \draw [fill=black] (0,0) circle (2pt);
        \draw[color=black] (0,-0.5) node {$\xbc$};

        \draw[->]  (1.5, 0.7) to (1.5, 0.1);
        \draw[->]  (-3, 0.7) to (-3, 0.1);
        \draw[->]  (0, 1.4) to (0, 0.1);
        
        \end{tikzpicture}\caption{
        A modified instance for \ref{case:4}, when the optimal candidates are located on $(x_a, x_c)$, $|V_{c \succ b}| > n/(1 + \alpha)$, and also $|V_{b \succ a} | \ge n/(1 + \alpha)$. The output committee is $\{ a, b\}$. In the new metric $d'$, the voters are distributed across three locations, as illustrated in the figure.}
        \label{fig:ub(a,c)_2}
    \end{figure}

\end{proofcase}
        
\end{enumerate}

\end{proof}

We now show that any $2$-winner voting rule in the line metric, under a utilitarian additive social cost function, must have a distortion of at least $1 + \sqrt{2}$. This result matches our upper bound, making the distortion bound tight for $2$-winner voting.

\begin{theorem}\label{thm:lb2.41}
    Any $2$-winner voting rule in the line metric and under additive social cost, has a distortion of at least $1+ \sqrt{2}$.
\end{theorem}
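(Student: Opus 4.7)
The plan is to exhibit a single preference profile $\sigma$ with only three non-Pareto-dominated alternatives $c, a, b$ on the line (with $a$ between $c$ and $b$), together with two consistent metrics $d_1, d_2 \rhd \sigma$ whose optimal $2$-committees differ. Specifically, I will arrange things so that the optimum under $d_1$ is $\{a,b\}$ and the optimum under $d_2$ is $\{a,c\}$, and so that in each metric every non-optimal committee has distortion at least $1+\sqrt{2}$. Since a deterministic rule $f$ returns a single committee $f(\sigma)$ independently of the metric, $f(\sigma)$ must be suboptimal in at least one of the two instances, forcing $\dist(f) \geq 1+\sqrt{2}$.

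To calibrate the construction I would use exactly the thresholds that drove Algorithm~\ref{alg:2-winner-voting-rule}: a fraction $\tfrac{1}{1+\sqrt{2}}$ of the voters lying to the $c$-side of the midpoint $x_{bc}$, and a fraction $\tfrac{\sqrt{2}}{1+\sqrt{2}}$ lying to the $c$-side of the midpoint $x_{ab}$. This determines the voter ranking profile uniquely (three blocks, with preferences $c \succ a \succ b$, $a \succ c \succ b$, $a \succ b \succ c$, or their mirror image, induced by which midpoints they fall between). The two metrics $d_1,d_2$ share this profile but differ in the scale: in $d_1$ alternative $c$ is placed far enough to the left of $a$ (keeping voter positions inside the same midpoint intervals) that $\SC(c)$ blows up and $\{a,b\}$ becomes optimal, while in $d_2$ alternative $b$ is placed far enough to the right of $a$ that $\SC(b)$ blows up and $\{a,c\}$ becomes optimal. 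In each metric, taking the above fractions exactly at the threshold makes the ratio $\SC(\{a,c\})/\SC(\{a,b\})$ equal $1+\sqrt{2}$ in $d_1$ and $\SC(\{a,b\})/\SC(\{a,c\})$ equal $1+\sqrt{2}$ in $d_2$, running the arithmetic from the proof of Theorem~\ref{thm:line-voting-rule} in reverse at the tight case.

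The main obstacle will be verifying that the \emph{same} profile $\sigma$ is simultaneously consistent with both $d_1$ and $d_2$; this boils down to checking that in both metrics the midpoints $x_{bc}$ and $x_{ab}$ separate the voter blocks identically, which constrains the relative positions of $a,b,c$ to lie in a specific range but leaves enough freedom to tune the distortion to exactly $1+\sqrt{2}$. The remaining case to dispose of is the third possible committee $\{b,c\}$, which is easy: in $d_1$ it inherits the inflated contribution of $c$ and in $d_2$ it inherits the inflated contribution of $b$, so its distortion exceeds $1+\sqrt{2}$ in both instances and no rule could escape the lower bound by outputting $\{b,c\}$. Combining these three observations with the observation that $f(\sigma)$ must pick one of $\{a,b\}, \{a,c\}, \{b,c\}$ (any other committee is at least as bad since all other alternatives are Pareto-dominated) completes the argument.
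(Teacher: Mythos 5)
Your plan is recognizably a ``two metrics, one profile'' argument, which is the same high-level shape as the paper's proof, but the concrete construction is genuinely different and — as written — has a gap that you yourself flag but do not close. The paper's instance uses \emph{four} alternatives sitting at only \emph{two} locations ($a,b$ at $+1$ and $a',b'$ at $-1$), with two voter blocks placed among $\{-1,0,+1\}$. Because a block at $0$ is equidistant from both locations, its ranking is a free tie-break, so the same profile $\sigma$ is trivially consistent with both placements; and because the three relevant committees $\{a,b\}$, $\{a',b'\}$, $\{a,a'\}$ reduce to ``both left,'' ``both right,'' and ``one of each,'' the arithmetic collapses to computing $1+2n_i/n_j$ for the two block sizes $n_1 = n/(1+\sqrt2)$, $n_2=n\sqrt2/(1+\sqrt2)$. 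Consistency never needs to be argued.

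Your construction, by contrast, has three \emph{distinct} alternatives $c<a<b$ and requires the optimum to flip between $\{a,b\}$ (in $d_1$) and $\{a,c\}$ (in $d_2$) under the \emph{same} profile. This is exactly the step you label ``the main obstacle,'' and it is not a formality. With distinct alternatives, the profile pins down the fraction $\gamma = |V_{c\succ b}|/n$ in both metrics; by Lemma~\ref{lem:ratio-alters}, $\SC(c)/\SC(b)\le (1+\gamma)/(1-\gamma)$ and $\SC(b)/\SC(c)\le (2-\gamma)/\gamma$ in \emph{every} consistent metric. For the two required ratios $\SC_i(\{a,\cdot\})/\SC_i(\{a,\cdot\})$ to each reach $1+\sqrt2$, you need $\SC(a)$ to be negligible relative to $\SC(b),\SC(c)$ in both metrics; but driving $\SC(a)\to 0$ means concentrating all voters near $a$, which (since $c,a,b$ are distinct) forces all voters into the same block and destroys $\gamma\in(0,1)$. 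Working the numbers with voters near the block boundaries, the exact threshold $\gamma=1/(1+\sqrt2)$ forces $t=d(a,b)\to 0$ in $d_1$, and in $d_2$ the corresponding inequality has the wrong sign altogether. I am not claiming a three-alternative construction is impossible, but it does not fall out of the thresholds of Algorithm~\ref{alg:2-winner-voting-rule} the way you suggest, and the paper's co-located-pairs trick is precisely what removes this obstruction. As submitted, the proposal is a sketch with an open hole at its center; you would need to either resolve the consistency constraint explicitly (with actual positions and a verified $1+\sqrt2$ ratio) or switch to the paper's co-located construction.
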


\begin{proof}
    We construct an instance $\instance = (\voters, \candidates, k, \succ)$ with two different metrics $d_1$ and $d_2$. Let $X$ be a subset of voters of size $n_1 = n/(1 + \sqrt{2})$ whose preference profile is $a' \succ b' \succ a \succ b$. Similarly, let $Y$ be a subset of voters of size $n_2 = n\sqrt{2}/(1 + \sqrt{2})$ with the preference profile $a \succ b \succ a' \succ b'$. In both metrics, assume that candidates $a$ and $b$ reside at point $x = 1$, and candidates $a'$ and $b'$ reside at point $x = -1$. 
    In $d_1$, locate the voters in $X$ at point $x = -1$, and the remaining $n_2$ voters in $Y$ at point $x = 0$, as shown in Figure \ref{fig:lb-2-i1}.
    In $d_2$, locate the voters in $X$ at point $x = 0$, and the remaining $n_2$ voters in $Y$ at point $x = 1$, as shown in Figure \ref{fig:lb-2-i2}.

\begin{figure}[t]
    \centering
    \begin{subfigure}[t]{0.5\textwidth}
        \centering
        \begin{tikzpicture}[line cap=round,line join=round,>=triangle 45,x=1cm,y=1cm]
        
        \draw [line width=2pt,color=black] (-3,0)-- (3,0);

        \draw (0,1) node[anchor=center] {$|\voters(0)|=\frac{n\sqrt{2}}{1 + \sqrt{2}}$};
        \draw (-3,1) node[anchor=center] {$|\voters(-1)|=\frac{n}{1 + \sqrt{2}}$};
    
        \draw [fill=black] (3,0) circle (1.5pt);
        \draw[color=black] (2.5,-0.5) node {$+1, a, b$};
        
        \draw [fill=black] (-3,0) circle (1.5pt);
        \draw[color=black] (-3,- 0.45) node {$-1, a', b'$};
        
        \draw [fill=black] (0,0) circle (1.5pt);
        \draw[color=black] (0,-0.5) node {$0$};

        \draw[->]  (0, 0.7) to (0, 0.1);
        \draw[->]  (-3, 0.7) to (-3, 0.1);
        
        \end{tikzpicture}
        \caption{}
        \label{fig:lb-2-i1}
    \end{subfigure}%
    ~ 
    \begin{subfigure}[t]{0.5\textwidth}
        \centering
        \begin{tikzpicture}[line cap=round,line join=round,>=triangle 45,x=1cm,y=1cm]
        
        \draw [line width=2pt,color=black] (-3,0)-- (3,0);

        \draw (3,1) node[anchor=center] {$|\voters(1)|=\frac{n\sqrt{2}}{1 + \sqrt{2}}$};
        \draw (0,1) node[anchor=center] {$|\voters(0)|=\frac{n}{1 + \sqrt{2}}$};
    
        \draw [fill=black] (3,0) circle (1.5pt);
        \draw[color=black] (3,-0.5) node {$+1, a, b$};
        
        \draw [fill=black] (-3,0) circle (1.5pt);
        \draw[color=black] (-2.5,- 0.45) node {$-1, a', b'$};
        
        \draw [fill=black] (0,0) circle (1.5pt);
        \draw[color=black] (0,-0.5) node {$0$};

        \draw[->]  (0, 0.7) to (0, 0.1);
        \draw[->]  (3, 0.7) to (3, 0.1);
        
        \end{tikzpicture}
        \caption{}
        \label{fig:lb-2-i2}
    \end{subfigure}
    \caption{Instance $\instance$ with two metrics $d_1$  (a) and $d_2$ (b), used for proving the lower bound in $2$-winner election. $a, b, a',$ and $b'$ are the candidates distributed on locations $-1$ and $+1$. 
    }
    \label{Fig:Main}
\end{figure}
  
    Note that in both instances, the preference profile $\succ$ is consistent with the corresponding metric. If we apply a voting rule $f$ to $\succ$, we can consider three cases for the output:

- If $f$ selects $a'$ and $b'$, distortion with respect to $d_2$ would be:
  \begin{align*}
      \dist(f(\succ), \instance; d_2) &\geq \frac{\SC(\{a', b'\}; d_2)}{\SC(\{a, b\}; d_2)}\\
      &= \frac{\SC(a'; d_2) + \SC(b'; d_2)}{\SC(a; d_2) + \SC(b; d_2)}\\
      &= \frac{2 n_1 + 4 n_2}{2 n_1}\\ 
      &= 1 + 2\sqrt{2}.
   \end{align*}

- If $f$ selects $a$ and $b$, distortion with respect to $d_1$ would be:
  \begin{align*}
  \dist(f(\succ), \instance; d_1) &\geq \frac{\SC(\{a, b\}; d_1)}{\SC(\{a', b'\}; d_1)}\\
  &= \frac{\SC(a; d_1) + \SC(b; d_1)}{\SC(a'; d_1) + \SC(b'; d_1)}\\
  &= \frac{4 n_1 + 2 n_2}{2 n_2}\\
  &= 1 + \sqrt{2}.
  \end{align*}

- If $f$ selects $a$ and $a'$, distortion with respect to $d_1$ would be:
  \begin{align*}
  \dist(f(\succ), \instance; d_1) &\geq \frac{\SC(\{a, a'\}; d_1)}{\SC(\{a', b'\}; d_1)}\\
  &= \frac{\SC(a; d_1) + \SC(a'; d_1)}{\SC(a'; d_1) + \SC(b'; d_1)}\\
  &= \frac{2 n_1 + 2 n_2}{2 n_1}\\
  &= 1 + \sqrt{2}.
  \end{align*}

Therefore, for every possible output, the distortion in at least one of the two instances would be greater than or equal to $1 + \sqrt{2}.$

\end{proof}

\section{General Multi-Winner Voting}
\label{sec:multi-winner}
In this section, we generalize our voting rule to $k$-winner voting rules on the line metric under a utilitarian additive social cost function. We establish both upper and lower bounds on the distortion for these rules.
First, we present in Theorem \ref{thm:ub-general} an upper bound on the distortion of $k$-winner voting rules, which is approximately $7/3$ but may be slightly higher when $k$ is not divisible by $3$.
Additionally, we derive two lower bounds. For $k \geq m/2$, the distortion of any $k$-winner voting rule is lower bounded by $1 + (m - k)/(3k - m)$. For $k \leq m/2$, the distortion is lower bounded by $2 + 1/k$ when $k$ is odd and $1 + \sqrt{1 + 2/k}$ when $k$ is even. These results provide an overview of distortion behavior for $k$-winner voting rules across various values of $k$.


\begin{theorem} \label{thm:ub-general}
For every integer $k > 2$, there exists a $k$-winner voting rule $f$ such that for any election instance $\instance = (\voters, \candidates, k, \succ)$ where voters and candidates lie on a line metric, and for any metric $d$ consistent with $\succ$, the distortion of $f$ under the utilitarian additive social cost satisfies:
\[
\dist(f) \le 
\begin{cases}
    7/3 & \text{for } k \equiv 0 \pmod{3},\\
    7/3 + \frac{4(\sqrt{2} - 4/3)}{k} & \text{for } k \equiv 1 \pmod{3},\\
    7/3 + \frac{2(\sqrt{2} - 4/3)}{k} & \text{for } k \equiv 2 \pmod{3}.
\end{cases}
\]
\end{theorem}

Before proving this theorem, we demonstrate how combining two multi-winner voting rules, each selecting different committee sizes, yields a $k$-winner voting rule with low distortion in Subsection \ref{Combining Voting Rules}. Subsequently, in Subsection \ref{Polar Comparison Rule for $k=3$}, we generalize the Polar Comparison Rule for $k=3$, to achieve a low distortion.

\subsection{Combining Voting Rules}\label{Combining Voting Rules}
In this subsection, we present a method for combining different voting rules to construct a $k$-winner selection rule. We show that the distortion of the combined rule can be expressed as a function of the distortions of the individual voting rules.

\begin{theorem}\label{thm:ub-linear-combination}
Given integers $k, k_1, k_2, \alpha, \beta$ where $k = \alpha k_1 + \beta k_2$, if $f_1$ and $f_2$ are respectively $k_1$-winner and $k_2$-winner voting rules, with distortions at most $\dd_1$ and $\dd_2$, there exists a $k$-winner voting rule with distortion at most
$(\alpha k_1 \dd_1 + \beta k_2\dd_2)/k.$
\end{theorem}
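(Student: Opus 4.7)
The plan is to construct the combined rule by iterating the two given rules, beginning with the one of larger distortion. Without loss of generality assume $\dd_1 \ge \dd_2$ (otherwise swap $f_1$ and $f_2$). The combined rule applies $f_1$ to $A$, removes the chosen $k_1$ alternatives, applies $f_1$ again, and so on for $\alpha$ rounds, obtaining a set $U_1$ of size $\alpha k_1$; then it applies $f_2$ iteratively $\beta$ times on $A \setminus U_1$ to obtain $U_2$ of size $\beta k_2$, and finally outputs $U_1 \cup U_2$. The argument crucially exploits the linearity of additive social cost: if $a_{(1)}, a_{(2)}, \ldots$ are the alternatives sorted in non-decreasing order of $\SC(a)$, then for every $\ell$ the optimal $\ell$-winner cost is $\opt_\ell(A) = \sum_{j=1}^\ell \SC(a_{(j)})$.

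The central technical step is an iteration lemma: iterating a $k'$-winner rule with distortion $\dd$ for $\ell$ rounds produces an $(\ell k')$-winner set $U$ with $\SC(U) \le \dd \cdot \opt_{\ell k'}(A)$. I would prove this round by round. In round $i$, at most $(i-1)k'$ alternatives have been removed, hence at least $k'$ members of $\{a_{(1)}, \ldots, a_{(ik')}\}$ remain available. Since everything outside the top-$ik'$ block has larger $\SC$ than everything inside it, the $k'$ smallest-$\SC$ alternatives in the residual are drawn entirely from the surviving members of this block; the adversarial case, when the $k'$ survivors are $a_{((i-1)k'+1)}, \ldots, a_{(ik')}$, gives $\opt_{k'}(\text{residual}) \le \sum_{j=(i-1)k'+1}^{ik'} \SC(a_{(j)})$. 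Multiplying by $\dd$ and telescoping over all $\ell$ rounds yields the lemma.

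Applying the iteration lemma to each block gives $\SC(U_1) \le \dd_1 \opt_{\alpha k_1}(A)$ and $\SC(U_2) \le \dd_2 \opt_{\beta k_2}(A \setminus U_1)$. A parallel index-counting argument shows $\opt_{\beta k_2}(A \setminus U_1) \le \opt_k(A) - \opt_{\alpha k_1}(A)$, with the adversarial case $U_1 = \{a_{(1)}, \ldots, a_{(\alpha k_1)}\}$. Combining,
\[
\SC(U_1 \cup U_2) \le \dd_1 \opt_{\alpha k_1}(A) + \dd_2 \bigl(\opt_k(A) - \opt_{\alpha k_1}(A)\bigr) = \dd_2 \opt_k(A) + (\dd_1 - \dd_2)\,\opt_{\alpha k_1}(A).
\]
Using the averaging bound $\opt_{\alpha k_1}(A) \le \tfrac{\alpha k_1}{k}\opt_k(A)$ (valid because the $\alpha k_1$ smallest $\SC$ values are each at most the average of the top-$k$ values) together with $\dd_1 \ge \dd_2$, the right-hand side simplifies to $\tfrac{\alpha k_1 \dd_1 + \beta k_2 \dd_2}{k}\opt_k(A)$, which is exactly the claimed distortion.

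The main obstacle is making the pigeonhole bookkeeping in the iteration lemma fully rigorous: one must identify, among all possible adversarial residual sets after $(i-1)k'$ removals, the worst configuration for $\opt_{k'}$ and confirm that the per-round bounds telescope cleanly to $\opt_{\ell k'}$ rather than something weaker. The ordering condition $\dd_1 \ge \dd_2$ is essential and drives the decision to iterate the larger-distortion rule first: had we reversed the order the analogous bound would be $\dd_1 \opt_k(A) + (\dd_2-\dd_1)\,\opt_{\beta k_2}(A)$, which exceeds the target when $\dd_1 > \dd_2$.
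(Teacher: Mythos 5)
Your proof is correct and follows essentially the same route as the paper: iterate the larger-distortion rule $f_1$ for $\alpha$ rounds, then $f_2$ for $\beta$ rounds on the residual, bound each block against the corresponding slice of the optimal committee, and finish with the averaging bound $\opt_{\alpha k_1}(A) \le \tfrac{\alpha k_1}{k}\opt_k(A)$ combined with $\dd_1 \ge \dd_2$. The only difference is that you supply a self-contained pigeonhole proof of the iteration lemma ($\SC(U) \le \dd \cdot \opt_{\ell k'}(A)$), whereas the paper cites this from \cite{10.1145/3230654.3230658}.
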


\begin{proof}
    For the simpler case where $\beta = 0$, by running $f_1$ for $k/k_1$ iterations, each time removing the candidates that were selected as winners, we can achieve a distortion of at most $\dd_1$ for the $k$-winner election, as shown by \cite{10.1145/3230654.3230658}. 
    
    For the general case, we use a similar approach: assuming $\dd_2 \le \dd_1$ without loss of generality, we fix an instance $\instance = (\voters, \candidates, k, \succ)$. Define voting rule $f$ as follows: First, run $f_1$ iteratively to select $\alpha k_1$ candidates $S_1 = \{ a_1, \dots a_{\alpha k_1}\}$. Subsequently, apply $f_2$ iteratively on the remaining candidates $\candidates \backslash S_1$, to select $\beta k_2$ candidates $S_2 = \{ b, \dots , b_{\beta k_2}\}$. We prove that this voting rule, $f$, has the desired distortion for $k$-winner elections.

    Recall that $\opt = \{ \optcandidate_1, \optcandidate, \dots ,\optcandidate_k\}$ is the optimal committee of size $k$, and $\SC(\optcandidate_i) \le \SC(\optcandidate_j)$ for all $i \le j$. Partition this set into two subsets $\opt_1 = \{ \optcandidate_i \mid 1 \le i \le \alpha k_1\}$ and $\opt_2 = \{ \optcandidate_j \mid \alpha k_1 < j \le k \}$.
    By the assumptions of the theorem, we know that 
    \begin{equation}\label{eq:s1<dopt1}
    \SC(S_1) \le \dd_1 \SC(\opt_1).
    \end{equation}
    Furthermore, when running $f_2$, $\alpha k_1$ candidates have been eliminated from $\candidates$. Thus, if $\opt_3$ is the optimal committee of size $\beta k_2$ in the instance given to $f_2$, $\SC(\opt_3) \le \SC(\opt_2)$. By the assumptions of the theorem, $\SC(S_2) \le \dd_2 \SC(\opt_3)$. Therefore, 
    \begin{equation}\label{eq:s2<dopt2}
    \SC(S_2) \le \dd_2 \SC(\opt_2).
    \end{equation}
    Using equations \eqref{eq:s1<dopt1} and \eqref{eq:s2<dopt2}, we can conclude that
    \begin{align*}
        \SC(S_1) + \SC(S_2) & \le  \dd_1 \SC(\opt_1) +  \dd_2 \SC(\opt_2)\\
        & = \dd_2 (\SC(\opt_1) + \SC(\opt_2)) + (\dd_1 - \dd_2 ) \SC(\opt_1)\\
        & \le \dd_2 \SC(\opt) + (\dd_1 - \dd_2) \dfrac{\alpha k_1 \SC(\opt)}{k} \\
        & = \dfrac{ \beta k_2\dd_2 + \alpha k_1\dd_1}{k} \SC(\opt),
    \end{align*}
 which completes the proof.
\end{proof}

\subsection{Polar Comparison Rule for $k=3$}\label{Polar Comparison Rule for $k=3$}
In this subsection, we establish an upper bound of $7/3$ for distortion in the case $k=3$. We show that this bound is tight, matching the lower bound derived in Theorem \ref{thm:lb-general}. We first define a voting rule by generalizing Polar Comparison rule for $k=3$, and then prove the $7/3$ upper bound on its distortion in Theorem \ref{thm:ub-3}. 

In Algorithm \ref{alg:3-winner-voting-rule}, we present a voting rule for $3$-winner elections, referred to as the \emph{Polar Comparison Rule} for the case $k=3$.
In this voting rule, the first two members of the committee are $a$ and $b$, the top two candidates in the \majorder. To select the third member, we compare the two candidates adjacent to $a$ and $b$, denoted by $b'$ and $c$.
Without loss of generality, assume that $b$ lies on the right side of $a$. Note that we can obtain the ordering of candidates that are not Pareto-dominated by any candidate other than $a$ using Lemma \ref{lem:candidates-order}. Therefore, we can find the closest candidate to the left side of $a$, namely $c$, and the closest candidate to the right of $b$, namely $b'$. As stated before, the goal is to either select $\{a, b, b' \}$ or $\{ a, b, c\}$ as the output committee. Therefore, similar to Polar Comparison rule for $2$-winner elections described in Algorithm \ref{alg:2-winner-voting-rule}, we compare $c$ and $b'$ to select the final candidate. Specifically, if $|\voters_{c\succ b'}| \ge 2n/5$, we choose $c$, otherwise, we choose $b'$.

\begin{nscenter}
\begin{algorithm}[t]
    \caption{Polar Comparison Rule for $3$-Winner Elections}
    \label{alg:3-winner-voting-rule}

    \newcommand\mycommfont[1]{\footnotesize\ttfamily\color{blue}#1}%
    \SetCommentSty{mycommfont}
    
    \SetKwInOut{Input}{Input}
    \SetKwInOut{Output}{Output}
    
    \Input{Instance $\instance = (\voters, \candidates, 3, \succ)$ }
        
    \Output{A Committee of three candidates $S = \{s_1, s_2, s_3\}$}

    \medskip
    $a, b \gets$ the top two candidates in the \majorder\;
    $b \gets \{e\in \candidates \mid \exists e'\in \candidates\setminus\{e, a\}: |\voters_{e'\succ e}| =n \}$  \textcolor{blue}{\tcp*{Candidates Pareto-dominated by $e'\ne a$}}
    $b' \gets \{e\in \candidates \mid \exists e'\in \candidates\setminus \{e, b\}: |\voters_{e'\succ e}| =n \}$ \hspace{-2pt}\textcolor{blue}{\tcp*{Candidates Pareto-dominated by $e'\ne b$}}
    $\candidates'_1 \gets \candidates\setminus ( b \cup \{a\})$\;
    $\candidates'_2 \gets \candidates\setminus  (b'\cup \{b\})$\;    
    Find $c \in \candidates_1'$ minimizing $d(c, a)$ such that $a$ lies between $c$ and $b$
    \textcolor{blue}{\tcp*{Using Lemma \ref{lem:candidates-order}}}
    Find $b' \in \candidates_2'$ minimizing $d(b', b)$ such that $b$ lies between $a$ and $b'$
    \textcolor{blue}{\tcp*{Using Lemma \ref{lem:candidates-order}}} 
    
    \If{$c$ does not exist}{
        \Return $\{ a, b, b'\}$\;
    }
    \If{$b'$ does not exist }{
        \Return $\{ a, b, c\}$ \;
    }
    \eIf{$|\voters_{c \succ b'}| \ge \frac{2n}{5}$}{
      \Return $\{a, b, c\}$\;
    }{\Return $\{ a, b, b'\}$\;
    }
    
\end{algorithm}
\end{nscenter}

\begin{theorem}\label{thm:ub-3}
If $f$ is the voting rule that selects a committee of size three by Algorithm \ref{alg:3-winner-voting-rule}, then $\dist(f) \le 7/3$.
\end{theorem}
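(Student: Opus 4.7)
}
The argument follows the same template as the proof of Theorem \ref{thm:line-voting-rule}, but with four candidate alternatives $\{c, a, b_1, b_2\}$ near the median instead of three. I would assume for contradiction an instance $\instance=(V,A,3,\sigma)$ and a consistent metric $d$ with $\dist(f(\sigma),\instance,d) > 7/3$. Since $a, b_1$ are the top two alternatives of the \majorder, the median voter $\midd$ sits between them, and by Corollary \ref{col:opt-position} at least one member of the optimal committee $\opt=\{o_1,o_2,o_3\}$ lies in $\{a,b_1\}$. Lemma \ref{lem:relative-sc} then forces any other optimal member that lies strictly outside $[x_c, x_{b_2}]$ to have social cost at least $\SC(c)$ or $\SC(b_2)$, so the only cases that could violate the $7/3$ bound are those in which $\opt$ is effectively contained in $\{c, a, b_1, b_2\}$. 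I would enumerate the resulting possibilities for $\opt$ (e.g.\ $\{a,b_1,b_2\}$, $\{a,b_1,c\}$, $\{c,a,b_1\}$ together with the multisets arising from Lemma \ref{lem:relative-sc}) and cross-classify them with the two possible outputs of $f$.

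The easy half of the case split handles configurations where the rule's output coincides with, or Pareto-adjoins, the optimum; here a direct application of Lemma \ref{lem:ratio-alters} using the rule's threshold $|V_{c\succ b_2}| \gtrless 2n/5$ gives a ratio of at most $2n/(n\cdot 3/5) - 1 = 7/3$, exactly matching the target. Analogously to Lemma \ref{lem:ub-3-le2}, the cases where $\opt$ sits on the ``wrong'' side of $\midd$ can be reduced to an alternative-vs-alternative comparison since the two committees share $\{a,b_1\}$ and only one summand of the cost differs. The $2/5$ threshold in Algorithm \ref{alg:3-winner-voting-rule} is chosen precisely so that the ratios $(2n/|V_{c\succ b_2}|)-1$ and $(2n/|V_{b_2\succ c}|)-1$ balance at $7/3$.

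For the hard cases, where the output and $\opt$ differ in a nontrivial way, I would apply the Focal Point machinery of Lemma \ref{lem:move-voters-with-prop} with $\dd = 7/3$ to the consecutive committees $S_1 = \opt$ and $S_2 = f(\sigma)$. Since $|S_1 \cap S_2| = 2$ and $k = 3$, Algorithm \ref{alg:move-voters-finding-point} falls in its second branch and returns either a point to the right of $b_1$ or to the left of $a$, depending on whether $\dd$ exceeds $k/(2r+1) = 1$; this is automatic for $\dd = 7/3$. Combining the Focal Point with the majority guarantees $|V_{[-\infty,\xab]}|\ge n/2$ (from the \majorder) and $|V_{[-\infty,\xbc]}| \ge 2n/5$ or $|V_{[\xbc,+\infty]}| \ge 3n/5$ (from the rule's threshold on $|V_{c\succ b_2}|$), together with Observation \ref{obs:set-of-voters-diff-view}, would reduce the instance to one in which all voters sit at three or four explicit locations. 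On such a reduced instance I would introduce a few parameters (the distances $d(c,a)$, $d(a,b_1)$, $d(b_1,b_2)$), write $\SC(S_2)/\SC(S_1)$ as a ratio of linear forms in these parameters, and verify by a short calculation that its supremum is exactly $7/3$, producing the desired contradiction.

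The step I expect to be the main obstacle is the case $\opt = \{o_1, o_2, o_3\}$ with all three optima on a single side of the median, say $o_i \in \{b_1, b_2, \dots\}$, when $f$ outputs $\{a, b_1, c\}$. Here two of the three summands of the cost disagree between output and optimum, so neither Lemma \ref{lem:ub-3-le2} nor a single application of Lemma \ref{lem:ratio-alters} suffices; I plan to split $\SC(\{a,b_1,c\}) - \SC(\{b_1, o_2, o_3\})$ term-by-term using Lemma \ref{lem:relative-sc} (which comparably orders the costs by distance to the median), and then discharge the residual using the rule's threshold $|V_{c\succ b_2}|\ge 2n/5$ exactly as in the Focal Point step above. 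Once this case is settled, the remaining cases will follow from symmetric arguments or from straightforward adaptations of the $k = 2$ analysis.
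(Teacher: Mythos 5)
Your plan mirrors the paper's proof structure closely: a case analysis on the optimal committee's positions, constrained to the neighborhood $\{c,a,b_1,b_2\}$ via Lemma~\ref{lem:relative-sc} and Corollary~\ref{col:opt-position}; direct applications of Lemma~\ref{lem:ratio-alters} at the $2n/5$ threshold for the cases where $f$ outputs $b_2$ while $c\in\opt$; Lemma~\ref{lem:ub-3-le2} for configurations symmetric about $\midd$; and the Focal Point machinery of Lemma~\ref{lem:move-voters-with-prop} (with $\tau = 7/3$) for the remaining cases. You also correctly isolate the hardest configuration, $\opt$ at positions $(x_{b_1}, x_{b_2}, x_{b_2})$ with output $\{a,b_1,c\}$, which is precisely the paper's hard Case~2.5.

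One factual slip, though, suggests a misreading of how the $2/5$ threshold is calibrated: at $|V_{c\succ b_2}|=2n/5$ the two expressions $(2n/|V_{c\succ b_2}|)-1$ and $(2n/|V_{b_2\succ c}|)-1$ evaluate to $4$ and $7/3$, respectively — they do not ``balance.'' The point is asymmetric: $2/5$ is chosen so that the \emph{easy} Lemma~\ref{lem:ratio-alters} bound when the rule outputs $b_2$ lands exactly at $7/3$, while the \emph{hard} direction (rule outputs $c$ but $b_2\in\opt$) needs the full Focal Point reduction, which also tops out at $7/3$ after a nontrivial calculation. Your plan defers that calculation, but it is where most of the work lies and it has wrinkles you have not anticipated: the paper first moves the \emph{alternatives} $a$ and $b_1$ toward $\midd$ (justified by Lemma~\ref{lem:relative-sc}) before moving voters, and splits into two sub-cases depending on whether $x_{cb_2}$ lies left or right of $a$, each with its own parametrization. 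Also, since the optimal committee is only known to sit at positions $(x_{b_1}, x_{b_2}, x_{b_2})$, you effectively apply the Focal Point argument to a multiset rather than to the literal committees $S_1=\opt$, $S_2=f(\sigma)$ with $|S_1\cap S_2|=2$ as you wrote; the lemma still goes through because it only references positions, but the bookkeeping (and your stated value of $r$ in Algorithm~\ref{alg:move-voters-finding-point}) changes. None of this sinks your approach, but carrying it to a complete proof would require confronting these sub-cases head-on rather than treating them as a ``short calculation.''
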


\begin{proof}
To prove the theorem, we begin by assuming the contrary, i.e., there exists an instance $\instance = (\voters, \candidates, 3, \succ)$ and metric $d$, such that $\dist(f(\succ), \instance; d) > 7/3$. Let $a$ and $b$ be the first and second candidates in the \majorder of $\instance$, respectively. Without loss of generality, assume that $x_{b} \ge x_a$. Moreover, let $c$ be the closest candidate to the left side of $a$, and let $b'$ be the closest candidate to the right side of $b$, i.e., $x_c \le x_a \le x_{b} \le x_{b'}$.

The voting rule $f$, defined by Algorithm \ref{alg:3-winner-voting-rule}, first outputs candidates $a$ and $b$, and then selects one between $b'$ and $c$. Considering different cases for these two candidates and the optimal committee, we compute the distortion of our voting rule for each case separately and prove that it cannot exceed $7/3$.

\begin{enumerate}[label=\textbf{Case.\arabic*},ref=Case \arabic*, align=left]

\item \textbf{Candidate $c$ defeats $b'$}
\begin{proofcase}
    This implies that $|\voters_{c \succ b'}| \ge n/2$. Thus, $b'$ appears later than $c$ in the \majorder. Since $|\voters_{c\succ b} | \ge n/2$, the output of Algorithm \ref{alg:3-winner-voting-rule} would be $\{ a, b, c\}$. Therefore, 
    \[\dist(f(\succ), \instance) = \dfrac{\SC(a) + \SC(b) + \SC(c)}{\SC(\optcandidate_1) +\SC(\optcandidate_2) +\SC(\optcandidate_3)}.\]
    Note that $a$ might be in the optimal committee $\opt$ or not. If $a \not \in \opt$, by Corollary \ref{col:pm<3opt} $\SC(a) \le 3\SC(\optcandidate_3)$. Moreover, the median voter $\midd$, lies between $c$ and $b$. Therefore, if we hypothetically remove $a$ from the candidates, by Lemma \ref{lem:ub-3-le2}, $\SC(b) + \SC(c) \le 2(\SC(\optcandidate_1) +  \SC(\optcandidate_2))$. Thus,
    \begin{align*}
    \dist(f(\succ), \instance) &\le \dfrac{3\SC(\optcandidate_1) +2\SC(\optcandidate_2) +2\SC(\optcandidate_3)}{\SC(\optcandidate_1) +\SC(\optcandidate_2) +\SC(\optcandidate_3)}\\
    & \le \dfrac{5 \SC(\optcandidate_1) + 2\SC(\optcandidate_2)}{2\SC(\optcandidate_1) + \SC(\optcandidate_2)} \\
    &\le \dfrac{5 \SC(\optcandidate_1) + 2\SC(\optcandidate_2)}{(2 + 1/7) \SC(\optcandidate_1) + (1 - 1/7)\SC(\optcandidate_2)}\\
    &\le \dfrac{7}{3}.
    \end{align*}
    If $a\in \opt$, again by eliminating $a$ from the candidates and using Lemma \ref{lem:ub-3-le2}, we have 
    \begin{align*}
        \dist(f(\succ), \instance)  & =  \dfrac{\SC(a) + \SC(b) + \SC(c)}{\SC(\optcandidate_1) +\SC(\optcandidate_2) +\SC(\optcandidate_3)}\\
        & \le \dfrac{\SC(\optcandidate_1) + 2\left(\SC(\optcandidate_2) + \SC(\optcandidate_3)\right)}{\SC(\optcandidate_1) + \SC(\optcandidate_2) + \SC(\optcandidate_3)}\\
        & \le 2,
    \end{align*}
    which completes the proof of this case.
\end{proofcase}

\item \textbf{Candidate $b'$ defeats $c$}
\begin{proofcase}
    This means that $|\voters_{b' \succ c}| \ge n/2$. Thus, $b'$ appears earlier than $c$ in the \majorder. In this case, similar to the proof for $k=2$, we consider different cases for the optimal committee $\opt = \{ \optcandidate_1, \optcandidate_2, \optcandidate_3\}$. Note that by Lemma \ref{lem:relative-sc}, we only need to consider these cases for the optimal committee:
    \[ (x_{\optcandidate_1}, x_{\optcandidate_2}, x_{\optcandidate_3}) \in \{ 
    (x_{a}, x_{b}, x_{b'}),
    (x_{a}, x_{b}, x_{c}),
    (x_{a}, x_{c}, x_{c}),
    (x_{c}, x_{c}, x_{c}),
    (x_{b}, x_{b'}, x_{b'})
    \}.\]
    We consider each of these cases and prove the upper bound of $7/3$ separately.

\begin{enumerate}[label=\textbf{Case.2.\arabic*},ref=Case 2.\arabic*, align=left]

    \item  $(x_{\optcandidate_1}, x_{\optcandidate_2}, x_{\optcandidate_3}) = (x_{a}, x_{b}, x_{c})$.
    
    \begin{proofcase}
        Our voting rule, $f$, first selects $a, b$ and then chooses between $b'$ and $c$. If the output is $\{ a, b, c\}$, that would be the optimal committee with distortion 1. Otherwise, the output of $f$ is $\{ a, b, b'\}$. Therefore, the distortion can be expressed as
        \[\dist(f(\succ), \instance) = \dfrac{\SC(a) + \SC(b) + \SC(b')}{\SC(a) + \SC(b) + \SC(c)}.\]
        Note that by construction of the voting rule, $|\voters_{c\succ b'}|$ must not be greater than $2n/5$. Therefore, $|\voters_{b' \succ c}| \ge 3n/5$. Hence by Lemma \ref{lem:ratio-alters}, $\SC(b') \le 7/3\SC(c)$.
        Therefore,
        \[ \SC(a) + \SC(b) + \SC(b') \le  \frac{7}{3} \left(\SC(a) + \SC(b) +\SC(c)\right).\]
        This completes the proof for this case.
    \end{proofcase}
    
    \item  $(x_{\optcandidate_1}, x_{\optcandidate_2}, x_{\optcandidate_3}) = (x_{a}, x_{c}, x_{c})$.
    
    \begin{proofcase}
        Similar to the previous cases, $f$ chooses a candidate between $c$ and $b'$. If $c$ is selected, then,
        \[\dist(f(\succ), \instance) = \dfrac{\SC(a) + \SC(b) + \SC(c)}{\SC(a) + 2\SC(c)}.\]
        Since $b$ appears earlier than $c$ in the \majorder, by Corollary \ref{col:pm<3opt}, $\SC(b) \le 3 \SC(c)$. Therefore, it can be concluded that the above distortion is not greater than 2. 
        If $c$ is not selected,
        \[\dist(f(\succ), \instance) = \dfrac{\SC(a) + \SC(b) + \SC(b')}{\SC(a) + 2\SC(c)}.\]
        By the construction of the voting rule, this happens when $|\voters_{b' \succ c}| \ge 3n/5$. Similar to the previous case, using Lemma \ref{lem:ratio-alters}, we can conclude that $\SC(b') \le 7/3\SC(c)$. By Lemma \ref{lem:relative-sc}, $\SC(b) \le \SC(b')$. Therefore, $\SC(b) \le 7/3 \SC(c)$. These two facts demonstrate that the total distortion will not exceed $7/3$.
    \end{proofcase}
    
    \item  $(x_{\optcandidate_1}, x_{\optcandidate_2}, x_{\optcandidate_3}) = (x_{c}, x_{c}, x_{c})$.
    
    \begin{proofcase}
        The output of the voting rule is either $\{ a, b, b'\}$ or $\{ a, b, c\}$. Similar to the previous cases, the first output occurs when $|\voters_{b' \succ c}| \ge 3n/5$, which shows that $\SC(b') \le 7/3 \SC(c)$ by Lemma \ref{lem:ratio-alters}.
        Since $a \not \in \opt$ and $c\in \opt$, by Lemma \ref{lem:relative-sc}, one can conclude that $a$ and $c$ lie on different sides of $\midd$, the median voter. Therefore, again using Lemma \ref{lem:relative-sc}, $\SC(a) \le \SC(b) \le \SC(b')$. This implies that in this case,
        \begin{align*}
            \dist(f(\succ), \instance) &= \dfrac{\SC(a) + \SC(b) + \SC(b')}{3\SC(c)}\\
            &\le \dfrac{3\SC(b')}{3\SC(c)}\\
            &\le \dfrac{7}{3}. 
        \end{align*}
        Now consider the case that the output of $f$ is $\{ a, b, c\}$. Using Corollary \ref{col:pm<3opt}, one can conclude that $\SC(a) \le 3 \SC(c)$ and also $\SC(b) \le 3\SC(c)$. Therefore, 
        \begin{align*}
            \dist(f(\succ), \instance) &= \dfrac{\SC(a) + \SC(b) + \SC(c)}{3\SC(c)}\\
            &\le \dfrac{3\SC(c) + 3\SC(c) + \SC(c)}{3\SC(c)}\\
            &= \dfrac{7}{3}. 
        \end{align*}
        Thus, the proof of this case is completed.
    \end{proofcase}
    
\item $(x_{\optcandidate_1}, x_{\optcandidate_2}, x_{\optcandidate_3}) = (x_a, x_{b}, x_{b'})$. \label{case:ub_3_2.4}
    
\begin{proofcase}
    Our voting rule, $f$, first selects $a$ and $b$, and then chooses between $c$ and $b'$. In the case where $f$ selects $a, b,$ and $b'$, the output is the optimal committee with distortion $1$. Now consider the case that $f$ selects $c$. By the construction of Algorithm \ref{alg:3-winner-voting-rule}, this happens when $|\voters_{c \succ b'}| \geq 2n/5$. Therefore, by Observation \ref{obs:set-of-voters-diff-view}, it follows that $|\voters(\left(-\infty,\xcbtwo \right])| > 2n/5$.
    The distortion of $f$ for instance $\instance$ in this case is
    \begin{equation}\label{eq:ub_3_2.4}
    \dist(f(\succ), \instance) = \frac{\SC(a) + \SC(b) + \SC(c)}{\SC(a) + \SC(b) + \SC(b')}.    
    \end{equation}
    Note that for this case, based on Algorithm \ref{alg:move-voters-finding-point}, the \focalpoint would be $x_{b'}$.

    Suppose that we move $a$ and $b$ to $x_{\midd}$. Then, by Lemma \ref{lem:relative-sc}, their social cost would be reduced, which results in an increase in the distortion. Thereafter, consider the instance $\instance' = (\voters, \candidates, 2, \succ')$, and metric $d'$, where $|\voters(\xcbtwo)| = 2n/5$, $|\voters(\midd)| = n/10$, and $|\voters(x_{b'})| = n/2$, see Figure \ref{fig:ub-3-case2.4}. 
    Note that $x^* = x_{b'}$, and $\dist(f(\succ), \instance; d ) > 7/3$. Moreover,
    \begin{align*}
           &|\voters(\left(-\infty, \xcbtwo\right])| > 2n/5,\\
        & |\voters(\left(-\infty, \midd\right])| \geq n/2.
    \end{align*}
    Based on Lemma \ref{lem:move-voters-with-prop}, it follows that 
    $\dist(f(\succ), \instance'; d' ) > 7/3$.

    Now it remains to calculate the distortion of the committee $\{a, b, c\}$ in $\instance'$. For brevity, let $r=d(\xcbtwo, x_\midd)$, and $s = d(x_\midd, x_{b'})$. Note that $d(c, \xcbtwo ) = r+s$. We can calculate $\dist(f(\succ), \instance'; d')$ as follows:
        
    \begin{align*}
        \dist&(f(\succ), \instance'; d') = \frac{\SC(a) + \SC(b) + \SC(c)}{\SC(a) + \SC(b) + \SC(b')}\\ 
        &= 1 + \dfrac{r + s + \frac{r}{10} + \frac{r + s}{2} - (\frac{s}{2} + \frac{2r}{5})}{(\frac{s}{2} + \frac{2r}{5}) \times 2 + \frac{s}{2} + \frac{2r}{5}}\\
        &= 1 + \dfrac{\frac{6r}{5} + s}{\frac{6r}{5} + 1.5s}\\
        &\leq 2.
    \end{align*}
        
    This completes the proof for this case.

    \begin{figure}[t]
        \centering
        \begin{tikzpicture}[line cap=round,line join=round,>=triangle 45,x=1cm,y=1cm]
        
        \draw [line width=2pt,color=black] (-3,0)-- (3,0);
    
        \draw (1.4,1) node[anchor=center] {$|\voters(\midd)|= \frac{n}{10}$};
        \draw (3.8,1) node[anchor=center] {$|\voters(x_{b'})| = \frac{n}{2}$};
        \draw (-1,1) node[anchor=center] {$|\voters(\xcbtwo)| = \frac{2n}{5}$};
    
        \draw [fill=black] (3,0) circle (1.5pt);
        \draw[color=black] (3,-0.5) node {$b'$};
        
        \draw [fill=black] (-3,0) circle (1.5pt);
        \draw[color=black] (-3,- 0.45) node {$c$};
        
        \draw [fill=black] (0,0) circle (1.5pt);
        \draw[color=black] (-0.1,-0.5) node {$\xcbtwo$};
    
         \draw [fill=black] (1,0) circle (1.5pt);
        \draw[color=black] (1.1,-0.5) node {$\midd, a, b$};
        
        \draw[->]  (0, 0.7) to (0, 0.1);
        \draw[->]  (3, 0.7) to (3, 0.1);
        \draw[->]  (1, 0.7) to (1, 0.1);
        
        \end{tikzpicture}
        \caption{
        A modified instance for \ref{case:ub_3_2.4}. The optimal committee is $\{ a, b, b'\}$, and the output committee selected by $f$ is $\{a, b, c\}$. Candidates $a$ and $b$ are moved to $\midd$, the median voter. Moreover, the voters are distributed across three locations, $x_{\midd}, x_{b'}$, and $\xcbtwo$
        as illustrated in the figure.}
        \label{fig:ub-3-case2.4}
    \end{figure}
\end{proofcase}
    
\item $(x_{\optcandidate_1}, x_{\optcandidate_2}, x_{\optcandidate_3}) = (x_{b}, x_{b'}, x_{b'})$. \label{case:ub_3_2.5}

\begin{proofcase}        
    As stated before, $f$ first selects $a$ and $b$, and then chooses between $c$ and $b'$. If the voting rule outputs $\{a, b, b'\}$,
    $\SC(a) \leq 3\SC(b)$ and $\SC(b') \geq \SC(b)$. As a result, the distortion would be:
    \begin{align*}
    \dist(f(\succ), \instance) &= \frac{\SC(a) + \SC(b) + \SC(b')}{\SC(b) + 2\SC(b')} \\
    &\leq \frac{4\SC(b) + \SC(b')}{2\SC(b) + \SC(b')} \\
    &\leq \frac{5}{3},  
    \end{align*}
    which is less than $7/3$.
    Conversely, if $f$ selects $c$, by the construction of Algorithm \ref{alg:3-winner-voting-rule}, this happens when $|\voters_{c \succ b'}| \geq 2n/5$. Therefore, by Observation \ref{obs:set-of-voters-diff-view}, it follows that $|\voters(\left(-\infty, \xcbtwo\right])| \geq 2n/5$. Furthermore, since $b$ is the optimal candidate, by Corollary \ref{col:opt-position}, the median voter $\midd$ must lie between $a$ and $b$. 
    The distortion of $f$ in this case is
    \begin{equation}\label{eq:ub_3_2.5}
    \dist(f(\succ), \instance) = \frac{\SC(a) + \SC(b) + \SC(c)}{\SC(b) + 2\SC(b')}.
    \end{equation}
    Note that for this case, based on Algorithm \ref{alg:move-voters-finding-point}, the \focalpoint would be the point $x_{b'}$.

    Here we consider two cases. The first one is when $\xcbtwo$ is between $x_a$ and $x_\midd$.

    In the first case, note that since $a$ appears earlier than $b$ in the \majorder, $d(a, \midd) \le d(b, \midd)$. Furthermore, suppose that we move $b$ closer to $\midd$ until $d(b, \midd) = d(a, \midd)$. By Lemma \ref{lem:relative-sc}, this move does not decrease $\SC(b)$.
    Now, consider the instance $\instance' = (\voters, \candidates, 2, \succ')$, and a metric $d'$ where $|\voters(\xcbtwo)| = 2n/5$, $|\voters(\midd)| = n/10$, and $|\voters(x_{b'})| = n/2$, see Figure \ref{fig:ub-3-case2.5.1}. 
    Note that $\dist(f(\succ), \instance; d ) > 7/3$ and $x^* = x_{b'}$.
    Moreover, 
    \begin{align*}
         &|\voters(\left(-\infty, \xcbtwo\right])| > 2n/5,\\
         &|\voters(\left(-\infty, \midd\right])| \geq n/2.
    \end{align*}
    Therefore, by Lemma \ref{lem:move-voters-with-prop}, it follows that 
    $\dist(f(\succ), \instance'; d' ) > 7/3$.

    Now it remains to calculate the distortion of committee $\{a, b, c\}$ in $\instance'$. For brevity, let $t = d(c, a)$, $r=d(a, \xcbtwo)$, and $s = d(\xcbtwo, x_\midd)$. Note that $d(x_\midd, x_{b}) = r+s$, and $d(b, b') = t-2s$. We can calculate $\dist(f(\succ), \instance'; d')$ as follows:

    \begin{align*}
    \dist&(f(\succ), \instance'; d') = \frac{\SC(a) + \SC(b) + \SC(c)}{\SC(a) + \SC(b) + \SC(b')}\\ 
    &= \dfrac{
    r + \frac{s}{10} + \frac{t + r}{2} + 
    \frac{s+r}{2} + \frac{2s}{5} + \frac{t - 2s}{2} +
    r + t + \frac{s}{10} + \frac{t + r}{2}}{(\frac{s+r}{2} + \frac{2s}{5} + \frac{t - 2s}{2}) \times 3} \\
    &= \frac{1}{3} + \dfrac{2t + 3r + \frac{s}{5}}{\frac{3}{2} \times (t + r) - \frac{3s}{10}} \\
    &= \frac{7}{3} + \frac{-t + \frac{4s}{5}}{\frac{3}{2} \times (t + r) - \frac{3s}{10}} \\
    &\leq \frac{7}{3}.
    \end{align*} 
    The last inequality holds because given $d(b, b') \geq 0$, it follows that $t \geq 2s$.
    This completes the proof for the subcase that $\xcbtwo$ resides between $a$ and $x_\midd$.

    In the second subcase, the output of our voting rule is still the committee $\{a, b, c\}$, but $\xcbtwo$ is located on the left side of $a$. Similar to the first case, in this case, $d(a, \midd) \le d(b, \midd)$. Furthermore, we move $b$ closer to $\midd$ until $d(b, \midd) = d(a, \midd)$. By Lemma \ref{lem:relative-sc}, this will decrease $\SC(b)$.

    Now, consider the instance $\instance' = (\voters, \candidates, 2, \succ')$, and a metric $d'$ where $|\voters(\xcbtwo)| = 2n/5$, $|\voters(\midd)| = n/10$, and $|\voters(x_{b'})| = n/2$, see Figure \ref{fig:ub-3-case2.5.2}. 
    Note that $\dist(f(\succ), \instance; d) > 7/3$ and $x^* = x_{b'}$. Moreover, 
    \begin{align*}
        &|\voters(\left(-\infty, \xcbtwo\right])| > 2n/5,\\
        &|\voters(\left(-\infty, \midd\right])| \geq n/2.
    \end{align*}
    Therefore, based on Lemma \ref{lem:move-voters-with-prop}, it follows that $\dist(f(\succ), \instance'; d') > 7/3$.

    Now it remains to calculate the distortion of committee $\{a, b, c\}$ in $\instance'$. For brevity, let $t = d(\xcbtwo, a)$, $s=d(a, \midd)$, and $r = d(b, b')$. Note that $d(\midd, b) = s$, and $d(c, \xcbtwo) = t+2s+r$. We can calculate $\dist(f(\succ), \instance'; d')$ as follows:

    \begin{align*}
    \dist&(f(\succ), \instance'; d') = \frac{\SC(a) + \SC(b) + \SC(c)}{\SC(b) + 2\SC(b')}\\ 
    &= \dfrac{\frac{2t}{5} + \frac{s}{10} + \frac{2s + r}{2} + 
    (t + 2s + r) + \frac{t + s}{10} + \frac{t + 2s + r}{2}
    }{
    3\times(\frac{s + r}{2} + \frac{2(t+s)}{5})
    }\\
    &= \frac{1}{3} + \dfrac{2t + \frac{21s}{5} + 2r}{\frac{6t}{5} + \frac{27s}{10} + 1.5r}\\
    &\leq \frac{1}{3} + \frac{5}{3} \leq 2 < \frac{7}{3}.
    \end{align*}
    
    This completes the proof for this case.

\begin{figure}[t]
    \centering
    \begin{subfigure}[t]{0.5\textwidth}
        \centering
        \begin{tikzpicture}[line cap=round,line join=round,>=triangle 45,x=1cm,y=1cm]
        
        \draw [line width=2pt,color=black] (-3,0)-- (3,0);
    
        \draw (1.4,1) node[anchor=center] {$|\voters(\midd)|= \frac{n}{10}$};
        \draw (3.65,1) node[anchor=center] {$|\voters(x_{b'})| = \frac{n}{2}$};
        \draw (-1,1) node[anchor=center] {$|\voters(\xcbtwo)| = \frac{2n}{5}$};
    
        \draw [fill=black] (3,0) circle (1.5pt);
        \draw[color=black] (3,-0.5) node {$b'$};
        
        \draw [fill=black] (-3,0) circle (1.5pt);
        \draw[color=black] (-3,- 0.45) node {$c$};
        
        \draw [fill=black] (0,0) circle (1.5pt);
        \draw[color=black] (0,-0.5) node {$\xcbtwo$};
    
         \draw [fill=black] (0.8,0) circle (1.5pt);
        \draw[color=black] (0.8,-0.5) node {$\midd$};
    
        \draw [fill=black] (-0.6,0) circle (1.5pt);
        \draw[color=black] (-0.6,-0.5) node {$a$};
    
        \draw [fill=black] (2.4,0) circle (1.5pt);
        \draw[color=black] (2.4,-0.5) node {$b$};
        
        \draw[->]  (0, 0.7) to (0, 0.1);
        \draw[->]  (3, 0.7) to (3, 0.1);
        \draw[->]  (0.8, 0.7) to (0.8, 0.1);
        
        \end{tikzpicture}
        \caption{The point $\xcbtwo$ lies between $a$ and $\midd$.}
        \label{fig:ub-3-case2.5.1}
    \end{subfigure}%
    ~ 
    \begin{subfigure}[t]{0.5\textwidth}
        \centering
          \begin{tikzpicture}[line cap=round,line join=round,>=triangle 45,x=1cm,y=1cm]
        
        \draw [line width=2pt,color=black] (-3,0)-- (3,0);
    
        \draw (1.25,1) node[anchor=center] {$|\voters(\midd)|= \frac{n}{10}$};
        \draw (3.6,1) node[anchor=center] {$|\voters(x_{b'})| = \frac{n}{2}$};
        \draw (-1.1,1) node[anchor=center] {$|\voters(\xcbtwo)| = \frac{2n}{5}$};
    
        \draw [fill=black] (3,0) circle (1.5pt);
        \draw[color=black] (3,-0.5) node {$b'$};
        
        \draw [fill=black] (-3,0) circle (1.5pt);
        \draw[color=black] (-3,- 0.45) node {$c$};
        
        \draw [fill=black] (0,0) circle (1.5pt);
        \draw[color=black] (0,-0.5) node {$\xcbtwo$};
    
         \draw [fill=black] (1,0) circle (1.5pt);
        \draw[color=black] (1,-0.5) node {$\midd$};
    
        \draw [fill=black] (0.6,0) circle (1.5pt);
        \draw[color=black] (0.6,-0.5) node {$a$};
    
        \draw [fill=black] (1.6,0) circle (1.5pt);
        \draw[color=black] (1.6,-0.5) node {$b$};
        
        \draw[->]  (0, 0.7) to (0, 0.1);
        \draw[->]  (3, 0.7) to (3, 0.1);
        \draw[->]  (1, 0.7) to (1, 0.1);
        
        \end{tikzpicture}
        \caption{Candidate $a$ lies between $\xcbtwo$ and $\midd$.}
        \label{fig:ub-3-case2.5.2}
    \end{subfigure}
    \caption{Modified instances for the first (a) and second (b) subcases of \ref{case:ub_3_2.5}. The optimal candidates are located at $(x_{b}, x_{b'}, x_{b'})$, and the output committee selected by $f$ is $\{a, b, c\}$. The voters are distributed across three locations: $x_{\midd}$, $x_{b'}$, and $\xcbtwo$.}
\end{figure}

\end{proofcase}

\end{enumerate}
\end{proofcase}

\end{enumerate}
\end{proof}

Now, the proof of Theorem \ref{thm:ub-general} can be obtained as a corollary of theorems \ref{thm:line-voting-rule}, \ref{thm:ub-linear-combination}, and \ref{thm:ub-3}.

\begin{proof}[Proof of Theorem \ref{thm:ub-general}]
    Let $k_1 = 2$ and $k_2 = 3$. Moreover, let $f_1$ and $f_2$ be the Polar Comparison rules for $k=2$ and $k=3$, respectively. By Theorem \ref{thm:line-voting-rule}, $\dist(f_1) \le 1 + \sqrt{2}$, and by Theorem \ref{thm:ub-3}, $\dist(f_2) \le 7/3$.
    
    First, note that when $k=4$, $k = 2k_1$. Therefore, by Theorem \ref{thm:ub-linear-combination}, there exists a voting rule constructed by applying $f_1$ twice, with distortion at most $1 + \sqrt{2}$. 
    
    For general $k > 4$, express \( k \) as \( k = 3q + p \) with \( 0 \leq p < 3 \) and \( k \equiv p \pmod{3} \).
    If $p = 0$, $k = qk_2$. Hence, by Theorem \ref{thm:ub-linear-combination}, there exists a $k$-winner voting rule with distortion at most $7/3$. If $p = 1$, then $k = (q - 1)k_2 + 2k_1$. Applying Theorem \ref{thm:ub-linear-combination}, results in a $k$-winner voting rule with distortion at most 
    \[\dfrac{3(q-1)\times 7/3 + 4(1 + \sqrt{2})}{3q + 1}  = \frac{7}{3} + \frac{4(\sqrt{2} - 4/3)}{k}.\]
    If $p=2$, then $k = qk_2 + k_1$. Therefore by Theorem \ref{thm:ub-linear-combination}, there exists a voting rule with distortion at most 
    \[\dfrac{3q\times 7/3 + 2(1 + \sqrt{2})}{3q + 2}  = \frac{7}{3} + \frac{2(\sqrt{2} - 4/3)}{k}.\]
    This completes the proof. 

    \end{proof}

\subsection{Lower Bounds}\label{Lower Bounds}
After establishing the upper bounds, we now turn our attention to the lower bounds.
We provide two key results for lower bounds: one for $k \leq m/2$ and another for $k \geq m/2$.

\begin{theorem}\label{thm:lb-general}
    For any $k \leq m/2$, the distortion of any $k$-winner voting rule $f$ in the line metric, under additive social cost, can be lower bounded as:
    \[
    \dist(f) \geq
    \begin{cases}
        2 + \frac{1}{k} & \text{if } k \text{ is odd}, \\
        1 + \sqrt{1 + \frac{2}{k}} & \text{if } k \text{ is even}.
    \end{cases}
    \]
\end{theorem}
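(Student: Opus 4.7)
The plan is to adapt the two-metric construction used for Theorem~\ref{thm:lb2.41} to arbitrary $k$, producing a preference profile $\sigma$ together with two metrics $d_1, d_2$ both consistent with $\sigma$ such that every committee $S$ of size $k$ already incurs distortion at least the claimed bound in one of the two metrics. Concretely, I place $k$ alternatives $a_1, \ldots, a_k$ in a cluster at (or infinitesimally perturbed around) position $+1$ and $k$ alternatives $a'_1, \ldots, a'_k$ at $-1$; the assumption $k \le m/2$ guarantees this is feasible, and any remaining alternatives can be absorbed into the same two clusters without affecting the analysis. All voters share the profile $\sigma$ in which one group $X$ of $n_X$ voters ranks the left cluster above the right and the complementary group $Y$ of $n_Y$ voters ranks the right cluster above the left. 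In $d_1$ the voters of $X$ sit at $-1$ and those of $Y$ sit at $0$; in $d_2$ the roles are exchanged, with $Y$ at $+1$ and $X$ at $0$. After an infinitesimal perturbation of the ``neutral'' voter's position, both metrics are strictly consistent with $\sigma$.

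Any output of the voting rule is characterized by $j \in \{0, 1, \dots, k\}$, the number of right-cluster alternatives in $S$. A direct computation yields
\[
\SC(d_1, S) = 2 j \, n_X + k \, n_Y, \qquad \SC(d_2, S) = k \, n_X + 2(k - j) \, n_Y,
\]
so the optimal committee is the all-left one in $d_1$ (cost $k n_Y$) and the all-right one in $d_2$ (cost $k n_X$). Writing $r = n_Y / n_X$, the two distortions simplify to
\[
D_1(j) = 1 + \frac{2 j}{k r}, \qquad D_2(j) = 1 + \frac{2 (k - j) r}{k}.
\]

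To force $\max(D_1(j), D_2(j)) \ge L$ for every integer $j \in \{0, 1, \dots, k\}$, I choose $r$ so that the open interval
\[
I(L, r) \;=\; \Bigl(\, k - \tfrac{(L-1) k}{2 r}, \ \tfrac{(L-1) k r}{2} \,\Bigr),
\]
which is precisely the set of real $j$ for which both $D_1(j) < L$ and $D_2(j) < L$, contains no integer. For odd $k$, the choice $r = 1$ and $L = 2 + 1/k$ turns $I(L, r)$ into $\bigl((k - 1)/2, (k + 1)/2\bigr)$, which is integer-free precisely because $k$ is odd. For even $k$, the choice $r = \sqrt{1 + 2/k}$ and $L = 1 + r$ turns $I(L, r)$ into $(k/2, k/2 + 1)$, whose endpoints are consecutive integers and so the interval contains no integer.

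The main obstacle lies in the even case: the balanced output $j = k/2$ is the natural minimizer of $\max(D_1(j), D_2(j))$ under the symmetric weighting $r = 1$, and the symmetric instance only yields the weaker bound of $2$. Tilting the instance by the asymmetric weight $r = \sqrt{1 + 2/k}$ pushes both $j = k/2$ and $j = k/2 + 1$ to the exact threshold where $\max(D_1, D_2) = L$, while every other $j$ gives a strictly larger maximum. Beyond this key choice of $r$, the remaining work is the two elementary interval identities displayed above together with a short consistency check for $\sigma$.
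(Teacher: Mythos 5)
Your proposal is correct and follows essentially the same construction as the paper: two clusters of $k$ alternatives at $\pm 1$, two metrics that slide the two voter groups, and a distortion analysis parameterized by the number $j$ of right-cluster committee members, with the weight ratio tuned so that the balanced split cannot avoid a large distortion in both metrics. The only cosmetic difference is that your ratio $r = \sqrt{1+2/k}$ is the reciprocal of the paper's $\alpha = \sqrt{k/(k+2)}$ (which merely mirrors the construction and shifts the integer-free interval from $(k/2-1,\, k/2)$ to $(k/2,\, k/2+1)$), and your presentation via the interval $I(L,r)$ is a slightly cleaner way to package the paper's two-case analysis on $j$.
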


\begin{proof}
    Similar to the proof for $k = 2$, we construct an instance $\instance = (\voters, \candidates, k, \succ)$ 
    with two different metrics $d_1$ and $d_2$. 
    Assume there are two sets of candidates, $S_a$ and $S_b$, such that $|S_a|, |S_b| \geq k$. In both metrics, locate the candidates in $S_a$ at $x_1 = -1$ and the candidates in $S_b$ at $x_2 = +1$. For a fixed $k$, define the parameter $\alpha$ as follows:
    \[
    \alpha =
    \begin{cases}
        1 & \text{if } k \text{ is odd}, \\
        \sqrt{\frac{k}{k + 2}} & \text{if } k \text{ is even}.
    \end{cases}
    \]
To construct $\succ$, consider a preference profile with two voter groups. The \textit{small group}, consisting of $n_1 = n/(1 + \alpha) + \varepsilon$ voters, prefers all candidates in $S_a$ over those in $S_b$. The \textit{large group}, with $n_2 = n\alpha/(1 + \alpha) - \varepsilon$ voters, prefers all candidates in $S_b$ over those in $S_a$. The value $\varepsilon\in [0, 1]$ is chosen to ensure $n_1$ and $n_2$ are integers.

    For the first metric $d_1$, assume that the small group of voters lie at $x_1 = -1$ and the large group lie at $x_3 = 0$, as shown in Figure \ref{fig:lb-k-i1}. For the second metric $d_2$, place the small group at $x_3 = 0$ and the large group at $x_2 = 1$, as shown in Figure \ref{fig:lb-k-i2}.
    Note that in both metrics, the locations of voters are consistent with their preferences $\succ$. Furthermore, the optimal committee in $\instance$ with metric $d_1$ contains $k$ candidates from $S_a$, while the optimal committee in $\instance$ with metric $d_2$ consists of $k$ candidates from $S_b$.

\begin{figure}[t]
    \centering
    \begin{subfigure}[t]{0.5\textwidth}
        \centering
        \begin{tikzpicture}[line cap=round,line join=round,>=triangle 45,x=1cm,y=1cm]
        
        \draw [line width=2pt,color=black] (-3,0)-- (3,0);

        \draw (0.2,1) node[anchor=center] {$|\voters(0)|=\lfloor\frac{n\alpha}{1 + \alpha}\rfloor$};
        \draw (-2.8,1) node[anchor=center] {$|\voters(-1)|=\lceil\frac{n}{1 + \alpha}\rceil$};
    
        \draw [fill=black] (3,0) circle (1.5pt);
        \draw[color=black] (3,-0.5) node {$1, S_b$};
        
        \draw [fill=black] (-3,0) circle (1.5pt);
        \draw[color=black] (-3,- 0.5) node {$-1 , S_a$};
        
        \draw [fill=black] (0,0) circle (1.5pt);
        \draw[color=black] (0,-0.5) node {$0$};

        \draw[->]  (0, 0.7) to (0, 0.1);
        \draw[->]  (-3, 0.7) to (-3, 0.1);
        
        \end{tikzpicture}
        \caption{}
        \label{fig:lb-k-i1}
    \end{subfigure}%
    ~ 
    \begin{subfigure}[t]{0.5\textwidth}
        \centering
             \begin{tikzpicture}[line cap=round,line join=round,>=triangle 45,x=1cm,y=1cm]
        
        \draw [line width=2pt,color=black] (-3,0)-- (3,0);

        \draw (-0.2,1) node[anchor=center] {$|\voters(0)|=\lceil\frac{n}{1 + \alpha}\rceil$};
        \draw (2.8,1) node[anchor=center] {$|\voters(1)|=\lfloor\frac{n\alpha}{1 + \alpha}\rfloor$};
    
        \draw [fill=black] (3,0) circle (1.5pt);
        \draw[color=black] (3,-0.5) node {$1, S_b$};
        
        \draw [fill=black] (-3,0) circle (1.5pt);
        \draw[color=black] (-3,- 0.5) node {$-1 , S_a$};
        
        \draw [fill=black] (0,0) circle (1.5pt);
        \draw[color=black] (0,-0.5) node {$0$};

        \draw[->]  (0, 0.7) to (0, 0.1);
        \draw[->]  (3, 0.7) to (3, 0.1);
        
        \end{tikzpicture}
        \caption{}
        \label{fig:lb-k-i2}
    \end{subfigure}
    \caption{
    Instance $\instance$ with two metrics, $d_1$ (a) and $d_2$ (b), used to prove the lower bound on the distortion of $k$-winner voting rules for $k \le m/2$, as discussed in Theorem \ref{thm:lb-general}. $S_a$ and $S_b$ are two subsets of candidates located at positions $-1$ and $+1$. As shown in the figures, voters are divided into two groups of sizes $\lceil n/(1+\alpha)\rceil $ and $\lfloor n\alpha/(1+\alpha)\rfloor$.
    }
\end{figure}

    Given this instance and the two metrics, we prove a lower bound on the distortion of any voting rule $f$. Let $S = f(\succ)$ be the output committee selected by $f$. Assume that $S$ contains $r$ candidates from $S_a$ and $k - r$ candidates from $S_b$. One can observe that
    \begin{align*}
        \dist(S, \instance; d_1) &= 1 + 2\times \frac{k - r}{k}\times \frac{n_1}{n_2} , \\
        \dist(S, \instance; d_2) &= 1 + 2\times \frac{r}{k}\times \frac{n_2}{n_1}.
    \end{align*}

    First, consider the case that $r \geq (k + 1)/2$. If $k$ is odd, $\alpha=1$. Therefore, considering even values of $n$, we have $n_1=n_2$. Thus,
    \begin{align*}
    \dist(f) &\geq \dist(S, \instance; d_2)\\
    &= 1 + \frac{2r}{k}\\
    &\geq 1 + \frac{2(k + 1)}{2k} \\
    &= 2 + \frac{1}{k}.
    \end{align*}

    If $k$ is even, it implies that $r \geq k/2 + 1$. Therefore,
    \begin{align*}
        \dist(f) &\geq \dist(S, \instance; d_2) \\
        &= 1 + \frac{2r}{k} \times \frac{n_2}{n_1}\\
        & = 1 + \frac{2r}{k} \times \left(\alpha - \frac{(1+\alpha)^2\varepsilon}{n + (1+\alpha) \varepsilon}\right)\\
        & \ge 1 + \frac{2r}{k} \times (\alpha - \frac{4}{n})\\
        & \ge 1 + \frac{k+2}{k} \times \sqrt{\frac{k}{k + 2}} - \frac{8r}{nk}\\
        &\geq 1 + \sqrt{1 + \frac{2}{k}} - \frac{8}{n}.
    \end{align*}
    Note that as $n \to\infty$, $8/n \to 0$. Thus, given an even value for $k$, no voting rule can achieve a distortion better than $1 + \sqrt{1 + 2/k}$, which is the desired lower bound. 

    In the remaining case, we have $r < (k + 1)/2$. If $k$ is odd, then $r \leq (k - 1)/2$, and therefore,
    \begin{align*}
            \dist(f) &\geq \dist(S, \instance; d_1)\\
            &= 1 + \frac{2(k - r)}{k} \\
            &\geq 1 + \frac{k + 1}{k}\\
            &= 2 + \frac{1}{k}.
    \end{align*}

    If $k$ is even, it implies that $r \leq k/2$. Therefore,
    \begin{align*}
    \dist(f) &\geq \dist(S, \instance; d_1)\\
    &= 1 + 2 \times \frac{k - r}{k} \times \frac{n_1}{n_2}\\
    &\ge 1 + 2\times \frac{k-r}{k}  \times \sqrt{\frac{k + 2}{k}}\\
    &\geq 1 + \sqrt{1 + \frac{2}{k}},
    \end{align*}
    which completes the proof.
\end{proof}

\begin{theorem}\label{thm:lb->m/2}
    For any $k \ge m/2$, the distortion of any $k$-winner voting rule $f$ in the line metric, under additive social cost, can be lower bounded as
    $$\dist(f) \ge 1 + (m - k )/(3k - m).$$
\end{theorem}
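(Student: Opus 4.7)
The plan is to extend the two-metric construction from the proof of Theorem~\ref{thm:lb-general} to the regime $k \ge m/2$, where one can no longer cluster $k$ alternatives on each side of the line. I focus on even $m$ for the clean computation, and note at the end that odd $m$ proceeds analogously.

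First, construct $\instance = (V, A, k, \sigma)$ with two sets of alternatives $S_a$ and $S_b$ of sizes $\lfloor m/2 \rfloor$ and $\lceil m/2 \rceil$, placed near $-1$ and $+1$ respectively (with infinitesimal perturbations so that all locations are distinct). Partition the $n$ voters into two equal groups: a \emph{left} group of size $n/2$ whose preferences rank every element of $S_a$ above every element of $S_b$, and a \emph{right} group of size $n/2$ with the reverse ranking. The profile $\sigma$ is simultaneously consistent with both metrics $d_1$ (left group at $-1$, right group at $0$) and $d_2$ (left group at $0$, right group at $+1$).

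Any committee $S = f(\sigma)$ contains $r$ alternatives from $S_a$ and $k-r$ from $S_b$, and cardinality forces $r \in [k - \lceil m/2 \rceil, \lfloor m/2 \rfloor]$. In the even-$m$ case, direct additive-cost computations give $\SC(d_1, S) = \tfrac{n}{2}(3k - 2r)$ and $\SC(d_2, S) = \tfrac{n}{2}(k + 2r)$, with the optimal committees attained at $r = m/2$ in $d_1$ and $r = k - m/2$ in $d_2$, each incurring cost $\tfrac{n}{2}(3k - m)$. Therefore
\[
\dist(S, \instance, d_1) = \frac{3k - 2r}{3k - m}, \qquad
\dist(S, \instance, d_2) = \frac{k + 2r}{3k - m}.
\]

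The key step is then to minimize $\max\{3k - 2r,\, k + 2r\}$ over feasible integer $r$: the two linear expressions cross at $r = k/2$, where each equals $2k$, and the hypothesis $m/2 \le k \le m$ guarantees that this crossing point lies in the feasibility window $[k - m/2,\, m/2]$. Hence for every $r$,
\[
\max\{\dist(S, \instance, d_1), \dist(S, \instance, d_2)\} \;\ge\; \frac{2k}{3k - m} \;=\; 1 + \frac{m - k}{3k - m},
\]
which bounds $\dist(f)$ from below since $f$ must produce the same committee in both metrics (they share $\sigma$). The main obstacle is the parity bookkeeping: when $k$ is odd, $r = k/2$ is not integer and the nearest integers yield the slightly stronger bound $(2k+1)/(3k-m)$; when $m$ is odd, $|S_a| \ne |S_b|$ and the two denominators become $3k - m \pm 1$, breaking the clean symmetry, but the same crossing-of-linear-functions argument applied to the updated $D_1, D_2$ still yields a lower bound at least $1 + (m-k)/(3k-m)$. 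A minor technical point is ensuring distinct alternative locations while keeping voter cost expressions clean, which is handled by standard infinitesimal perturbations.
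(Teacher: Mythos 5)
Your proof is correct and follows essentially the same two-metric construction as the paper, arriving at the same distortion expressions $\frac{3k-2r}{3k-m}$ and $\frac{k+2r}{3k-m}$. The only difference is the final step: you lower bound $\max\{\dist(d_1),\dist(d_2)\}$ by analyzing the crossing point of the two linear functions at $r=k/2$, whereas the paper just lower bounds the max by the average of the two distortions, which sidesteps the integrality/feasibility bookkeeping about $r$ entirely and yields the identical bound.
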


\begin{proof}
    Similar to the proof of Theorem \ref{thm:lb-general}, we provide an instance $\instance= (\voters, \candidates, k, \succ)$ with two different metrics $d_1$ and $d_2$, having different locations for the voters. 
    Assume there are two sets of candidates, $S_a$ and $S_b$, such that $|S_a|= |S_b| =m/2$. In both metrics locate the candidates in $S_a$ at $x_1 = -1$ and the candidates in $S_b$ at $x_2 = +1$. 
    
    To construct $\succ$, consider a preference profile in which $n_1 = n/2$ voters prefer each candidate in $S_a$ to the candidates in $S_b$, and the remaining $n_2 = n/2$ voters prefer each candidate in $S_b$ to the candidates in $S_a$.
    For the first metric $d_1$, assume that the first group of voters are located at $x_1 = -1$ and the rest are located at $x_3 = 0$, as shown in Figure \ref{fig:lb-k>m/2-i1}. For the second metric $d_2$, place one group at $x_3 = 0$ and the other group at $x_2 = 1$, as shown in Figure \ref{fig:lb-k>m/2-i2}.
    Note that both metrics are consistent with the preference profiles $\succ$. Furthermore, the optimal committee in $\instance$ with metric $d_1$ contains $m/2$ candidates from $S_a$ and $k - m/2$ candidates from $S_b$, while the optimal committee in $\instance$ with metric $d_2$ consists of $m/2$ candidates from $S_b$ and $k - m/2$ candidates in $S_a$.

\begin{figure}[t]
    \centering
    \begin{subfigure}[t]{0.5\textwidth}
        \centering
        \begin{tikzpicture}[line cap=round,line join=round,>=triangle 45,x=1cm,y=1cm]
        
        \draw [line width=2pt,color=black] (-3,0)-- (3,0);

        \draw (0,1) node[anchor=center] {$|\voters(0)|=\frac{n}{2}$};
        \draw (-3,1) node[anchor=center] {$|\voters(-1)|=\frac{n}{2}$};
    
        \draw [fill=black] (3,0) circle (1.5pt);
        \draw[color=black] (3,-0.5) node {$1, S_b$};
        
        \draw [fill=black] (-3,0) circle (1.5pt);
        \draw[color=black] (-3,- 0.45) node {$-1 , S_a$};
        
        \draw [fill=black] (0,0) circle (1.5pt);
        \draw[color=black] (0,-0.5) node {$0$};

        \draw[->]  (0, 0.7) to (0, 0.1);
        \draw[->]  (-3, 0.7) to (-3, 0.1);
        
        \end{tikzpicture}
        \caption{}
        \label{fig:lb-k>m/2-i1}
    \end{subfigure}%
    ~ 
    \begin{subfigure}[t]{0.5\textwidth}
        \centering
        \begin{tikzpicture}[line cap=round,line join=round,>=triangle 45,x=1cm,y=1cm]
        
        \draw [line width=2pt,color=black] (-3,0)-- (3,0);

        \draw (0,1) node[anchor=center] {$|\voters(0)|=\frac{n}{2}$};
        \draw (3,1) node[anchor=center] {$|\voters(1)|=\frac{n}{2}$};
    
        \draw [fill=black] (3,0) circle (1.5pt);
        \draw[color=black] (3,-0.5) node {$1, S_b$};
        
        \draw [fill=black] (-3,0) circle (1.5pt);
        \draw[color=black] (-3,- 0.45) node {$-1 , S_a$};
        
        \draw [fill=black] (0,0) circle (1.5pt);
        \draw[color=black] (0,-0.5) node {$0$};

        \draw[->]  (0, 0.7) to (0, 0.1);
        \draw[->]  (3, 0.7) to (3, 0.1);
        
        \end{tikzpicture}
        \caption{}
        \label{fig:lb-k>m/2-i2}
    \end{subfigure}
    \caption{
    Instance $\instance$ with two metrics, $d_1$ (a) and $d_2$ (b), used to prove the lower bound on the distortion of $k$-winner voting rules for $k > m/2$, as discussed in Theorem \ref{thm:lb->m/2}. $S_a$ and $S_b$ are two subsets of candidates located at positions $-1$ and $+1$. As shown in the figures, voters are divided into two groups of size $n/2$.
    }
\end{figure}

    Given these instances, we prove a lower bound on any voting rule $f$. Let $S = f(\succ)$ be the output committee selected by $f$. Assume that $S$ contains $r$ candidates from $S_a$ and $k - r$ candidates from $S_b$. One can observe that
    \begin{align*}
        \dist(S, \instance; d_1) &= 1 +  \frac{2(m/2 - r)}{3k - m}, \\
        \dist(S, \instance; d_2) &= 1 + \frac{2(r- k + m/2)}{3k - m}.
    \end{align*}
    Since distortion of $f$ can be lower bounded by both of these two values, it can also be lower bounded by their average. Therefore,
    \begin{align*}
        \dist(f) &\ge (\dist(S, \instance; d_1)  + \dist(S, \instance; d_2))/2\\
        &= 1 + \dfrac{m - k}{3k - m},
    \end{align*}
    which completes the proof.
\end{proof}


The results in this section suggest that as the size of the elected committee increases, both the upper and lower bounds indicate the potential for improved distortion. In particular, allowing larger values of $k$ creates opportunities to design voting rules with lower distortion.

\section{Egalitarian Additive Cost}
\label{sec:egalitarian}
In the previous sections, we studied the distortion of multi-winner elections under the utilitarian additive cost objective. The social cost of a committee in the utilitarian setting is calculated by summing the cost of the committee across all voters. In this section, we study the egalitarian objective, focusing on the maximum cost incurred by any voter for the selected committee. This objective captures a notion of fairness.

\cite{cembrano2025metricdistortionpeerselection} have studied the egalitarian additive cost objective on the line metric in peer selection, where the candidates are the voters themselves. They proved a lower bound of $3/2 - 1/k$ in this setting, which is a special case of our setting. We complement this result by proving an upper bound of $2$ in Theorem \ref{thm:egal-ub}.

\begin{theorem} \label{thm:egal-ub}
If an election $\instance = (\voters, \candidates, k, \succ)$ is given, with non-Pareto-dominated candidates ordered as $x_{c_1} \le x_{c_2} \le \dots \le x_{c_m}$ on the line, then for any committee $S$ of size $k < m-1$ excluding $c_1$ and $c_m$, we have $\dist(S) \le 2$ under the egalitarian additive cost objective.\end{theorem}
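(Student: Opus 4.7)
The plan is to identify the voter attaining the maximum cost as one of the two extreme voters, and then sandwich both $\SC(d,S)$ and $\SC(d,\opt)$ by multiples of $d(L,R)$, where $L$ and $R$ denote the leftmost and rightmost voters on the line.

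First, I would observe that the function $g_T(x) := \sum_{a \in T}|x - x_a|$ is a sum of convex functions and is therefore convex. Consequently its maximum over any interval containing all voter positions is attained at one of the two endpoints of that interval. Applying this to $T=S$ (and later to $T=\opt$), together with the fact that every voter lies in $[x_L,x_R]$, yields $\SC(d,S) = \max\{\cost(d,S,L),\cost(d,S,R)\}$ and the analogous identity for $\opt$.

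The key structural step, and the part I expect to be the main obstacle, is to prove that every $s \in S$ satisfies $x_L \le x_s \le x_R$. For this I would exploit the non-Pareto-dominance of $c_1$ and $c_m$. If $c_m$ were dominated by $c_{m-1}$, every voter would strictly prefer $c_{m-1}$ to $c_m$, which on the line is equivalent to every voter lying to the left of the midpoint $(x_{c_{m-1}}+x_{c_m})/2$; non-dominance of $c_m$ thus forces $x_R \ge (x_{c_{m-1}}+x_{c_m})/2$, and together with $x_{c_{m-1}} \le x_{c_m}$ this gives $x_{c_{m-1}} \le x_R$. A symmetric argument starting from $c_1$ (which is not dominated by $c_2$) yields $x_{c_2} \ge x_L$. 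Since $S \subseteq \{c_2,\ldots,c_{m-1}\}$, every member of $S$ lies in $[x_L,x_R]$.

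With this containment, $\cost(d,S,R)=\sum_{s\in S}(x_R - x_s)\le k\cdot d(L,R)$ and similarly $\cost(d,S,L)\le k\cdot d(L,R)$, so $\SC(d,S)\le k\cdot d(L,R)$. For the denominator, I would invoke the triangle inequality to obtain, for every committee $T$ of size $k$,
\[
\cost(d,T,L)+\cost(d,T,R)=\sum_{t\in T}\bigl(|x_L-x_t|+|x_R-x_t|\bigr)\ge k\cdot d(L,R),
\]
so $\SC(d,\opt)\ge \max\{\cost(d,\opt,L),\cost(d,\opt,R)\}\ge k\cdot d(L,R)/2$. Dividing the two bounds gives $\SC(d,S)/\SC(d,\opt)\le 2$ for every metric $d\rhd\sigma$, and hence $\dist(S)\le 2$.
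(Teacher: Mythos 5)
Your proof is correct and follows essentially the same route as the paper: show every $s\in S$ lies between the extreme voters $L$ and $R$, identify $\SC(S)$ as $\max\{\cost(S,L),\cost(S,R)\}$, and bound the numerator by $k\cdot d(L,R)$ and the denominator below by $k\cdot d(L,R)/2$ via the triangle inequality. The one stylistic improvement is your use of convexity of $x\mapsto\sum_{a\in S}|x-x_a|$ to locate the maximizing voter at an endpoint, which replaces the paper's more hands-on pairing argument for the same fact.
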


\begin{proof}
Fix an election $\instance$, a consistent metric $d$, and a committee $S$ such that $c_1, c_m \not \in S$. Let $x_{v_1} \le x_{v_2} \le \dots \le x_{v_n}$ denote the voters' order on the line, and let $S=\{s_1, \dots , s_k\}$, such that $x_{s_i} \le x_{s_{i + 1}}$. 

Recall that the cost of the committee $S$ for a voter $i$ is defined as $\SC(S, i) = \sum_{a \in S} d(i, a)$. 
Since all Pareto-dominated candidates are removed, there can be at most one candidate to the left of $v_1$ and at most one to the right of $v_n$. 
Therefore, if $c_1, c_m \not\in S$, for all $s_i \in S$, we have $x_{v_1} \le x_{s_i} \le x_{v_n} $.

Now, we prove that $\SC(S, \voters) = \max \{\SC(S, v_1), \SC(S, v_n)\}$. Suppose, to the contrary, that there exists a voter $v^*$ such that 
$$\SC(S, v^*) > \max\{\SC(S, v_1), \SC(S, v_n)\}.$$
If all candidates in $S$ are positioned on one side of $v^*$, then it follows that the cost of $S$ for either $v_1$ or $v_n$ would not be less than that of $v^*$. 
Therefore, $v^*$ lies between two consecutive candidates in $S$, say $s_j$ and $s_{j+1}$. Without loss of generality, assume that $j \le k/2$. Moreover, let $I = \{j + 1, j + 2, \dots, k - (j+1) \}$. We have
\begin{align*}
    \SC(S, v_1) - \SC(S, v^*) &=\sum_{i\in I} \left(d(v_1, s_i) - d(v^*, s_i)\right)+ \sum_{i\not \in I} \left(d(v_1, s_i) - d(v^*, s_i)\right) \\
    &\ge  \sum_{i\not \in I} \left(d(v_1, s_i) - d(v^*, s_i)\right) \\
    &= \sum_{i=1}^{j} \Bigl( \bigl(d(v_1, s_i)+d(v_1, s_{k-i})\bigr) - \bigl(d(v^*, s_i) + d(v^*, s_{k-i})\bigr) \Bigr) \\
    &= \sum_{i=1}^{j} \Bigl( \bigl(2d(v_1, s_i)+d(s_i, s_{k-i})\bigr) - \bigl(d(s_i, s_{k-i})\bigr) \Bigr)\\
    &\ge 0.
\end{align*}
Thus, $\SC(S, v^*) \le \SC(S, v_1)$, which is a contradiction. Therefore,
$$\SC(S, \voters) = \max \{\SC(S, v_1), \SC(S, v_n)\} \le \SC(S, v_1) + \SC(S, v_n).$$
Now let $\opt$ be the optimal committee of size $k$, i.e., the committee with the minimum egalitarian additive cost.
We have
\begin{align*}
    \dist(S, \instance) &= \dfrac{\SC(S, \voters)}{\min_{S' \in {\candidates \choose k}} \SC(S', \voters)}\\
    &\le \dfrac{\SC(S, v_1) + \SC(S, v_n)}{\max_{i\in \voters} \SC(\opt, i)}\\
    &\le  \dfrac{\SC(S, v_1) + \SC(S, v_n)}{\frac{1}{2} \bigl(\SC(\opt, v_1) + \SC(\opt, v_n)\bigr)}\\
    &= \dfrac{\sum_{s_i\in S}(d(s_i, v_1) + d(s_i, v_n))}{\frac{1}{2} \sum_{s_j\in \opt}(d(s_j, v_1) + d(s_j, v_n))}\\
    &\le\dfrac{k d(v_1, v_n)}{\frac{1}{2} k d(v_1, v_n)}\\
    &=2.
\end{align*}
The last inequality holds because of the triangle inequality in the denominator of the fraction and the fact that for all $s_i \in S$, $s_i$ lies between $v_1$ and $v_n$. This completes the proof.
\end{proof}


In addition to establishing a lower bound, \cite{cembrano2025metricdistortionpeerselection} introduced a $k$-winner voting rule \textit{$k$-Extremes}, which achieves an upper bound of approximately $3/2$ for peer selection on the line metric under the egalitarian additive cost objective. This voting rule returns the leftmost $\lfloor k/2 \rfloor$ and the rightmost $\lceil k/2 \rceil$ candidates as the output committee.

In Observation \ref{obs:egal-lb-k-extreme}, we show that the distortion of $k$-Extremes is at least $2$ in the general setting.

\begin{observation}\label{obs:egal-lb-k-extreme}
The distortion of the $k$-Extremes voting rule is at least $2$ on the line metric under the egalitarian additive cost objective.
\end{observation}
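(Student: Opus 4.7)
The plan is to exhibit, for every $k \geq 2$, a family of elections parameterized by a small $\epsilon > 0$ on which the $k$-Extremes rule attains egalitarian distortion tending to $2$. The intuition mirrors the tight example implicit in Theorem~\ref{thm:egal-ub}: since $k$-Extremes is forced to include the leftmost and rightmost non-Pareto-dominated alternatives, I arrange the instance so that this extreme committee is exactly twice as costly as an inner committee placed strictly between the extreme voters.

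I would first present the case $k = 2$ in full to convey the idea. Place $m = 4$ alternatives at $-1, -\epsilon, +\epsilon, +1$ and $n = 4$ voters at $-(1/2+\epsilon), -\epsilon, +\epsilon, +(1/2+\epsilon)$. The first step is to verify that no alternative is Pareto-dominated: the voter at $-(1/2+\epsilon)$ strictly prefers the alternative at $-1$ over the one at $-\epsilon$ (distances $1/2-\epsilon$ versus $1/2$), and symmetric arguments handle the remaining three alternatives. Consequently $k$-Extremes outputs $S_{\text{ext}} = \{-1, +1\}$. A direct calculation shows that each of the four voters incurs total distance exactly $2$ to $S_{\text{ext}}$, so $\SC(S_{\text{ext}}) = 2$. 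For the inner committee $S^* = \{-\epsilon, +\epsilon\}$, the two middle voters pay $2\epsilon$ while the two outer voters pay $1 + 2\epsilon$, giving $\SC(S^*) = 1 + 2\epsilon$. Taking $\epsilon \to 0^+$ makes the ratio $2/(1+2\epsilon)$ tend to $2$.

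For general $k \geq 2$, I would replicate each of the four alternative positions into a small cluster: $\lfloor k/2 \rfloor$ alternatives near $-1$, $\lfloor k/2 \rfloor$ near $-\epsilon$, $\lceil k/2 \rceil$ near $+\epsilon$, and $\lceil k/2 \rceil$ near $+1$, perturbed by $\delta \ll \epsilon$ so that all positions are distinct (hence $m = 2k$), while keeping the same four voters. The same non-dominance check applies, and the leftmost $\lfloor k/2 \rfloor$ together with the rightmost $\lceil k/2 \rceil$ non-Pareto-dominated alternatives are exactly the two outer clusters, so $S_{\text{ext}}$ is their union. A distance calculation gives $\SC(S_{\text{ext}}) \geq k$ (the bottleneck is an outer voter, yielding $k+1/2$ for odd $k$), while the inner-cluster committee achieves $\SC(S^*) = k/2 + O(\epsilon)$, so letting $\epsilon, \delta \to 0$ yields distortion at least $2$.

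The main obstacle I anticipate is verifying Pareto-non-dominance carefully both within and across clusters when only four voters are used to shield up to $2k$ alternatives. The key observation is that inside each cluster $v_1 = -(1/2+\epsilon)$ strictly prefers more-leftward alternatives while $v_2, v_3, v_4$ strictly prefer more-rightward ones, so no two within-cluster alternatives can Pareto-dominate one another; inter-cluster non-dominance is exactly what the offset $1/2+\epsilon$ is designed to guarantee (the comparison $1/2-\epsilon < 1/2$ must remain strict under the cluster perturbations, which forces $\delta \ll \epsilon$). The asymmetry between $\lfloor k/2 \rfloor$ and $\lceil k/2 \rceil$ for odd $k$ only strengthens the bound to $2 + 1/k$, so in every case the distortion is at least $2$.
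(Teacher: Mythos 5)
Your $k=2$ construction is correct and follows the paper's route (exhibit an instance where the extreme committee costs twice the inner one), but with a more elaborate instance: the paper uses three alternatives at $-1,0,+1$ and two voters at $0,+1$, getting distortion exactly $2$, while your symmetric four-alternative, four-voter instance gives $2/(1+2\epsilon)$, approaching $2$ from below, which still suffices since distortion is a supremum over instances. One advantage of your version is worth noting: in the paper's instance the alternative at $-1$ is Pareto-dominated by the one at $0$, so under the natural reading where $k$-Extremes acts only on non-dominated alternatives (the only ones whose positions an ordinal rule can recover via Lemma~\ref{lem:alternatives-order}) the rule could not even identify $-1$ as an extreme; in your instance every alternative is some voter's top choice, so none is dominated and the ordering is recoverable.

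However, your general-$k$ extension has a genuine gap at precisely the point you flag as the main obstacle. You claim that within each cluster, $v_1$ prefers more-leftward alternatives while $v_2,v_3,v_4$ prefer more-rightward ones, so that no two within-cluster alternatives Pareto-dominate. This holds for the inner clusters near $\pm\epsilon$ but fails for the two outer clusters: all four voters lie in $[-(1/2+\epsilon),\,+(1/2+\epsilon)]$, strictly to the \emph{right} of every alternative near $-1$, so every voter (including $v_1$) prefers the rightmost member of the $-1$ cluster, and that alternative Pareto-dominates every other member of its cluster; symmetrically for the cluster near $+1$. If $k$-Extremes sees only non-dominated alternatives, each outer cluster contributes a single admissible candidate, and for $k\ge 4$ the rule would have to fill most of its committee from the inner clusters, wrecking the ratio. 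The paper's own extension to larger $k$ glosses over the same issue (its stack at $-1$ is entirely dominated by the stack at $0$), and both proofs implicitly treat $k$-Extremes as a positional rule that selects the raw extremes regardless of dominance---under that convention your construction works, but the argument you give to dismiss the concern is incorrect. A clean fix is to interleave additional voters within the outer clusters so that each extreme alternative becomes some voter's top choice.
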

\begin{proof}
Let $k=2$. In Figure \ref{fig:egal-lb-k-extreme}, candidates $a$, $b$, and $c$ are positioned at $-1$, $0$, and $+1$, with voters $v_1$ and $v_2$ at $0$ and $+1$. The $k$-Extremes rule selects $\{a, c\}$, yielding $\SC(\{a, c\}, \voters) = 2$, while the optimal committee $\{b, c\}$ has $\SC(\{b, c\}, \voters) = 1$. Thus, distortion is at least $2$.

This instance extends to larger $k$ by placing at least $k/2$ candidates at $-1$, $0$, and $+1$. $k$-Extremes selects those at $-1$ and $+1$, yielding a social cost of $k$. Choosing candidates at $0$ and $+1$ instead reduces the social cost to $k/2$, proving a lower bound of $2$.
\end{proof}

\begin{figure}[t]
        \centering
        \begin{tikzpicture}[line cap=round,line join=round,>=triangle 45,x=1cm,y=1cm]
        
        \draw [line width=2pt] (3,0)-- (-3,0);
        
        \draw (3,1) node[anchor=center] {$v_2$};
        \draw (0,1) node[anchor=center] {$v_1$};

        \draw [fill=black] (-3,0) circle (2pt);
        \draw[color=black] (-3,-0.5) node {$-1, a$};
        \draw [fill=black] (3,0) circle (2pt);
        \draw[color=black] (3,-0.5) node {$+1, c$};
        \draw [fill=black] (0,0) circle (2pt);
        \draw[color=black] (0,-0.5) node {$0, b$};

        \draw[->]  (3, 0.7) to (3, 0.1);
        \draw[->]  (0, 0.7) to (0, 0.1);
        
        \end{tikzpicture}\caption{Instance $\instance$ used to prove a lower bound on the distortion of $k$-Extremes in the egalitarian additive costs. The candidates $a$, $b$, and $c$ are located on positions $-1, 0$ and $+1$. 
        Moreover, voters $v_1$ and $v_2$ are located on $0$ and $+1$, respectively.}
        \label{fig:egal-lb-k-extreme}
    \end{figure}

Finding a way to close the gap between $3/2 - 1/k$ and $2$ remains an interesting future direction.

\section{Conclusion}
\label{sec:conclusion}
In this paper, we proposed a new $k$-winner voting rule and established distortion bounds for every committee size $k$ in the line metric under the utilitarian additive cost objective.
 The proposed voting rule achieves optimal distortion for $2$-winner and $3$-winner elections, bounds of $1 + \sqrt{2}$ and $7/3$, respectively, and an upper bound close to $7/3$ for general $k$-winner elections. Moreover, we proved lower bounds on the distortion of any $k$-winner voting rule when $k \leq m/2$. For odd values of $k$, this lower bound is $2 + 1/k$, and for even values of $k$, it is $1 + \sqrt{1 + 2/k}$. We also analyzed the egalitarian additive cost objective and established an upper bound of $2$ in this setting.

We believe our bounds hold under the non-degenerate locations assumption, where voters lie at distinct locations. For both our lower bounds and upper bounds, we can distribute voters and candidates at distinct points near their original locations. There exists a small interval around point $x$, such that by relocating them within this interval the preferences, the optimal output, and the resulting bounds remain unchanged.

However, several open problems remain. The most significant challenge is extending our analysis to more general metric spaces. Understanding how these bounds behave in more complex settings is an important next step. Additionally, even in the line metric, tightening the bounds for general values of $k$ remains an open question.

\pagebreak
\bibliographystyle{plainnat}
\bibliography{References}
\pagebreak

\end{document}